\renewcommand{\@maketitle}{
\newpage
 \null
 \vskip 1.6em%
 \begin{center}%
  {\LARGE \@title \par}
 \end{center}%
 \vskip 1.4em%
 \begin{center}%
  {\@author \par}%
 \end{center}%
 \par} \makeatother
\renewcommand{\ker}{{\rm Ker}}
\newcommand{\ran}{{\rm Ran}}
\newcommand{\Ker}{{\rm Ker}}
\newcommand{\Ran}{{\rm Ran}}
\newcommand{\bra}{\langle} 
\newcommand{\ket}{\rangle}
\renewcommand\Re{{\rm Re\,}}
\renewcommand{\i}{{\rm i}}
\renewcommand{\d}{{\rm d}}
\newcounter{resultcounter}[section]
\theoremstyle{definition}
\newtheorem{thm}{Theorem}[section]
\newtheorem{theorem}[thm]{Theorem}
\newtheorem{lemma}[thm]{Lemma}
\newtheorem{proposition}[thm]{Proposition}
\newtheorem{definition}[thm]{Definition}
\theoremstyle{remark}
\newtheorem{remark}[thm]{Remark}
\theoremstyle{definition}
\newtheorem{corollary}[thm]{Corollary}
\numberwithin{equation}{section}
\newcommand{\beq}{\begin{equation}}
\newcommand{\eeq}{\end{equation}}
\newcommand{\beqa}{\begin{eqnarray}}
\newcommand{\eeqa}{\end{eqnarray}}
\def\one{{\mathchoice {\rm 1\mskip-4mu l} {\rm 1\mskip-4mu l} {\rm
      1\mskip-4.5mu l} {\rm 1\mskip-5mu l}}}
\def\cD{{\mathcal D}}  
  \def\cI{{\mathcal I}}
\def\cJ{{\mathcal J}} \def\cK{{\mathcal K}} 
\def\cP{{\mathcal P}}  \def\cR{{\mathcal R}}
\def\cS{{\mathcal S}} \def\cT{{\mathcal T}}
\newcommand{\R}{{\mathbb R}}
\newcommand{\N}{{\mathbb N}}
\newcommand{\C}{{\mathbb C}}
\newcommand{\Z}{{\mathbb Z}}
\def\proof{\noindent{\bf Proof.}\ \ }
        {\begin{compactitem}[#1]}{\end{compactitem}}
\newenvironment{innerlist3}[1][\enskip\textbullet]%
        {\begin{compactitem}[#1]}{\end{compactitem}}
\newcommand{\supp}{{\rm supp}}
\newcommand{\sd}{{\rm sd}}
\newcommand{\sdo}{{\rm sd\,}}
\renewcommand{\sp}{{\rm sp}}
\def\cf{C^{\infty}}
\def\Dor{\mathcal{D}'(\{0\})_{\leq r}}
\def\Do{\mathcal{D}'(\{0\})}
\def\du{\dot{u}}
\def\bv{\mathbf{v}}
\def\ord{{\rm ord}}
\def\Toff{T_{\rm off}}
\def\Ton{T_{\rm on}}
\newcommand{\dal}{\Box}
\newenvironment{myindentpar}[1]%
{\begin{list}{}%
         {\setlength{\leftmargin}{#1}}%
         \item[]%
}
{\end{list}}
\begin{document}

\title{On-shell extension of distributions}
\author{\textsc{Dorothea Bahns$^{1}$,\  Micha\l\ Wrochna$^{2}$}}

\date{}
\maketitle
\begin{center}\small
$^{1}$Courant Research Centre ``Higher Order Structures in Mathematics''\\
$^{2}$RTG ``Mathematical Structures in Modern Quantum Physics''\\
\vspace{0.1cm} $^{1,2}$Universit\"at G\"ottingen, Bunsenstr. 3-5, D - 37073 G\"ottingen, Germany\\
\vspace{0.2cm} e-mail: $^{1}$\texttt{bahns@uni-math.gwdg.de}, \ \ $^{2}$\texttt{wrochna@uni-math.gwdg.de} \vspace{0.5cm} \end{center}

\begin{abstract}We consider distributions on $\R^n\!\setminus\!\{0\}$ which satisfy a given set of partial differential equations and provide criteria for the existence of extensions to $\R^n$ that satisfy the same set of equations on $\R^n$. 

We use the results to construct distributions satisfying specific renormalisation conditions in the Epstein and Glaser approach to perturbative quantum field theory. Contrary to other approaches, we provide a unified apporach to treat Lorentz covariance, invariance under global gauge group and almost homogeneity, as well as discrete symmetries. We show that all such symmetries can be recovered by applying a linear map defined for all degrees of divergence.

Using similar techniques, we find a relation between on-shell and off-shell time-ordered products involving higher derivatives of the fields.
\end{abstract}

\vspace{0.5cm}

\setlength{\parindent}{0.6cm}

\section{Introduction}

\hspace{1cm}In the program of Epstein and Glaser \cite{epsteinglaser}, the renormalisation problem in  quantum field theory is reformulated as a problem of extending distributions defined  on $\R^n\setminus\{0\}=:\dot\R^n$ to distributions on $\R^n$. By construction, any two extensions can differ by a distribution supported in 0, and one way to constrain this ambiguity is to require that the extension should have the same scaling degree as the original distribution. In quantum field theory, the ambiguity is further constrained by imposing physically motivated renormalisation conditions. Such conditions include the requirement that if $u\in\cD'(\dot\R^n)$ respects a global symmetry, e.g. it is invariant under the Lorentz group or a global gauge group, then the same should hold for its extension to $\R^n$. Another such condition, which turned out to be essential  in renormalisation on both flat and on curved space-time  \cite{hollwald}, 
 is the requirement  that if $u$ is homogeneous, its extension should be homogeneous as well, or at least it should behave as much like  a homogeneous distribution as possible (a property which can be properly formulated in terms of almost homogeneous distributions).

A problem which at first sight seems to be unrelated to such renormalisation conditions occurs in the construction of on-shell time ordered products involving higher derivatives of quantum fields. Roughly speaking, one wishes to extend to $\R^n$ an expression in $\cD'(\dot\R^n)$ involving derivatives of the Feynman propagator and Heaviside theta functions such that the extension satisfies the (free) equation of motion. The possibility of finding such an extension can be rephrased using the relation between on-shell time-ordered products (ordinarily used in quantum field theory) and off-shell products, the latter of which have proved to be better suited for a theoretical study of Epstein-Glaser renormalisation \cite{DF03,DF04}. 

The main idea in our approach is that all these problems can be formulated and solved in a unified framework, by restating them in terms of the existence of extensions 
which solve a set of (differential) equations.
More precisely, we state the following extension problem:

\begin{myindentpar}{0.4cm}
Let $\{Q^i\}_{i=1}^{k}$ be a family of differential operators on $\R^n$ with smooth coefficients, and let $u$ be a distribution in $\cD'(\dot\R^n)$ that satisfies
\[
Q^i u =0 \quad \mbox{ on \ } \dot\R^n \quad (i=1,\dots,k).
\]
Find  $\dot u\in\cD'(\R^n)$ such that $\dot u = u$ on $\dot\R^n$ and $Q^i\dot u=0$ on $\R^n$ ($i=1,\dots,k$). If such extensions $\dot u$ exist, we call them on-shell extensions (w.r.t. $\{Q^i\}_{i=1}^{k}$).
\end{myindentpar}

Indeed, invariance of a distribution under the action of a connected Lie group is equivalent to it being 
a solution of its infinitesimal generators. (Almost) homogeneity is described using (powers of) the operator $\sum_i x_i\partial_i -a$. In the construction of on-shell time-ordered products, the differential operator of interest is the Klein-Gordon operator $(\Box+m^2)$. To include discrete symmetries, we will consider a more general class of  operators later. 

On the mathematical side, the `on-shell extension' problem we consider is closely related to the so-called Bochner's extension problem, an issue which we explain in the text.

\medskip 
One advantage of our reformulation is the following. The various constructions and prescriptions to implement renormalisation conditions proposed  so far in e.g. \cite{scharf,prange,improved1,improved2,DF04,hollwald}, see also \cite{pct}, 
are each limited to a particular type of symmetry. Therefore, the simultaneous implementation of a number of different conditions requires cumbersome proofs of compatibility. Our framework on  the other hand, allows for a compact formulation of e.g. sufficient conditions on the existence of extensions subject to different renormalisation conditions, such as Lorentz invariance considered together with almost homogeneity and (eventually) parity. Moreover, it exhibits a new feature of Epstein-Glaser renormalisation: a renormalisation condition corresponding to a (differential) operator $Q$ can be imposed by applying a \emph{linear} map to a generic extension (which is a solution of $Q$ on $\dot\R^n$). This statement extends to the case of several renormalisation conditions.

Concerning the relation between on-shell and off-shell time-ordered products, we would like to point out that it was given in \cite{DF03,DF04} in terms of a recurrence relation for which an explicit solution was found in \cite{onshell}. In our framework we find a more compact formula, which contrary to that given in \cite{onshell} does not contain unnatural combinatorial factors. Instead, the coefficients which appear in our formula are simply eigenvalues of certain finite-dimensional operators directly related to the Klein-Gordon operator.  

\medskip
The paper is organized as follows. In Section \ref{sec:preliminary} we recall the definition of the scaling degree and degree of divergence, and review basic facts on distributions and extensions. Section \ref{sec:main} contains the main ideas and results. First, in subsection \ref{sec:essord}, we introduce the notion of operators of essential order $m$ on $\cD'(\R^n)$. Such operators  generalize differential operators and will enable us   to treat discrete symmetries. In subsection \ref{sec:spaces} we equip the finite dimensional spaces of distributions of given maximal order supported at the origin with a scalar product. The restrictions of operators $Q:\cD'(\R^n)\to\cD'(\R^n)$ to these spaces play an important role in subsection 3.2, especially  for Theorem \ref{thm:core}. This theorem provides a  solution to the extension problem with respect to an operator $Q$ of arbitrary  essential order in the sense that it lists different statements which are equivalent to the existence of on-shell extensions, and provides a candidate for such an extension. This candidate can be calculated explicitly and only requires one to find the eigenvalues of a finite dimensional matrix. Special cases, examples and generalizations are then discussed. Of particular interest is Theorem~\ref{thm:criterion} which states sufficient, and easy-to-check conditions that ensure the existence of on-shell extensions w.r.t. an operator of essential order 0.  We then explain how the extension problem with respect to a finite number of operators is solved. Section \ref{sec:products} is devoted to the construction of on-shell time-ordered products involving higher derivatives of the fields. We clarify how the relation between the on-shell and the off-shell formalism can be formulated and understood in our framework. Only this last section requires knowledge of quantum field theory. An outlook is presented in Section \ref{sec:outlook}.

\section{Preliminaries}\label{sec:preliminary}

\subsection{Notations}

Throughout the paper, we use the notation $\dot\R^n=\R^n\setminus\{0\}$.

For a continuous operator $Q^{\rm t}:\cD(\R^n)\to\cD(\R^n)$, we denote its transpose by $Q: \cD'(\R^n)\to\cD'(\R^n)$, i.e. $\bra Qu,\varphi\ket=\bra u,Q^{\rm t}\varphi\ket$, $u\in\cD'(\R^n)$, $\varphi\in\cD(\R^n)$.

\subsection{Scaling degree and degree of divergence}

We start by recalling the definition of Steinman's scaling degree of a distribution \cite{steinmann}, a notion which has proved to be very useful in renormalised perturbation theory. Most notably, it allowed (after a suitable generalization) to extend the Epstein-Glaser approach to curved space-times \cite{BF00}. It  does not seem to have entered the mathematical literature as such, although similar estimates have been used in e.g. \cite{Estrada:Reg}, see also \cite{wavelets} and references therein. Note also that Steinman's scaling degree is a degenerate case of Weinstein's degree of a distribution \cite{weinstein}. 

Consider the natural action of the dilation group $\R_{>0}$ on $\cD(\R^n)$ and its dual on $\cD'(\R^n)$, i.e. for a distribution $u\in\cD'(\R^n)$ and $\lambda >0$ set  
\[
\bra u_{\lambda}, \varphi \ket := \lambda^{-n} \bra u, \varphi(\lambda^{-1}\cdot) \ket, \quad \varphi\in\cD(\R^n).
\]
\begin{definition} \label{def:sdegree}The \emph{scaling degree} of $u\in\cD'(\R^n)$, denoted $\sd\,u$, is the infimum over all $\omega\in\R$ s.t. $\lim_{\lambda\searrow0}\lambda^{\omega}u_\lambda=0$ in $\cD'(\R^n)$. The \emph{degree of divergence} of $u$ is $\deg u := \sd\, u -n$.
\end{definition}
The degree of divergence of a distribution $u\in\cD'(\dot\R^n)$, denoted also $\deg u$, is defined analogously (one simply replaces $\cD(\R^n)$ by $\cD(\dot\R^n)$ and $\cD'(\R^n)$ by $\cD'(\dot\R^n)$ in the above definition). The difference is that it is possible here that the limit $\lim_{\lambda\searrow0}\lambda^{\omega}u_\lambda$ does not exist for any $\omega\in\R$. In this case we write $\deg\, u=\infty$. 

We will use the degree of divergence rather than the scaling degree, as it is more convenient in our framework. As a basic example, observe that the derivatives of the $\delta$-distribution  $\delta^{(\alpha)}$ on $\R^n$ have degree of divergence $|\alpha|$. A  function in $\cf(\R^n)$, considered as an element of $\cD'(\R^n)$, has degree of divergence at most $-n$. A distribution which is homogeneous of degree $a\in\C$ on $\R^n$ (resp. $\dot\R^n$) has degree of divergence $-\Re a-n$ in $\cD'(\R^n)$ (resp. $\cD'(\dot\R^n))$.

Let us briefly recall the basic properties of the degree of divergence, which were proved\footnote{Strictly speaking, the third claim is considered there only for $k=0$, but the case $k\geq 1$ follows immediately by noting that under the assumptions, such $f$ equals $x^\beta g$ for some $g\in\cf(\R^n)$, $|\beta|=k$, and then using \ref{itsd2}.} in \cite[Lemma~5.1]{BF00}.

\begin{lemma}\label{lemmasd}Let $u\in\cD'(\R^n)$ and assume $\deg u <\infty$. Then:
\begin{enumerate}\setlength{\itemsep}{-1.5pt}
\item\label{itsd1} For $\alpha\in\N^n$, \ $\deg(\partial^{\alpha}u)\leq \deg u + |\alpha|$.
\item\label{itsd2} For $\alpha\in\N^n$, \ $\deg(x^{\alpha}u)\leq \deg u - |\alpha|$.
\item\label{itsd3} Let $f\in\cf(\R^n)$ and assume $f^{(\alpha)}(0)=0$ for $|\alpha|\leq k-1$ , then \ $\deg (f u) \leq \deg u-k$.
\item\label{itsd4} Let $v\in\cD'(\R^k)$ then  $\sd (u\otimes v)\leq \sdo u + \sdo v$.
\end{enumerate}
\end{lemma}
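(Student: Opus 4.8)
The plan is to reduce every assertion to the defining property of the scaling degree by computing, for each operation $T$ appearing in the statement, how the dilated distribution $(Tu)_\lambda$ is expressed through $u_\lambda$, and then invoking the continuity of $T$ on $\cD'$. A direct change of variables $y=\lambda^{-1}x$ inside the pairing gives the four scaling identities
\[
(\partial^\alpha u)_\lambda=\lambda^{-|\alpha|}\partial^\alpha u_\lambda,\qquad (x^\alpha u)_\lambda=\lambda^{|\alpha|}x^\alpha u_\lambda,
\]
\[
(fu)_\lambda=\tilde f_\lambda\, u_\lambda\ \ (\tilde f_\lambda(x):=f(\lambda x)),\qquad (u\otimes v)_\lambda=u_\lambda\otimes v_\lambda .
\]
Each of these is obtained simply by tracking the power of $\lambda$ produced when the dilation is moved onto the test function.

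For \ref{itsd1} and \ref{itsd2} the conclusion is then immediate. If $\omega>\sd u$ then $\lambda^\omega u_\lambda\to0$, and applying the continuous operator $\partial^\alpha$, resp.\ multiplication by the polynomial $x^\alpha$, yields $\lambda^\omega\partial^\alpha u_\lambda\to0$, resp.\ $\lambda^\omega x^\alpha u_\lambda\to0$. Substituting the scaling identities gives $\lambda^{\omega+|\alpha|}(\partial^\alpha u)_\lambda\to0$ and $\lambda^{\omega-|\alpha|}(x^\alpha u)_\lambda\to0$, whence $\sd(\partial^\alpha u)\le\sd u+|\alpha|$ and $\sd(x^\alpha u)\le\sd u-|\alpha|$. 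Subtracting $n$ produces the stated bounds on $\deg$.

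For \ref{itsd3} I would first settle the case $k=0$. From $(fu)_\lambda=\tilde f_\lambda u_\lambda$ and the elementary fact that $\tilde f_\lambda\to f(0)$ together with all its derivatives, uniformly on compacta as $\lambda\searrow0$ (since $\partial^\beta\tilde f_\lambda=\lambda^{|\beta|}(\partial^\beta f)(\lambda\,\cdot)$), one gets, for $\omega>\sd u$, that $\bra\lambda^\omega(fu)_\lambda,\varphi\ket=\bra\lambda^\omega u_\lambda,\tilde f_\lambda\varphi\ket\to0$, so $\sd(fu)\le\sd u$. The general case then follows from a Hadamard/Taylor expansion $f=\sum_{|\beta|=k}x^\beta g_\beta$ with $g_\beta\in\cf(\R^n)$: the $k=0$ case applied to $g_\beta u$ followed by \ref{itsd2} applied to $x^\beta(g_\beta u)$ bounds the degree of each summand by $\deg u-k$, and the degree of a finite sum is at most the largest summand degree.

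Finally \ref{itsd4} follows from $(u\otimes v)_\lambda=u_\lambda\otimes v_\lambda$: assuming both scaling degrees finite (otherwise the bound is trivial) and choosing $\omega_1>\sd u$, $\omega_2>\sd v$, we have $\lambda^{\omega_1+\omega_2}(u\otimes v)_\lambda=(\lambda^{\omega_1}u_\lambda)\otimes(\lambda^{\omega_2}v_\lambda)$ with both factors tending to $0$. The main obstacle is precisely this last continuity step, since a general $\varphi\in\cD(\R^{n+k})$ is not a product: I would write $\bra a_\lambda\otimes b_\lambda,\varphi\ket=\bra a_\lambda,\psi_\lambda\ket$ with $\psi_\lambda(x)=\bra b_\lambda,\varphi(x,\cdot)\ket$, and show $\psi_\lambda\to0$ in $\cD(\R^n)$. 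The point is that $\{\partial_x^\alpha\varphi(x,\cdot):x\in K\}$ is a bounded subset of $\cD(\R^k)$, and $b_\lambda\to0$ converges to $0$ uniformly on such sets by the Banach--Steinhaus theorem, which gives uniform convergence of all $x$-derivatives of $\psi_\lambda$; applying the same equicontinuity argument to $a_\lambda$ then yields $\bra a_\lambda,\psi_\lambda\ket\to0$. This uniform-boundedness input is exactly what is also needed to justify the joint continuity used in the base case of \ref{itsd3}.
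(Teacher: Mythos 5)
Your proof is correct, but it is worth noting that the paper does not actually prove Lemma \ref{lemmasd}: it cites \cite[Lemma~5.1]{BF00} for all four items, and the only argument it supplies is the footnote reducing item \ref{itsd3} for $k\geq 1$ to the $k=0$ case via a factorization of $f$ and item \ref{itsd2}. That reduction is exactly the step your Hadamard expansion performs, and your formulation $f=\sum_{|\beta|=k}x^{\beta}g_{\beta}$ with $g_\beta\in\cf(\R^n)$ is in fact the correct one for $n\geq 2$ --- the footnote's claim that such an $f$ equals a single monomial times a smooth function, $f=x^{\beta}g$ with $|\beta|=k$, is literally true only in one dimension; the argument is unaffected, since one applies the $k=0$ case and item \ref{itsd2} to each summand and uses that the degree of a finite sum is bounded by the largest summand degree, as you do. The rest of your write-up is a sound self-contained reconstruction of the cited facts: the four scaling identities are right for the convention $\bra u_\lambda,\varphi\ket=\lambda^{-n}\bra u,\varphi(\lambda^{-1}\cdot)\ket$, items \ref{itsd1} and \ref{itsd2} then follow from continuity of $\partial^\alpha$ and of multiplication by $x^\alpha$ on $\cD'(\R^n)$, and you correctly identify the one genuinely nontrivial analytic input --- the Banach--Steinhaus/equicontinuity argument needed to pair the convergent family $\lambda^{\omega}u_\lambda$ against the moving test functions $\tilde f_\lambda\varphi$ in item \ref{itsd3} and $\psi_\lambda(x)=\bra v_\lambda,\varphi(x,\cdot)\ket$ in item \ref{itsd4} (reducing the continuum limit $\lambda\searrow 0$ to sequences, as is implicit in your argument, makes the uniform boundedness principle directly applicable). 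One cosmetic remark: for $v\in\cD'(\R^k)$ the scaling degree is automatically finite, since $v$ has finite order $N$ near the origin and hence $\sdo v\leq k+N$, so your caveat ``otherwise the bound is trivial'' in item \ref{itsd4} is unnecessary, though harmless. In short: where the paper delegates to the literature, you give the full proof; where the paper does argue (the footnote), your route coincides with it and quietly repairs its small imprecision.
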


\subsection{Extension of distributions}

Let us recall the basic ingredients of the construction of extensions of distributions. Essentially, we follow \cite{BF00}, but for later purposes, we make systematic use of the following spaces of distributions.
Denote by $\Do$ the space of distributions supported at $\{0\}$. 
For $r\geq 0$, let $\Dor$ be the subspace of $\Do$
given by those of maximal degree $r$,
\[
\Dor={\rm span}\,\left\{ v\in\Do : \ \deg v\leq r \right\}
={\rm span}\,\left\{ \delta^{(\alpha)}\in\cD'(\R^n) : \ |\alpha|\leq r \right\}.
\]
On the other hand, consider the space of all test functions vanishing up to order $r$ at $x=0$:
\beq
\cD_{r}(\R^n):=\{ \varphi\in\cD(\R^n) : \  (\partial_x^{\alpha}\varphi)(0)=0 \ \ \forall \alpha\in\N_0^{n}, |\alpha|\leq r \}.
\eeq
It will be convenient to generalize this definition in the following way. Let $\cK$ be a finite dimensional subspace of $\cD'(\{0\})$. Set
\[
\cD_{\cK}(\R^n):=\{ \varphi\in\cD(\R^n) : \ \bra v,\varphi\ket=0 \ \ \forall v\in\cK \}.
\]
Clearly, $\cD_r(\R^n)$ equals $\cD_{\cK}(\R^n)$ with $\cK=\Dor$. Observe that the scaling degree of a distribution in $\cD'_{\cK}(\R^n)$ can be defined in an analogous way to the scaling degree in $\cD'(\R^n)$. We now restate Theorem 5.2 from \cite{BF00} as follows:

\begin{proposition} \label{lemmarenorm} Let $u\in\cD'(\dot\R^n)$ have degree of divergence $r:=\deg u<\infty$. Then it admits a unique extension $\tilde u \in \cD'_{r}(\R^n)$ with the same degree of divergence $r$, given by
\beq\label{def:cDr}
\bra \tilde u,\varphi \ket := \lim_{\rho\to\infty}\bra u , (1-\vartheta_{\rho}) \varphi \ket, \ \ \ \varphi\in\cD_{r}(\R^n)
\eeq
where $\vartheta_{\rho}(x):=\vartheta(2^{\rho} x)$ and $\vartheta$ is an arbitrary function in $\cD(\R^n)$ such that $\vartheta=1$ in a neighbourhood of the origin.
\end{proposition}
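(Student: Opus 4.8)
The plan is to verify, in turn, that the limit defining $\tilde u$ exists, that $\tilde u$ extends $u$ and has degree of divergence $r$, and finally that it is the unique such extension and is independent of the choice of $\vartheta$. Fix $\varphi\in\cD_r(\R^n)$. Since $(1-\vartheta_\rho)\varphi$ vanishes in a neighbourhood of the origin, each pairing $\bra u,(1-\vartheta_\rho)\varphi\ket$ is well defined, and to control the limit I would study the telescoping differences $\bra u,(\vartheta_\rho-\vartheta_{\rho+1})\varphi\ket$. Setting $g:=\vartheta-\vartheta(2\cdot)\in\cD(\dot\R^n)$, which is supported in a fixed annulus around $0$, one has $\vartheta_\rho-\vartheta_{\rho+1}=g(2^\rho\cdot)$, and with $\lambda:=2^{-\rho}$ a change of variables gives $\bra u,(\vartheta_\rho-\vartheta_{\rho+1})\varphi\ket=\lambda^n\bra u_\lambda,\,g\cdot\varphi(\lambda\,\cdot)\ket$.

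The technical heart of the argument is the next estimate, which exploits that $\varphi\in\cD_r(\R^n)$ vanishes to order $r$ at the origin. By Taylor's theorem one writes $\varphi(\lambda y)=\lambda^{r+1}h_\lambda(y)$, where $\{g\,h_\lambda\}_{\lambda\in(0,1]}$ is a bounded subset of $\cD(\dot\R^n)$ with fixed compact support; hence the telescoping difference equals $\lambda^{\,n+r+1}\bra u_\lambda,\,g\,h_\lambda\ket$. Since $\deg u=r$ is equivalent to $\sd u=r+n$, the net $\lambda^{\,r+n+\half}u_\lambda$ converges to $0$ in $\cD'(\dot\R^n)$ and is therefore equicontinuous (Banach--Steinhaus), so it is bounded uniformly on the bounded family $\{g\,h_\lambda\}$. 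This yields $|\bra u,(\vartheta_\rho-\vartheta_{\rho+1})\varphi\ket|\leq M\,2^{-\rho/2}$ for some constant $M$, whence the telescoping series converges geometrically and $\lim_{\rho\to\infty}\bra u,(1-\vartheta_\rho)\varphi\ket$ exists. The same seminorm bounds, now read as estimates in $\varphi$, show that the limit is continuous on $\cD_r(\R^n)$, i.e. $\tilde u\in\cD'_r(\R^n)$.

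That $\tilde u$ restricts to $u$ on $\dot\R^n$ is immediate: if $\varphi\in\cD(\dot\R^n)$ then $\vartheta_\rho\varphi=0$ for $\rho$ large, so $\bra\tilde u,\varphi\ket=\bra u,\varphi\ket$. For the degree of divergence, the bound $\sd\tilde u\geq\sd u=r+n$ holds because $\cD(\dot\R^n)\subset\cD_r(\R^n)$ and restricting to a smaller space of test functions cannot increase the scaling degree. The reverse inequality $\sd\tilde u\leq r+n$ I would obtain by applying the estimate of the previous paragraph to the dilated test functions $\varphi(\lambda^{-1}\cdot)\in\cD_r(\R^n)$ and tracking the powers of $\lambda$ carefully, so that the geometric bounds assemble into $\lim_{\lambda\searrow0}\lambda^{\,r+n+\epsilon}\bra\tilde u_\lambda,\varphi\ket=0$ for every $\epsilon>0$; together with the $\geq$ direction this gives $\deg\tilde u=r$.

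For uniqueness, suppose $w\in\cD'_r(\R^n)$ satisfies $w=0$ on $\dot\R^n$ and $\deg w\leq r$ (the difference of two candidate extensions has this property, as the scaling degree is subadditive); it suffices to show $w=0$ on $\cD_r(\R^n)$. Extending $w$ to some $W\in\cD'(\R^n)$, vanishing on $\cD(\dot\R^n)$ forces $\supp W\subseteq\{0\}$, so $W=\sum_{|\alpha|\leq N}c_\alpha\delta^{(\alpha)}$. The terms with $|\alpha|\leq r$ lie in $\Dor$ and hence act trivially on $\cD_r(\R^n)$ by the very definition of the latter. If some $c_\alpha\neq0$ with $|\alpha|>r$ remained, set $m=\max\{|\alpha|:c_\alpha\neq0\}>r$; since the derivatives $\partial^\alpha\varphi(0)$ with $|\alpha|=m$ are unconstrained on $\cD_r(\R^n)$, the top-order part is a nonzero functional, and a direct computation of $\bra w_\lambda,\varphi\ket$ extracts the dominant power $\lambda^{-n-m}$, giving $\sd w=n+m>n+r$ and contradicting $\deg w\leq r$. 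Hence $w=0$ on $\cD_r(\R^n)$, which proves uniqueness; independence of the choice of $\vartheta$ then follows from uniqueness (or directly, since $\vartheta-\vartheta'$ is supported away from $0$ and the estimate above sends $\bra u,(\vartheta_\rho-\vartheta'_\rho)\varphi\ket\to0$). I expect the scaling estimate underlying both the convergence and the degree computation to be the main obstacle, the extension property and uniqueness being then comparatively routine.
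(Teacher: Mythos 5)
Your proof is correct and follows essentially the same route as the paper's source: the paper itself offers no proof of this proposition, merely restating Theorem 5.2 of \cite{BF00}, and your dyadic telescoping with $g=\vartheta-\vartheta(2\,\cdot)$, the Taylor factorization $\varphi(\lambda y)=\lambda^{r+1}h_\lambda(y)$ on a fixed annulus, and the Banach--Steinhaus equicontinuity of the family $\lambda^{r+n+\half}u_\lambda$ is precisely the Brunetti--Fredenhagen argument, including the uniqueness step via the structure of distributions supported at $\{0\}$ and the scaling computation $\sd\, w=n+m>n+r$. The one step you leave schematic --- the bound $\sd\,\tilde u\leq r+n$ --- does assemble as you indicate, since for small $\lambda$ the annular terms with $2^{-\rho}$ large compared to $\lambda$ vanish outright (the support of $\varphi(\lambda^{-1}\cdot)$ shrinks inside the region where $\vartheta_\rho\equiv 1$), and the surviving geometric series produces exactly the power $\lambda^{-r-\epsilon'}$ needed to conclude.
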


It now remains to find elements of  $\cD'(\R^n)$ which correspond to the extension $\tilde u \in \cD'_r(\R^n)$. Following the ideas of \cite{BF00}, we do so by considering projections\footnote{This means that $W^{\rm t}:\cD(\R^n)\mapsto \cD_r(\R^n)$ is continuous and $(W^{\rm t})^2=W^{\rm t}$.} $W^{\rm t}:\cD(\R^n)\to\cD_{r}(\R^n)$, and applying their transpose $W:\cD'_r(\R^n)\to\cD'(\R^n)$ to $\tilde u\in\cD'_r(\R^n)$.

To this end, let us first state two lemmas which are slight generalizations of results found in \cite{DF04} where they were stated for $\cK=\Dor$.

\begin{lemma} Let $\cK$ be a finite dimensional subspace of $\Do$, let $\{v_i\}_{i\in \cI}$ be a basis of $\cK$, and assume $\{\psi_{i}\}_{i\in \cI}$ is a family $\psi_i\in\cD(\R^n)$ s.t. $\bra v_i , \psi_j \ket=\delta_{ij}$. Then
\beq\label{defWt}
W^{\rm t}\varphi=\varphi-\sum_{i\in\cI}\bra v_i, \varphi\ket \psi_i
\eeq
defines a projection $W^{\rm t}:\cD(\R^n)\to\cD_{\cK}(\R^n)$. Conversely, if $W^{\rm t}:\cD(\R^n)\to\cD_{\cK}(\R^n)$  is a projection, there is a family $\{\psi_{i}\}_{i\in \cI}$ with the above properties.
\end{lemma}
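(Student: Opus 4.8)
The plan is to handle the two implications separately; the first is a direct verification and the second rests on a dual-basis argument. For the forward implication I would first note that $W^{\rm t}$ is continuous, being the sum of the identity and finitely many maps $\varphi\mapsto\bra v_i,\varphi\ket\psi_i$, each continuous because $v_i\in\cD'(\R^n)$ acts continuously and $\psi_i$ is a fixed test function. Next I would check that $W^{\rm t}$ actually lands in $\cD_\cK(\R^n)$: since $\{v_i\}$ spans $\cK$ it suffices to pair $W^{\rm t}\varphi$ against each $v_j$, and using the biorthogonality $\bra v_i,\psi_j\ket=\delta_{ij}$ one gets $\bra v_j,W^{\rm t}\varphi\ket=\bra v_j,\varphi\ket-\sum_i\bra v_i,\varphi\ket\delta_{ji}=0$. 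Idempotency is then immediate: once $W^{\rm t}\varphi\in\cD_\cK(\R^n)$, all coefficients $\bra v_i,W^{\rm t}\varphi\ket$ vanish, so $W^{\rm t}(W^{\rm t}\varphi)=W^{\rm t}\varphi$. Together these show $W^{\rm t}$ is a projection; note in passing that $W^{\rm t}$ restricts to the identity on $\cD_\cK(\R^n)$, so its range is exactly $\cD_\cK(\R^n)$.

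For the converse I would read ``projection onto $\cD_\cK(\R^n)$'' as: $W^{\rm t}$ is continuous, idempotent, and has range exactly $\cD_\cK(\R^n)$, equivalently $W^{\rm t}$ acts as the identity on $\cD_\cK(\R^n)$. The first step is to produce a dual system: as $\{v_i\}_{i\in\cI}$ is a basis of $\cK$, the $v_i$ are linearly independent functionals, so the evaluation map $\varphi\mapsto(\bra v_i,\varphi\ket)_{i\in\cI}$ is surjective onto $\C^{|\cI|}$; choosing preimages of the standard basis yields $\varphi^{(i)}\in\cD(\R^n)$ with $\bra v_j,\varphi^{(i)}\ket=\delta_{ij}$. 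I then set $\psi_i:=\varphi^{(i)}-W^{\rm t}\varphi^{(i)}\in\cD(\R^n)$. Because $W^{\rm t}\varphi^{(i)}\in\cD_\cK(\R^n)$, pairing with $v_j$ gives $\bra v_j,\psi_i\ket=\bra v_j,\varphi^{(i)}\ket=\delta_{ij}$, the required biorthogonality. To recover \eqref{defWt}, I would observe that for any $\varphi$ the combination $\varphi-\sum_i\bra v_i,\varphi\ket\varphi^{(i)}$ lies in $\cD_\cK(\R^n)$, so $W^{\rm t}$ fixes it; expanding $W^{\rm t}$ by linearity and rearranging then yields exactly $W^{\rm t}\varphi=\varphi-\sum_i\bra v_i,\varphi\ket\psi_i$.

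The routine parts are the continuity and the two bookkeeping computations. The step that genuinely needs care, and where I expect the main (if modest) obstacle to lie, is the converse: one must (i) invoke the purely algebraic fact that linearly independent functionals admit a dual family $\{\varphi^{(i)}\}$, which exhibits $\cD_\cK(\R^n)$ as a subspace of codimension $|\cI|$ and lets the construction begin, and (ii) use the precise meaning of projection, namely that $W^{\rm t}$ is the identity on its range $\cD_\cK(\R^n)$. Without (ii) the converse fails, since the zero map is continuous, idempotent, and maps into $\cD_\cK(\R^n)$ yet is not of the stated form; it is therefore essential that ``projection onto $\cD_\cK(\R^n)$'' be understood to include surjectivity onto that subspace.
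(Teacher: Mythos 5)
Your proof is correct. Note that the paper itself offers no proof of this lemma: it is stated as a ``slight generalization'' of a result of [DF04], where it appears for $\cK=\Dor$, so there is no in-paper argument to compare against. Your argument is the natural one: the forward direction is the routine verification (continuity, $\bra v_j,W^{\rm t}\varphi\ket=0$ by biorthogonality, idempotency because $W^{\rm t}$ fixes $\cD_{\cK}(\R^n)$), and for the converse you use the standard dual-family fact for the linearly independent functionals $v_i$ to get $\varphi^{(i)}$ with $\bra v_j,\varphi^{(i)}\ket=\delta_{ij}$, set $\psi_i:=\varphi^{(i)}-W^{\rm t}\varphi^{(i)}$, and recover \eqref{defWt} by applying $W^{\rm t}$ to $\varphi-\sum_i\bra v_i,\varphi\ket\varphi^{(i)}\in\cD_{\cK}(\R^n)$; all steps check out. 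Your closing caveat is a genuine catch rather than pedantry: the paper's footnote defines a projection only as continuous with $(W^{\rm t})^2=W^{\rm t}$, and under that literal reading the converse is false (the zero map), so one must read ``projection onto $\cD_{\cK}(\R^n)$'' as having range exactly $\cD_{\cK}(\R^n)$, i.e.\ acting as the identity there. This stronger reading is clearly the intended one, since the subsequent Lemma \ref{lem:Wpreserves} (that $\bra Wu,\varphi\ket=\bra u,\varphi\ket$ for $\varphi\in\cD_{\cK}(\R^n)$) also fails for the zero map and tacitly uses $W^{\rm t}\varphi=\varphi$ on $\cD_{\cK}(\R^n)$.
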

\begin{lemma}\label{lem:Wpreserves}Let $u\in\cD'_{\cK}(\R^n)$ and let $W^{\rm t}:\cD(\R^n)\to \cD_{\cK}(\R^n)$ be a projection. Then $\bra Wu,\varphi\ket=\bra u,\varphi\ket$ for all $\varphi\in\cD_{\cK}(\R^n)$ and $\deg Wu=\deg u$.
\end{lemma}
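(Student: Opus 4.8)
The statement splits into two essentially independent claims, and I would treat them in turn. For the first, $\bra Wu,\varphi\ket=\bra u,\varphi\ket$ on $\cD_\cK(\R^n)$, I would just unfold the definition of the transpose, $\bra Wu,\varphi\ket=\bra u,W^{\rm t}\varphi\ket$, and observe that $W^{\rm t}$ restricts to the identity on its range $\cD_\cK(\R^n)$. This is immediate from the explicit form \eqref{defWt}: for $\varphi\in\cD_\cK(\R^n)$ one has $\bra v_i,\varphi\ket=0$ for every basis element $v_i$ of $\cK$, so $W^{\rm t}\varphi=\varphi$. Substituting gives $\bra Wu,\varphi\ket=\bra u,\varphi\ket$, i.e. $Wu$ is an extension of $u$ to all of $\cD(\R^n)$.

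For the degree I would prove $\deg Wu=\deg u$ by two inequalities, working with the scaling degree $\sd=\deg+n$. The lower bound $\sd Wu\ge\sd u$ is the soft half: the defining conditions of $\cD_\cK(\R^n)$ involve only derivatives at the origin, which merely rescale under $\varphi\mapsto\varphi(\lambda^{-1}\cdot)$, so $\cD_\cK(\R^n)$ is dilation invariant. By the first part, $(Wu)_\lambda$ therefore pairs with any $\varphi\in\cD_\cK(\R^n)$ exactly as $u_\lambda$ does. Hence if $\lambda^\omega(Wu)_\lambda\to0$ in $\cD'(\R^n)$ then $\lambda^\omega u_\lambda\to0$ in $\cD'_\cK(\R^n)$, and taking the infimum over such $\omega$ yields $\sd Wu\ge\sd u$, that is $\deg Wu\ge\deg u$.

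The upper bound is the crux. I would fix a reference extension $u_0\in\cD'(\R^n)$ of $u$ with $\deg u_0=\deg u=:r$ (obtained from the extension theory of Proposition~\ref{lemmarenorm}; this existence is a point to verify carefully). Since $W^{\rm t}\varphi\in\cD_\cK(\R^n)$, where $u_0$ agrees with $u$, the expansion \eqref{defWt} gives the structural identity $\bra Wu,\varphi\ket=\bra u_0,W^{\rm t}\varphi\ket=\bra u_0,\varphi\ket-\sum_i\bra v_i,\varphi\ket\,\bra u_0,\psi_i\ket$, i.e. $Wu=u_0-\sum_i c_i v_i$ with constants $c_i=\bra u_0,\psi_i\ket$ and $v_i\in\cK\subset\Do$. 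Each $c_iv_i$ is a fixed distribution supported at the origin, so by subadditivity of the scaling degree (and the values recorded before Lemma~\ref{lemmasd}, $\deg\delta^{(\alpha)}=|\alpha|$) one gets $\deg Wu\le\max\{\deg u_0,\ \max_i\deg v_i\}$.

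The main obstacle lives precisely in this last estimate: it delivers $\deg Wu\le r=\deg u$ only once one knows $\deg v\le r$ for every $v\in\cK$. This holds automatically in the case of interest $\cK=\Dor$ with $r=\deg u$, since then $\cK$ is spanned by the $\delta^{(\alpha)}$ with $|\alpha|\le r$, each of degree $|\alpha|\le r$; in general the equality requires the hypothesis $\cK\subseteq\Do_{\le\deg u}$ (without it the correction terms can raise the degree, so the claim must be read in this setting). Combining the two inequalities then gives $\deg Wu=\deg u$.
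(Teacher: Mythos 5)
Your first claim (that $W^{\rm t}$ is the identity on $\cD_{\cK}(\R^n)$, hence $Wu=u$ there) and your lower bound $\deg Wu\geq\deg u$ are correct and are the easy half. The genuine gap is in the upper bound, and it is a circularity: you ``fix a reference extension $u_0\in\cD'(\R^n)$ of $u$ with $\deg u_0=\deg u$, obtained from the extension theory of Proposition~\ref{lemmarenorm}''. But Proposition~\ref{lemmarenorm} does not supply such a $u_0$: it produces only $\tilde u\in\cD'_r(\R^n)$, a functional on the subspace $\cD_r(\R^n)$, not an element of $\cD'(\R^n)$. The existence of an extension in $\cD'(\R^n)$ with \emph{unchanged} degree of divergence is exactly what the present lemma is designed to deliver ($Wu$ is that extension), and it is the content of the subsequent Corollary, whose proof rests on this lemma with $\cK=\Dor$. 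So your crucial step assumes the statement being proved; once $u_0$ is granted, the identity $Wu=u_0-\sum_i\bra u_0,\psi_i\ket v_i$ makes the rest trivial, which is a sign that the analytic content has been hidden in the hypothesis. (The paper itself states the lemma without proof, citing the $\cK=\Dor$ case from [DF04]; the standard argument is not of your form.)

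A non-circular proof must estimate $\bra u,W^{\rm t}[\varphi(\lambda^{-1}\cdot)]\ket$ directly. Choose a basis $\{v_i\}$ of $\cK$ of elements homogeneous under dilations, with $\deg v_i=d_i$; then $W^{\rm t}[\varphi(\lambda^{-1}\cdot)]=(W^{\rm t}\varphi)(\lambda^{-1}\cdot)+\sum_i\bra v_i,\varphi\ket\,\eta_{i,\lambda}$, where $\eta_{i,\lambda}=\psi_i(\lambda^{-1}\cdot)-\lambda^{-d_i}\psi_i$ lies in $\cD_{\cK}(\R^n)$. The first term is controlled by $\sd\,u$ since $W^{\rm t}\varphi\in\cD_{\cK}(\R^n)$ is fixed; the second requires Steinmann's dyadic telescoping: for $\lambda=2^{-N}$, $\eta_{i,\lambda}$ is a sum over $k\leq N$ of rescalings $2^{-kd_i}\chi_i(2^k\cdot)$ of the \emph{fixed} function $\chi_i=\psi_i-2^{d_i}\psi_i(2^{-1}\cdot)\in\cD_{\cK}(\R^n)$, and summing the resulting geometric series converges precisely because $d_i\leq\deg u$ (with at worst a factor $N$, i.e.\ a logarithm, which does not change the scaling degree). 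Two of your side remarks deserve comment here. First, your blanket claim that $\cD_{\cK}(\R^n)$ is dilation invariant is false for general finite-dimensional $\cK\subset\Do$ (take $\cK={\rm span}\{\delta+\delta'\}$ in $n=1$); it holds exactly when $\cK$ is spanned by homogeneous elements, as for $\cK=\Dor$ — and this is also needed for $\deg u$ to be well defined on $\cD'_{\cK}(\R^n)$, so the lemma must be read with this implicit hypothesis. Second, your observation that the equality $\deg Wu=\deg u$ requires $\cK\subseteq\Do_{\leq\deg u}$ is correct and sharp: for $n=1$, $\cK={\rm span}\{\delta'\}$ and $u$ the restriction of Lebesgue measure to $\cD_{\cK}(\R)$, one has $\deg u=-1$ but $Wu=1-\bigl(\int\psi\bigr)\delta'$ has degree $1$ whenever $\int\psi\neq0$. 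In the paper's applications $\cK\subseteq\Dor$ with $r=\deg u$, so the hypothesis is satisfied but left implicit; catching it is a genuine merit of your write-up, even though the core estimate remains unproved.
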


Taking into account that $\deg\delta^{(\alpha)}=|\alpha|$, we find the following important result on the existence of extensions with the same degree of divergence.

\begin{corollary} (\cite{BF00}) Let $u\in\cD'(\dot\R^n)$ be a distribution with  $r:=\deg u<\infty$. Then there is an extension $\dot u\in\cD'(\R^n)$ of $u$ with $\deg\dot u=\deg u$. Each such extension can be written as $\dot u = W\tilde u$, where $\tilde u$ is the unique extension of $u$ in $\cD'_r(\R^n)$ and $W^{\rm t}:\cD(\R^n)\to\cD_r(\R^n)$ is a projection.  Moreover, two arbitrary extensions with the above properties differ by an element of $\Dor$. 
\end{corollary}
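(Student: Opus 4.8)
The plan is to specialise the three preceding results to $\cK=\Dor$ and chain them together. Since $\deg\delta^{(\alpha)}=|\alpha|$, the space $\Dor$ is finite dimensional with basis $\{\delta^{(\alpha)}\}_{|\alpha|\leq r}$, so the lemmas apply. First I would obtain, from Proposition \ref{lemmarenorm}, the unique extension $\tilde u\in\cD'_r(\R^n)$ of $u$ with $\deg\tilde u=r$. Choosing test functions $\chi_\alpha\in\cD(\R^n)$ biorthogonal to the basis, i.e. $\bra\delta^{(\beta)},\chi_\alpha\ket=\delta_{\alpha\beta}$, the first of the two preceding lemmas yields a projection $W_0^{\rm t}:\cD(\R^n)\to\cD_r(\R^n)$ through \eqref{defWt}. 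I would then set $\dot u:=W_0\tilde u$ and verify it is a degree-$r$ extension of $u$: any $\varphi\in\cD(\dot\R^n)$ vanishes near the origin, hence lies in $\cD_r(\R^n)$, so Lemma \ref{lem:Wpreserves} combined with \eqref{def:cDr} gives $\bra\dot u,\varphi\ket=\bra\tilde u,\varphi\ket=\bra u,\varphi\ket$, while the same lemma gives $\deg\dot u=\deg\tilde u=r$. This settles the first assertion.

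Next I would prove the ``moreover'' part, as it feeds into the structure statement. If $\dot u_1$ and $\dot u_2$ are two extensions of $u$ with $\deg\dot u_i=r$, their difference vanishes on $\cD(\dot\R^n)$ and is therefore supported at $\{0\}$; by subadditivity of the scaling degree it has degree $\leq r$. A distribution supported at the origin is a finite combination of the $\delta^{(\alpha)}$, and the degree constraint forces every occurring $|\alpha|\leq r$, so $\dot u_1-\dot u_2\in\Dor$.

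It remains to show that \emph{every} extension $\dot u$ with $\deg\dot u=r$ is of the form $W\tilde u$ for some projection. Combining the two previous paragraphs, $\dot u=W_0\tilde u+v$ with $v=\sum_{|\alpha|\leq r}c_\alpha\delta^{(\alpha)}\in\Dor$, and the idea is to absorb $v$ into a perturbed projection. Replacing $\chi_\alpha$ by $\psi_\alpha:=\chi_\alpha+\eta_\alpha$ with $\eta_\alpha\in\cD_r(\R^n)$ leaves the relations $\bra\delta^{(\beta)},\psi_\alpha\ket=\delta_{\alpha\beta}$ intact, so $\{\psi_\alpha\}$ still defines a projection $W^{\rm t}$ via \eqref{defWt}; a direct computation from that formula shows
\[
W\tilde u-W_0\tilde u=-\sum_{|\alpha|\leq r}\bra\tilde u,\eta_\alpha\ket\,\delta^{(\alpha)}.
\]
Thus $\dot u=W\tilde u$ as soon as $\bra\tilde u,\eta_\alpha\ket=-c_\alpha$ for all $\alpha$, and solving this system is the step I expect to be the main obstacle. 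It hinges on realising $\tilde u$ as a functional attaining prescribed values on $\cD_r(\R^n)$: as long as $u\neq0$ the distribution $\tilde u$ is a nonzero element of $\cD'_r(\R^n)$, so picking $\eta\in\cD_r(\R^n)$ with $\bra\tilde u,\eta\ket=1$ and setting $\eta_\alpha:=-c_\alpha\eta$ works, the case $u=0$ being trivial. This produces the desired projection and completes the proof.
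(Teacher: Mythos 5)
Your proof is correct and follows essentially the route the paper intends: the corollary is presented there as a direct consequence of Proposition \ref{lemmarenorm} together with the two lemmas on projections $W^{\rm t}:\cD(\R^n)\to\cD_r(\R^n)$, which is exactly the chain you assemble, and your verifications (that $\cD(\dot\R^n)\subset\cD_r(\R^n)$, that the difference of two degree-$r$ extensions is supported at $\{0\}$ with $\deg\leq r$, hence lies in $\Dor$) are sound. The one step the paper leaves implicit --- that \emph{every} degree-$r$ extension arises as $W\tilde u$ --- you fill in correctly by perturbing the biorthogonal family within $\cD_r(\R^n)$ via $\psi_\alpha=\chi_\alpha-c_\alpha\eta$ with $\bra\tilde u,\eta\ket=1$, which is the natural completion of the argument.
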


While each extension of $u$ with the same degree of divergence can be constructed as above by using a projection $W^{\rm t}$, it can sometimes be more convenient to use some other operator $V^{\rm t}$ which maps $\cD(\R^n)$ to $\cD_r(\R^n)$ and check whether $V\tilde u$ is an extension of $u$ with the correct degree of divergence. Indeed, this approach will prove to be more convenient in directly constructing on-shell  extensions (cf. Proposition \ref{prop:direct}).


\section{On-shell extension problem}\label{sec:main}

\subsection{Essential order}\label{sec:essord}

Our aim is to implement symmetries, i.e. we will ask our extensions to satisfy a set of given equations. In order to include discrete symmetries, we consider more general operators from $\cD'(\R^n)$ to $\cD'(\R^n)$ rather than just differential operators.

\begin{definition}We say that $Q:\cD'(\R^n)\to\cD'(\R^n)$ is an \emph{operator of essential order $q$} if
\begin{innerlist3}
\item $Q$ is the transpose of a linear operator $Q^{\rm t}:\cf(\R^n)\to\cf(\R^n)$ which continuously maps $\cD(\dot\R^n)$ and $\cD(\R^n)$ to themselves; 
\item $q\in\N_0$ is the lowest number such that $\deg Q u \leq \deg u + q$ for all $u\in\cD'(\R^n)$.
\end{innerlist3}
\end{definition}

Basic examples for such operators are of course differential operators, for which the essential degree was already considered in \cite{nikolov}: a differential operator of order $m$ has essential order smaller or equal $m$. More precisely, a differential operator $Q=\sum_{|\alpha|\leq m}a_{\alpha}(x)\partial^{\alpha}$ has essential order $q$, where $q$ is the smallest possible non-negative number s.t. $(\partial^{\beta }a_{\alpha})(0)=0$ for $|\beta|\leq |\alpha|-q-1$. In particular, $Q$ has essential order $0$ if $(\partial^{\beta }a_{\alpha})(0)=0$ for $|\beta|\leq |\alpha|-1$.

Let us list some basic properties of operators of essential order $q$.
\begin{lemma}\label{lemmaR}Let $Q:\cD'(\R^n)\to\cD'(\R^n)$ be an operator of essential order $q$. Then
\begin{enumerate}\setlength{\itemsep}{-1.5pt}
\item\label{lemmaR1} $Q$ is continuous in the $\cD'(\R^n)$ topology;
\item\label{lemmaR2} $Q$ maps $\cD'(\dot\R^n)$ and $\cD'(\{0\})$ to themselves.
\item\label{lemmaR3} Let $\cK_1$ be a linear subspace of $\Do$. Then $Q^{\rm t}$ maps $\cD_{\cK_1}(\R^n)$ to $\cD_{\cK_2}(\R^n)$, where
\[
\cK_2=\{ v\in \Do | \ Q v\in\cK_1\}. 
\]
In particular, $Q^{\rm t}$ maps $\cD(\R^n)$ to $\cD_{\cK}(\R^n)$, where $\cK=\ker (Q|_{\Do})$.
\end{enumerate}    
\end{lemma}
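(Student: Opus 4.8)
The plan is to derive all three claims from the defining transpose relation $\bra Qu,\varphi\ket=\bra u,Q^{\rm t}\varphi\ket$ together with the two mapping properties of $Q^{\rm t}$ that are built into the definition of essential order: that $Q^{\rm t}$ maps $\cD(\R^n)$ continuously to itself and $\cD(\dot\R^n)$ continuously to itself. It is worth noting at the outset that neither the integer $q$ nor the degree estimate $\deg Qu\leq\deg u+q$ is needed for this particular lemma; only the continuity and support-preservation of $Q^{\rm t}$ enter. Claim \ref{lemmaR1} is then immediate: the transpose of a continuous linear operator between the (LF-)test function spaces is continuous on the dual spaces, so since $Q^{\rm t}:\cD(\R^n)\to\cD(\R^n)$ is continuous by hypothesis, $Q:\cD'(\R^n)\to\cD'(\R^n)$ is continuous for the weak-$*$ (indeed the strong) topology, and there is nothing further to prove beyond invoking this duality principle.

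For claim \ref{lemmaR2} I would treat the two spaces separately. That $Q$ maps $\cD'(\dot\R^n)$ to itself is again a duality statement: it is the transpose of the restriction $Q^{\rm t}|_{\cD(\dot\R^n)}:\cD(\dot\R^n)\to\cD(\dot\R^n)$, which is continuous by assumption. I would also record that this action is consistent with restriction from $\R^n$, since for $\varphi\in\cD(\dot\R^n)$ one has $Q^{\rm t}\varphi\in\cD(\dot\R^n)$, so the pairing $\bra Qu,\varphi\ket=\bra u,Q^{\rm t}\varphi\ket$ depends only on $u|_{\dot\R^n}$. For the invariance of $\cD'(\{0\})$, let $u\in\Do$ and test $Qu$ against an arbitrary $\varphi\in\cD(\dot\R^n)$: then $Q^{\rm t}\varphi\in\cD(\dot\R^n)$ vanishes in a neighbourhood of the origin, whence $\bra Qu,\varphi\ket=\bra u,Q^{\rm t}\varphi\ket=0$ because $u$ is supported at the origin. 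Thus $Qu$ vanishes on $\dot\R^n$, i.e. $\supp(Qu)\subseteq\{0\}$, so $Qu\in\Do$.

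For claim \ref{lemmaR3} the argument is a short dualisation. First observe that $\cK_2:=\{v\in\Do : Qv\in\cK_1\}$ is well defined and a linear subspace of $\Do$, because by \ref{lemmaR2} the operator $Q$ maps $\Do$ into $\Do$; this also guarantees that $\cD_{\cK_2}(\R^n)$ is of the admissible form. Now take $\varphi\in\cD_{\cK_1}(\R^n)$ and any $v\in\cK_2$. Then $\bra v,Q^{\rm t}\varphi\ket=\bra Qv,\varphi\ket=0$, since $Qv\in\cK_1$ and $\varphi$ annihilates $\cK_1$. As $v\in\cK_2$ was arbitrary, $Q^{\rm t}\varphi\in\cD_{\cK_2}(\R^n)$. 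The \emph{in particular} statement is the special case $\cK_1=\{0\}$: then $\cD_{\cK_1}(\R^n)=\cD(\R^n)$ and $\cK_2=\{v\in\Do : Qv=0\}=\ker(Q|_{\Do})$.

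I do not expect any serious obstacle, as each claim reduces to unwinding the transpose pairing and invoking a stated property of $Q^{\rm t}$. The only points requiring a little care are the consistency, in claim \ref{lemmaR2}, between the \emph{a priori} distinct transpose operators induced on $\cD'(\R^n)$ and on $\cD'(\dot\R^n)$, and verifying in claim \ref{lemmaR3} that $\cK_2$ is genuinely contained in $\Do$ before dualising; both are handled by the observations above.
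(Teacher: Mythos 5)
Your proof is correct and takes essentially the same route as the paper's: each claim is obtained by unwinding the transpose pairing $\bra Qu,\varphi\ket=\bra u,Q^{\rm t}\varphi\ket$ against the assumed mapping properties of $Q^{\rm t}$ on $\cD(\R^n)$ and $\cD(\dot\R^n)$, with claim \ref{lemmaR3} proved by exactly the same one-line dualisation. You merely spell out details the paper dismisses as obvious (the consistency of the induced action on $\cD'(\dot\R^n)$ and the support argument showing $Qu\in\Do$), and your remark that neither $q$ nor the degree estimate enters is accurate.
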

\proof\begin{enumerate}\setlength{\itemsep}{-1.5pt}\item By definition of the weak topology.
\item The first assertion is obvious. For the second one it suffices to notice that for any $v\in\cD'(\{0\})$ the expression $\bra Q v, \varphi \ket$ only depends on the restriction of $\varphi$ to an arbitrary small neighbourhood of $0$.  
\item Let $\varphi\in\cD_{\cK_1}(\R^n)$, then $Q^{\rm t}\varphi \in \cD_{\cK_2}(\R^n)$, since 
for any $v \in \cK_2$, i.e. $v\in\Do$ such that $Qv \in\cK_1$, we have $\bra v, Q^{\rm t}\varphi \ket= \bra Q v ,\varphi \ket=0$.
\end{enumerate}
\qed

The following two lemmas give  examples of operators of essential degree 0, for which we usually reserve the symbol $R$.

\begin{lemma} Let $R$ be an infinitesimal generator of a Lie group $G$ acting  on $\R^n$ such that $0$ is a fixed point. Then $R$ is an operator of essential degree $0$. 
\end{lemma}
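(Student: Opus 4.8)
The plan is to observe that $R$ is simply a first-order differential operator with smooth coefficients, and that the assumption that $0$ is a fixed point forces its first-order coefficients to vanish at the origin; the conclusion then follows at once from the criterion for the essential order of a differential operator recalled above (or, equivalently, from Lemma \ref{lemmasd}).

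First I would make the form of $R$ explicit. Writing $g_t=\exp(tR)$ for the one-parameter subgroup generated by $R$ and $\Phi_g$ for the action of $g\in G$ on $\R^n$, the induced action on test functions is $\varphi\mapsto\varphi\circ\Phi_{g_t}$ (possibly weighted by a Jacobian, which only adds a smooth zeroth-order term and is irrelevant below). Its generator is the vector field $R^{\rm t}=\sum_{i=1}^n a_i(x)\partial_i$ with $a_i(x)=\tfrac{d}{dt}\big|_{t=0}\Phi^i_{g_t}(x)\in\cf(\R^n)$. Being a differential operator with smooth coefficients, $R^{\rm t}$ continuously maps $\cD(\R^n)$ and $\cD(\dot\R^n)$ to themselves (it preserves supports), so the first condition in the definition of essential order is automatic; passing to the transpose, $R$ itself is again a first-order differential operator on $\cD'(\R^n)$ whose first-order coefficients are $-a_i$ and whose zeroth-order part is smooth.

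The key step is that $a_i(0)=0$ for every $i$: since $0$ is a fixed point, $\Phi_{g_t}(0)=0$ for all $t$, whence $a_i(0)=\tfrac{d}{dt}\big|_{t=0}\Phi^i_{g_t}(0)=0$. With this in hand the conclusion is immediate. Using the criterion recalled before the lemma, the only nontrivial terms are those with $|\alpha|=1$, for which the requirement $(\partial^\beta a_\alpha)(0)=0$ for $|\beta|\le|\alpha|-1=0$ is exactly $a_i(0)=0$, while the $|\alpha|=0$ term imposes no condition; hence $R$ has essential order $\le 0$, i.e.\ $0$. Alternatively, and more self-containedly, for $u\in\cD'(\R^n)$ item \ref{itsd1} of Lemma \ref{lemmasd} gives $\deg(\partial_i u)\le\deg u+1$, and since $a_i(0)=0$, item \ref{itsd3} (with $k=1$) gives $\deg(a_i\partial_i u)\le\deg(\partial_i u)-1\le\deg u$; the smooth zeroth-order term is handled by item \ref{itsd3} with $k=0$, so $\deg Ru\le\deg u$ for all $u$, which is precisely the assertion that the essential order of $R$ equals $0$. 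I expect no real obstacle: the only points needing care are the bookkeeping of the transpose (and the harmless Jacobian factor) and using the fixed-point hypothesis at the correct order, namely as $k=1$ in item \ref{itsd3}.
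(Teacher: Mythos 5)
Your proof is correct and takes essentially the same approach as the paper, whose one-line argument is precisely that $R=\sum_{i=1}^n \xi^i(x)\partial_i$ with $\xi^i(0)=0$ (citing Olver for the vanishing of the coefficients at the fixed point), combined with the previously stated essential-order criterion for differential operators. You simply make explicit what the paper leaves implicit: the continuity of $R^{\rm t}$ on $\cD(\R^n)$ and $\cD(\dot\R^n)$, the computation $a_i(0)=\tfrac{d}{dt}\big|_{t=0}\Phi^i_{g_t}(0)=0$, and the alternative direct estimate $\deg Ru\leq\deg u$ via items \ref{itsd1} and \ref{itsd3} of Lemma \ref{lemmasd}.
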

\proof Indeed, $R=\sum_{i=1}^n \xi^i(x)\partial_i$ where $\xi^i(0)=0$ (as follows from, i.e., \cite[Ex. 2.68]{olver}). \qed\\

\begin{lemma}Let $\Phi:\R^n\to\R^n$ be a $\cf$ diffeomorphism s.t. $\Phi(0)=0$. Then the operator given by $R u:= u-\Phi^* u$ for $u\in\cD'(\R^n)$ is an operator of essential degree $0$.
\end{lemma}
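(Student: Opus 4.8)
The plan is to verify the two defining properties of an operator of essential order separately, the first being routine and the second carrying the real content. For the first property, I would exhibit $R$ as the transpose of $R^{\rm t}:=\varphi\mapsto\varphi-(\Phi^*)^{\rm t}\varphi$, where the transpose of the pullback acts on test functions by $(\Phi^*)^{\rm t}\varphi=(\varphi\circ\Phi^{-1})\,|\det D\Phi^{-1}|$. Since $\Phi$, and hence $\Phi^{-1}$, is a smooth diffeomorphism fixing the origin, composition with $\Phi^{-1}$ together with multiplication by the smooth factor $|\det D\Phi^{-1}|$ preserves smoothness, compact support, and --- because $\Phi^{-1}(0)=0$ --- the property of having support away from $0$. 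Thus $R^{\rm t}$ maps $\cf(\R^n)$, $\cD(\R^n)$ and $\cD(\dot\R^n)$ continuously to themselves, and $\bra Ru,\varphi\ket=\bra u,R^{\rm t}\varphi\ket$ holds by definition of the distributional pullback.

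For the second property it suffices to show $\deg(\Phi^* u)\le\deg u$ for every $u\in\cD'(\R^n)$: since $\deg(v+w)\le\max(\deg v,\deg w)$, this yields $\deg Ru=\deg(u-\Phi^* u)\le\deg u$, so that $q=0$ meets the requirement, and $0$ is necessarily the lowest element of $\N_0$. The case $\deg u=\infty$ being trivial, I assume $\sd u<\infty$. The key computation is the identity
\[
(\Phi^* u)_\lambda=\Phi_\lambda^*(u_\lambda),\qquad \Phi_\lambda(x):=\lambda^{-1}\Phi(\lambda x),
\]
which follows from $\Phi\circ\delta_\lambda=\delta_\lambda\circ\Phi_\lambda$ (with $\delta_\lambda(x)=\lambda x$) and the contravariance $(\Phi\circ\Psi)^*=\Psi^*\Phi^*$ of the pullback. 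By Taylor's theorem and $\Phi(0)=0$, the rescaled diffeomorphisms converge, $\Phi_\lambda\to A:=D\Phi(0)$ in $\cf(\R^n)$ as $\lambda\searrow0$, with $A$ an invertible linear map; correspondingly $\Phi_\lambda^{-1}\to A^{-1}$ and $|\det D\Phi_\lambda^{-1}|\to|\det A^{-1}|$ uniformly on compacta together with all derivatives.

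Now fix $\omega>\sd u$, so that $\lambda^\omega u_\lambda\to0$ in $\cD'(\R^n)$. Testing against $\varphi\in\cD(\R^n)$ gives
\[
\bra\lambda^\omega(\Phi^* u)_\lambda,\varphi\ket=\bra\lambda^\omega u_\lambda,(\Phi_\lambda^*)^{\rm t}\varphi\ket,
\]
and the test functions $(\Phi_\lambda^*)^{\rm t}\varphi=(\varphi\circ\Phi_\lambda^{-1})\,|\det D\Phi_\lambda^{-1}|$ converge in $\cD(\R^n)$ to $\psi:=(\varphi\circ A^{-1})\,|\det A^{-1}|$, their supports remaining in a fixed compact set. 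Granting that the right-hand side tends to $0$, one obtains $\lambda^\omega(\Phi^* u)_\lambda\to0$, hence $\sd(\Phi^* u)\le\omega$, and letting $\omega\searrow\sd u$ gives $\sd(\Phi^* u)\le\sd u$, i.e. $\deg(\Phi^* u)\le\deg u$, completing the proof.

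The main obstacle is exactly this passage to the limit, where both the distribution $\lambda^\omega u_\lambda$ and the test function $(\Phi_\lambda^*)^{\rm t}\varphi$ vary with $\lambda$: the pointwise convergence of each factor does not by itself force the pairing to vanish. I would resolve this with the Banach--Steinhaus theorem. Since $\lambda^\omega u_\lambda\to0$ in $\cD'(\R^n)$, the family $\{\lambda^\omega u_\lambda\}$ is bounded, hence equicontinuous on bounded subsets of $\cD(\R^n)$; writing $\bra\lambda^\omega u_\lambda,(\Phi_\lambda^*)^{\rm t}\varphi\ket=\bra\lambda^\omega u_\lambda,\psi\ket+\bra\lambda^\omega u_\lambda,(\Phi_\lambda^*)^{\rm t}\varphi-\psi\ket$, the first term vanishes in the limit by the convergence $\lambda^\omega u_\lambda\to0$, while the second is controlled uniformly by the equicontinuity together with the $\cD$-convergence $(\Phi_\lambda^*)^{\rm t}\varphi\to\psi$. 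This is the only point requiring genuine care; everything else is a direct unwinding of the definitions.
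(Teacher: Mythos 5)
Your proof is correct, and its overall strategy coincides with the paper's: reduce everything to the inequality $\sd(\Phi^*u)\le\sd\,u$ and verify the support condition from $\Phi(0)=0$. The difference lies in execution, and here your version is genuinely stronger. The paper's proof is a two-line sketch whose key identity, as literally written, reads $\bra\Phi^*u,\phi(\tau x)\ket=\bra u,(\Phi_*\phi)(\tau x)\ket$; this equates the pushforward of the rescaled test function with the rescaling of the pushforward, which holds only for linear $\Phi$ (the correct statement is $\bra\Phi^*u,\phi(\tau\cdot)\ket=\bra u,\Phi_*\bigl(\phi(\tau\cdot)\bigr)\ket$, a $\tau$-dependent family of test functions). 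You identify exactly this issue --- that both the distribution $\lambda^{\omega}u_\lambda$ and the test function vary with $\lambda$, so pointwise convergence of each factor does not suffice --- and resolve it properly: the rescaled diffeomorphisms $\Phi_\lambda=\lambda^{-1}\Phi(\lambda\,\cdot)$ converge in $\cf$ to the linearization $D\Phi(0)$ via Taylor's theorem and $\Phi(0)=0$, the identity $(\Phi^*u)_\lambda=\Phi_\lambda^*(u_\lambda)$ follows from $\Phi\circ\delta_\lambda=\delta_\lambda\circ\Phi_\lambda$, and Banach--Steinhaus (boundedness of $\{\lambda^{\omega}u_\lambda\}$ in $\cD'(\R^n)$, hence equicontinuity since $\cD(\R^n)$ is barrelled) controls the pairing against the $\cD$-convergent family $(\Phi_\lambda^*)^{\rm t}\varphi\to(\varphi\circ A^{-1})\,|\det A^{-1}|$. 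In short, you have supplied a rigorous proof of precisely the step the paper elides, at the cost of the blow-up construction $\Phi_\lambda$ and a functional-analytic lemma; the paper's version buys brevity but, read strictly, contains a gap that your argument closes.
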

\proof Since $\Phi(0)=0$, we find that $\supp \varphi\cap\{0\}=\emptyset$ implies $\supp (\Phi^*\varphi)\cap\{0\}=\emptyset$. Let us now check that $\sd (\Phi^* u)\leq \sd\, u$ for all $u\in\cD'(\R^n)$. Indeed, $\tau^{-\omega}\bra u,\varphi(\tau x)\ket\to 0$ for all $\varphi\in\cD(\R^n)$ implies that $\tau^{-\omega}\bra u,(\Phi_*\phi)(\tau x)\ket\to 0$ for all $\phi\in\cD(\R^n)$, so $\tau^{-\omega}\bra \Phi^*u,\phi(\tau x)\ket\to 0$. \qed \\

These two cases are of particular importance in our applications. To see this, first recall that a distribution $u\in\cD'(\R^n)$ is invariant under the induced action of a connected Lie group $G$ acting on $\R^n$ if and only if $R^i u=0$ for all the infinitesimal generators $R^i$ of $G$.  Now, if $0$ is a fixed point of the action of $G$ on $\R^n$, then by the first of the above lemmas, the infinitesimal generators are of essential order $0$.

Similarly, discrete symmetries entail operators of essential degree 0, as they are of the form discussed  in the second lemma. For instance, even distributions are in the kernel of the operator $R^{+}u:=u-u(- \, \cdot)$, and odd ones in that of $R^{-}u:=u+u(- \, \cdot)$.

\subsection{Spaces of distributions supported at the origin}\label{sec:spaces}

Recall that $\Dor$ denotes the finite dimensional vector space spanned by derivatives of the delta distribution up to order $r$. It it will turn out to be very useful to equip it with a scalar product. To this end, for $r\geq 0$ define the maps
\beqa
\cS_r : \Dor \to  \cf(\R^n), \quad \cS_r v:=\sum_{|\alpha|\leq r}\frac{x^{\alpha}}{\alpha !}\bra v, x^{\alpha}\ket, \ v\in\Dor\\
\cT_r :  \cf(\R^n) \to \Dor, \quad \cT_r f:=\sum_{|\alpha|\leq r}\frac{\delta^{(\alpha)}}{\alpha !}\bra\delta^{(\alpha)}, f\ket, \ f\in\cf(\R^n).
\eeqa 
One can easily check  that $\cT_r\cS_r={\rm id}$ on $\Dor$ and $\cS_r\cT_r={\rm id}$ on the space of polynomials of degree $\leq r$. Now set
\begin{eqnarray*}
(v|w)_r:=\bra \bar{v}, \cS_r w\ket =\textstyle \sum_{|\alpha|\leq r}\frac{1}{\alpha !}\bra \bar{v},x^{\alpha}\ket\bra w, x^{\alpha}\ket=\bra w, \cS_r \bar{v}\ket, \quad v,w\in\Dor.
\end{eqnarray*}
where the bar denotes ordinary complex conjugation. Writing elements $v, w$ of $\Dor$ as linear combinations $v=\sum_{|\alpha|\leq r}v_{\alpha}\delta^{(\alpha)}$, $w=\sum_{|\alpha|\leq r}w_{\alpha}\delta^{(\alpha)}$, with $v_{\alpha},w_{\alpha}\in\C$, we have 
\[
\textstyle (v|w)_r=\bra  \sum_{|\alpha|\leq r} \overline{v}_{\alpha}\delta^{(\alpha)},\sum_{|\beta|\leq r} (-1)^{\beta} w_{\beta} x^{\beta}\ket= \sum_{|\alpha|\leq r}\alpha!\, \overline{v}_{\alpha}w_{\alpha},
\]
therefore it is evident that $(\cdot | \cdot )_r$ is a scalar product on $\Dor$. 

Let $Q:\cD'(\R^n)\to\cD'(\R^n)$ be an operator of essential order $q$. We denote
\[
Q|_r : \Dor \to \Do_{\leq r+q}
\]
the restriction of $Q$ to $\Dor$, understood as an operator from $\Dor$ to $\Do_{\leq r+q}$. Let us stress that this definition depends on the essential order of $Q$. The next lemma characterizes the adjoint of $Q|_r$. Its proof relies essentially on the fact that, by assumption, $Q^{\rm t}$ maps $\cf(\R^n)$ to $\cf(\R^n)$.

\begin{lemma}\label{lem:adjoint}Let $Q$ have essential order $q$. Then the adjoint of $Q|_r : \Dor \to \Do_{\leq r+q}$ is 
\beq\label{eq:adjoint}
(Q|_r)^*=\cT_{r} \overline{Q}^{\rm t} \cS_{r+q} :  \Do_{\leq r+q} \to \Dor .
\eeq
Moreover, if $Q^{\rm t}$ maps polynomials of order $\leq r+q$ to elements of
\beq\label{eq:rspace}
\{ f\in C^{\infty}(\R^n): \ f^{(\alpha)}(0)=0, \ |\alpha|>r \},  
\eeq
then for all $s\geq r$ the operator $(Q|_{s})^*$ restricted to $\Do_{\leq r+q}$ is equal to  $(Q|_{r})^*$.
\end{lemma}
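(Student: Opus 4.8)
The plan is to prove both assertions by direct computation, reducing everything to the ``truncation'' identities satisfied by $\cS_r$ and $\cT_r$ together with the non-degeneracy of $(\cdot|\cdot)_r$. First I would settle well-definedness: since $Q$ has essential order $q$ and maps $\Do$ to itself (Lemma \ref{lemmaR}), for $v\in\Dor$ one has $Qv\in\Do$ with $\deg Qv\leq \deg v+q\leq r+q$, so indeed $Q|_r:\Dor\to\Dorq$. By the elementary theory of adjoints on finite-dimensional inner product spaces, $(Q|_r)^*$ exists and is the unique map $\Dorq\to\Dor$ satisfying $(Q|_r v\,|\,w)_{r+q}=(v\,|\,(Q|_r)^*w)_r$ for all $v\in\Dor$, $w\in\Dorq$. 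It therefore suffices to verify that $\cT_r\overline{Q}^{\rm t}\cS_{r+q}$ satisfies this identity.

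For the key computation I would unwind the left-hand side using $(a|b)_{r+q}=\bra\bar a,\cS_{r+q}b\ket$ and push $Q$ across the pairing via its defining transpose relation together with the conjugation convention $\overline{Q}^{\rm t}g:=\overline{Q^{\rm t}\bar g}$, obtaining $(Q|_r v\,|\,w)_{r+q}=\bra\overline{Qv},\cS_{r+q}w\ket=\bra\bar v,\overline{Q}^{\rm t}\cS_{r+q}w\ket$. Here the assumption that $Q^{\rm t}$ maps $\cf(\R^n)$ to itself is essential, since $\cS_{r+q}w$ is a (polynomial, hence smooth) function and all intermediate objects must stay smooth. Now set $g:=\overline{Q}^{\rm t}\cS_{r+q}w\in\cf(\R^n)$. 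Since $\bar v\in\Dor$ sees only the $r$-jet of $g$, and $\cS_r\cT_r$ reproduces exactly the degree-$r$ Taylor polynomial at $0$, I would use $\bra\bar v,g\ket=\bra\bar v,\cS_r\cT_r g\ket=(v\,|\,\cT_r g)_r$. Comparing with the adjoint identity and invoking uniqueness then yields $(Q|_r)^*=\cT_r\overline{Q}^{\rm t}\cS_{r+q}$.

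For the stabilization I would start from the formula just proved, $(Q|_s)^*=\cT_s\overline{Q}^{\rm t}\cS_{s+q}$, and restrict to $w\in\Dorq\subseteq\mathcal{D}'(\{0\})_{\leq s+q}$. Because $\bra w,x^\alpha\ket=0$ whenever $|\alpha|>r+q$, one has $\cS_{s+q}w=\cS_{r+q}w$, so the only possible discrepancy between $(Q|_s)^*w$ and $(Q|_r)^*w$ comes from applying $\cT_s-\cT_r$ to $g:=\overline{Q}^{\rm t}\cS_{r+q}w$; this difference is $\sum_{r<|\alpha|\leq s}\frac{\delta^{(\alpha)}}{\alpha!}\bra\delta^{(\alpha)},g\ket$, whose coefficients are, up to sign, the derivatives $g^{(\alpha)}(0)$ with $|\alpha|>r$. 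It remains to see these vanish. Since $\cS_{r+q}w$ and its conjugate are polynomials of degree $\leq r+q$, the hypothesis \eqref{eq:rspace} applied to $Q^{\rm t}$ gives $(Q^{\rm t}\overline{\cS_{r+q}w})^{(\alpha)}(0)=0$ for $|\alpha|>r$, and complex conjugation preserves this, so $g^{(\alpha)}(0)=0$ for $|\alpha|>r$. Hence $\cT_s g=\cT_r g$, and the two adjoints agree on $\Dorq$.

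I expect the only genuine obstacle to be the careful bookkeeping of complex conjugation: one must fix the convention $\overline{Q}^{\rm t}g=\overline{Q^{\rm t}\bar g}$ consistently so that $Q$ crosses the sesquilinear pairing correctly, and then check that the jet-vanishing condition, stated for $Q^{\rm t}$, transfers to $\overline{Q}^{\rm t}$ on the relevant polynomials. Everything else is routine manipulation with the truncation operators $\cS_r,\cT_r$ and the non-degeneracy of the scalar product $(\cdot|\cdot)_r$.
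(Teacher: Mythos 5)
Your proposal is correct and follows essentially the same route as the paper's own proof: both verify the sesquilinear adjoint identity by moving $Q$ across the pairing via $Q^{\rm t}$ (relying on $Q^{\rm t}$ preserving $\cf(\R^n)$, since $\cS_{r+q}w$ is a polynomial) and the truncation identities for $\cS_r$, $\cT_r$, and both obtain the stabilization claim from the coincidence of $\cS_{s+q}$ with $\cS_{r+q}$ on $\Do_{\leq r+q}$ together with the coincidence of $\cT_s$ with $\cT_r$ on the space (\ref{eq:rspace}). Your extra bookkeeping --- fixing the convention $\overline{Q}^{\rm t}g=\overline{Q^{\rm t}\bar g}$ and writing out $(\cT_s-\cT_r)g$ explicitly --- merely makes explicit steps the paper leaves implicit.
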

\proof For any $v\in\Do_{\leq r+q}$ and $w\in\Do_{\leq r}$ one has
\[
\begin{aligned}
(v|Q w)_{r+q}&=\bra \bar{v},\cS_{r+q} Q w \ket = \bra Qw, \cS_{r+q} \bar{v} \ket \\ & = \bra w, Q^{\rm t} \cS_{r+q} \bar{v}\ket=\bra \cT_s Q^{\rm t}\cS_{r+q} \bar{v},\cS_s w \ket = (\cT_s \overline{Q}^{\rm t} \cS_{r+q} v|w)_{s},
\end{aligned}
\]
so (\ref{eq:adjoint}) follows from the particular case $r=s$. For the second claim, observe that $\cS_{s+q}$ coincides with $\cS_{r+q}$ on $\Do_{\leq r+q}$ and $\cT_s$ coincides with $\cT_r$ on the space (\ref{eq:rspace}). Therefore, $\cT_{s} \overline{Q}^{\rm t} \cS_{s+q}$ restricted to $\Do_{\leq r+q}$ equals $\cT_{r} \overline{Q}^{\rm t} \cS_{r+q}$.
\qed

\subsection{On-shell extension --- single operator case}\label{sec:main2}

Let us specify the problem we already outlined in the introduction, first for the case when only one operator is considered.

\begin{myindentpar}{0.4cm} \emph{Problem} --- Let $Q:\cD'(\R^n)\to\cD'(\R^n)$ be an operator of essential order $q$. Let $u\in\cD'(\dot \R^n)$ have degree of divergence $r:=\deg u<\infty$ and assume
\[
Q u =0 \quad \mbox{ on \ } \dot\R^n.
\]
Find $\ddot u\in\cD'(\R^n)$ such that $\ddot u = u$ on $\dot\R^n$ and $Q\ddot u=0$ on $\R^n$. If such extension(s) exist, we call them \emph{on-shell extensions} w.r.t. $Q$.

\end{myindentpar}

The next theorem is the core of our paper. It is based on the observation that if $u\in\cD'(\dot\R^n)$ is a solution for $Q$ on $\dot\R^n$ and on-shell extensions exist, then an arbitrary extension of $u$ with the same degree of divergence is on-shell modulo an element of $\ran (Q|_r)$ (where $r=\deg u$). If the problem was purely finite-dimensional, we could get rid of the remainder in $\ran (Q|_r)$ by using an orthogonal projection to $(\Ran (Q|_r))^{\bot}$. Observe that any such projection can be expressed as a polynomial in the operator $(Q|_r)^*(Q|_r)$. As we show below, it suffices to consider such polynomials in $(Q|_r)^*Q$ instead, to get an operator with  the desired properties which is well-defined on distributions $\dot u\in\cD'(\R^n)$.

\begin{theorem}\label{thm:core} Let $Q:\cD'(\R^n)\to\cD'(\R^n)$ be an operator of essential order $q$. Let $u\in\cD'(\dot \R^n)$ have $r:=\deg u<\infty$ and assume
\[
Q u =0 \quad \mbox{ on \ } \dot\R^n.
\]
Then the  following statements are equivalent:
\begin{enumerate}    \renewcommand*\theenumi{\alph{enumi}}\renewcommand*\labelenumi{\theenumi)}
\item\label{ita} There is an extension $\ddot u\in\cD'(\R^n)$ of $u$ with $\deg\ddot u=r$ such that $Q \ddot u=0$ on $\R^n$ (that is, $u$ has on-shell extensions);
\item\label{itc} $Q\dot u\in \ran (Q|_r)$ for all extensions $\dot u\in\cD'(\R^n)$ of $u$ with $\deg\dot u=r$;
\item\label{itb} $Q\dot u\in \ran (Q|_r)$ for some extension $\dot u\in\cD'(\R^n)$ of $u$ with $\deg\dot u=r$;
\item\label{itf} $Q p_r\big((Q|_r)^* Q\big)\dot u =0$ for all extensions $\dot u\in\cD'(\R^n)$ of $u$ with $\deg\dot u=r$, where $p_r$ is the polynomial  $p_r(z):=\prod_{\lambda}(1-z/\lambda)$, the product being taken over all nonzero $\lambda\in\sp \big( (Q|_r) (Q|_r)^*\big)$;
\item\label{itd} $Q p_r\big((Q|_r)^* Q\big)\dot u =0$ for some extension $\dot u\in\cD'(\R^n)$ of $u$ with $\deg\dot u=r$, where $p_r$ is as above.

\end{enumerate}
\end{theorem}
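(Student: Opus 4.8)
The plan is to prove all five statements equivalent by running through the single cycle
\[
\ref{ita}\Rightarrow\ref{itb}\Rightarrow\ref{itc}\Rightarrow\ref{itf}\Rightarrow\ref{itd}\Rightarrow\ref{ita}.
\]
Two of these arrows are essentially formal. For $\ref{ita}\Rightarrow\ref{itb}$, an on-shell extension $\ddot u$ is itself a degree-$r$ extension with $Q\ddot u=0\in\ran(Q|_r)$, hence a witness for \ref{itb}. For $\ref{itb}\Rightarrow\ref{itc}$ I would invoke the Corollary: any two extensions of $u$ of degree $r$ differ by an element $w\in\Dor$, and since $Qw=Q|_r w\in\ran(Q|_r)$ by definition of the restriction, the membership $Q\dot u\in\ran(Q|_r)$ for one extension propagates to all of them. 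The arrow $\ref{itf}\Rightarrow\ref{itd}$ is trivial, a degree-$r$ extension being guaranteed by the Corollary.

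The heart of the argument is to make sense of $P:=p_r\big((Q|_r)^*Q\big)$ on extensions and to identify its finite-dimensional shadow. The decisive observation is that for any degree-$r$ extension $\dot u$ the distribution $Q\dot u$ is supported at the origin: for $\varphi\in\cD(\dot\R^n)$ one has $Q^{\rm t}\varphi\in\cD(\dot\R^n)$, so $\bra Q\dot u,\varphi\ket=\bra u,Q^{\rm t}\varphi\ket=\bra Qu,\varphi\ket=0$. Together with $\deg(Q\dot u)\le\deg\dot u+q=r+q$ this gives $Q\dot u\in\Dorq$, so $(Q|_r)^*$ may legitimately be applied and $(Q|_r)^*Q\dot u\in\Dor$. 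Any further application of $(Q|_r)^*Q$ then acts on an element of $\Dor$, where it coincides with $A:=(Q|_r)^*(Q|_r)$; writing $p_r(z)=1+z\,s(z)$ yields $P\dot u=\dot u+s(A)(Q|_r)^*Q\dot u$, the correction term lying in $\Dor$. Hence $P\dot u$ is again an extension of $u$, of degree exactly $r$ (it cannot drop below $\deg u=r$, and adding an element of $\Dor$ cannot raise it), and on $\Dor$ the operator $P$ restricts to $p_r(A)$.

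Next I would carry out the linear algebra on the finite-dimensional Hilbert space $(\Dor,(\cdot|\cdot)_r)$. The operator $A$ is self-adjoint and positive, with $\ker A=\ker(Q|_r)$, and its nonzero eigenvalues coincide with those of $(Q|_r)(Q|_r)^*$, over which the product defining $p_r$ is taken. Thus $p_r$ equals $1$ on $\ker A$ and $0$ on every nonzero eigenspace, so $p_r(A)$ is the orthogonal projection onto $\ker(Q|_r)$. This gives $\ref{itd}\Rightarrow\ref{ita}$ at once: if $QP\dot u=0$ for some degree-$r$ extension, then $P\dot u$ is, by the previous paragraph, an on-shell extension of degree $r$. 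For $\ref{itc}\Rightarrow\ref{itf}$ I would compute directly: given $Q\dot u=Q|_r z$ with $z\in\Dor$, one has $(Q|_r)^*Q\dot u=Az$, whence
\[
QP\dot u=Q\dot u+Q\,s(A)Az=Q|_r z+Q|_r(p_r(A)-1)z=Q|_r\,p_r(A)z,
\]
which vanishes since $p_r(A)z\in\ker(Q|_r)$. As \ref{itc} supplies such a $z$ for every extension, $\ref{itf}$ follows, closing the cycle.

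The main obstacle, and the conceptually load-bearing step, is the reduction in the second paragraph: recognizing that the hypothesis $Qu=0$ on $\dot\R^n$ forces $Q\dot u\in\Dorq$ for \emph{every} extension, which is exactly what makes the a priori infinite-dimensional $P$ well-defined on the relevant distributions and collapses it onto the finite-dimensional projection $p_r(A)$. The remainder is bookkeeping to check that the correction term stays in $\Dor$ (so degrees are preserved and $P\dot u$ stays an extension) together with the standard spectral identification of $p_r(A)$ with the orthogonal projection onto $\ker(Q|_r)$, after which the key computation $QP\dot u=Q|_r\,p_r(A)z$ is short.
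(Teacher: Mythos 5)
Your proof is correct and takes essentially the same route as the paper: the same observation that $Qu=0$ on $\dot\R^n$ forces $Q\dot u\in\mathcal{D}'(\{0\})_{\leq r+q}$ (making $p_r\big((Q|_r)^*Q\big)$ well-defined and degree-preserving), combined with the spectral identification of $p_r$ evaluated on the relevant finite-dimensional operator as an orthogonal projection. The only differences are cosmetic --- you run a single five-step cycle instead of the paper's two sub-cycles, and you project onto $\ker(Q|_r)$ via $p_r\big((Q|_r)^*(Q|_r)\big)$ where the paper commutes $Q$ through and projects onto $\big(\ran(Q|_r)\big)^{\bot}$ via $p_r\big((Q|_r)(Q|_r)^*\big)$, which are mirror images of the same argument.
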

\proof \ref{ita})$\Rightarrow$\ref{itc}): Assume $Q \ddot u=0$. Since for all extensions $\dot u$ of $u$ with $\deg\dot u=\deg u=r$ we have $\dot u=\ddot u +v$ for some $v\in\Dor$, it follows that $Q \dot u = Q (\ddot u + v)= Q v\in\ran (Q|_r)$.\\
\ref{itc})$\Rightarrow$\ref{itb}) is obvious.\\
\ref{itb})$\Rightarrow$\ref{ita}): Assume  $Q\dot u= Q v$ for some $v\in\Dor$. Then $\ddot u:= \dot u - v$ satisfies $Q\ddot u =0$.\\
\ref{itc})$\Rightarrow$\ref{itf}): Let us note first that in the expression $p_r\big((Q|_r)^* Q\big)\dot u$, the operator $(Q|_r)^*$ appears always in front of $Q\dot u\in\Do_{\leq r+q}$, so that $(Q|_r)^*$ acts indeed on elements of $Q\dot u\in\Do_{\leq r+q}$, so $p_r\big((Q|_r)^* Q\big)\dot u$ is well-defined. We then have
\[
 Q p_r\big((Q|_r)^* Q\big)\dot u = p_r\big( Q (Q|_r)^*\big) Q \dot u= p_r\big( (Q|_r) (Q|_r)^*\big) Q \dot u
\]
If $Q \dot u \in \Ran (Q|_r)$, then this last expression vanishes because $p_r\big( (Q|_r) (Q|_r)^*\big)$ is the orthogonal projection to $\Ker (Q|_r)^*=(\Ran (Q|_r))^{\bot}$.\\
\ref{itf})$\Rightarrow$\ref{itd}) is obvious.\\
\ref{itd})$\Rightarrow$\ref{ita}): Set $\ddot u := p_r\big((Q|_r)^* Q\big)\dot u$. Since $p_r(0)=1$ and $(Q|_r)^* Q \dot u \in \Dor$, $\ddot u$ is an extension of $u$ and $\deg\ddot u =r$. Therefore $\ddot u$ is a solution with the required properties.\\
\qed\\

Theorem \ref{thm:core} has an important consequence: Given an arbitrary extension $\dot u$, by statement~\ref{itf}),  the distribution
\[
\ddot u := p_r\big((Q|_r)^* Q\big)\dot u
\] 
is a candidate for an on-shell extension. To calculate it explicitly, one only has to find the eigenvalues of the finite-dimensional  matrix $(Q|_r) (Q|_r)^*$.

There is another remarkable feature of the map $\dot u\mapsto p_r\big((Q|_r)^* Q\big)\dot u $. It turns out that under additional conditions (often satisfied in practice), it is linear on the space of extensions to $\R^n$ of solutions of $Q$ on $\dot\R^n$, provided that on-shell extensions exist. Therefore, one can obtain on-shell extensions from generic extensions by applying a certain fixed linear map. This statement is obvious when one speaks only of distributions with a fixed scaling degree $r$. However, it is not at all evident if one considers distributions with different degrees of divergence, for both the definition of $p_r$ and $(Q|_r)^*$ depend on $r$. 

\begin{proposition}\label{prop:linear}
Let $Q$ have essential order $q$ and assume that for all $r\in\N_0$, $Q^{\rm t}$ maps polynomials of order $\leq r+q$ to elements of
\[
\{ f\in C^{\infty}(\R^n): \ f^{(\alpha)}(0)=0, \ |\alpha|>r \}. 
\]
Let $\mathfrak{ V}_Q(\dot\R^n)$ be the space of all $u\in\cD'(\dot\R^n)$ with finite degree of divergence s.t. $Q u=0$ on $\dot\R^n$ and $u$ has an on-shell extension for $Q$. Let $\mathfrak{ V}_Q(\R^n)$ consist of all extensions of elements of $\mathfrak{ V}_Q(\dot\R^n)$ with no greater degree of divergence. The map
\[
\mathfrak{ V}_Q(\R^n)\ni \dot u \mapsto  p_r\big((Q|_r)^* Q\big) \dot u \in \cD'(\R^n)
\]
is linear (where $r=\deg \dot u$ and $p_r$ is as in Theorem \ref{thm:core}).
\end{proposition}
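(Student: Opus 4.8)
The plan is to isolate the only genuine difficulty — the dependence of both $(Q|_r)^*$ and $p_r$ on $r=\deg\dot u$ — in a single \emph{compatibility} identity, and then obtain linearity for free. The identity I would prove is:
\[
p_r\big((Q|_r)^* Q\big)\dot u = p_s\big((Q|_s)^* Q\big)\dot u \qquad \text{for every } \dot u \text{ with } r:=\deg\dot u<\infty \text{ and every integer } s\ge r.
\]
Granting this, homogeneity is trivial (for $c\neq0$ one has $\deg(c\dot u)=\deg\dot u$, and at fixed degree $p_r((Q|_r)^*Q)$ is an honest linear operator), and additivity follows by putting $s:=\max(\deg\dot u_1,\deg\dot u_2)$, $r_3:=\deg(\dot u_1+\dot u_2)\le s$, and computing
\begin{align*}
p_{r_3}\big((Q|_{r_3})^* Q\big)(\dot u_1 + \dot u_2) &= p_s\big((Q|_s)^* Q\big)(\dot u_1 + \dot u_2)\\
&= p_s\big((Q|_s)^* Q\big)\dot u_1 + p_s\big((Q|_s)^* Q\big)\dot u_2\\
&= p_{r_1}\big((Q|_{r_1})^* Q\big)\dot u_1 + p_{r_2}\big((Q|_{r_2})^* Q\big)\dot u_2,
\end{align*}
where the outer equalities are the compatibility identity (at $r_3\le s$ and at $r_i\le s$) and the middle one is linearity at the fixed degree $s$.

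To prove the compatibility identity I would use the hypothesis on $Q^{\rm t}$ to control the adjoints and a spectral argument to control the polynomials. The hypothesis is exactly the one that activates the second part of Lemma~\ref{lem:adjoint}, so that $(Q|_s)^*$ restricted to $\Dorq$ equals $(Q|_r)^*$. Applying this twice gives, first, that since $Q\dot u\in\Dorq$ we have $(Q|_s)^* Q\dot u=(Q|_r)^* Q\dot u=:\xi\in\Dor$; and second, that $(Q|_s)^*(Q|_s)$ preserves $\Dor$ and agrees there with $A:=(Q|_r)^*(Q|_r)$. Writing $p_r(z)=1+z\,\tilde p_r(z)$ and factoring $(Q|_r)^*$ to the left of $Q\dot u$ exactly as in the proof of Theorem~\ref{thm:core}, one obtains $p_r((Q|_r)^* Q)\dot u=\dot u+\tilde p_r(A)\xi$ and, by the two facts just noted, $p_s((Q|_s)^* Q)\dot u=\dot u+\tilde p_s(A)\xi$. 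Hence the whole identity reduces to showing $(\tilde p_s-\tilde p_r)(A)\xi=0$.

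This is the heart of the argument. The vector $\xi=(Q|_r)^*(Q\dot u)$ lies in $\Ran (Q|_r)^*=(\Ker (Q|_r))^{\bot}=(\Ker A)^{\bot}$, so it is a combination of eigenvectors of the self-adjoint operator $A$ with strictly positive eigenvalues; it therefore suffices that $p_s$ and $p_r$ agree on every nonzero $\lambda\in\sp(A)$. By construction $p_r(\lambda)=0$ for all nonzero $\lambda\in\sp\big((Q|_r)(Q|_r)^*\big)=\sp(A)\setminus\{0\}$. For $p_s$, the key observation is that any eigenvector $w\in\Dor$ of $A$ with eigenvalue $\lambda$ is, since $(Q|_s)^*(Q|_s)$ agrees with $A$ on $\Dor$, also an eigenvector of $(Q|_s)^*(Q|_s)$ with the same eigenvalue; thus every nonzero $\lambda\in\sp(A)$ is a nonzero eigenvalue of $(Q|_s)^*(Q|_s)$, equivalently of $(Q|_s)(Q|_s)^*$, whence $p_s(\lambda)=0$ as well. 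Since $\tilde p_s(\lambda)-\tilde p_r(\lambda)=(p_s(\lambda)-p_r(\lambda))/\lambda=0$ for these $\lambda$ and $\xi$ lives in the corresponding eigenspaces, $(\tilde p_s-\tilde p_r)(A)\xi=0$, which completes the identity.

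I expect the matching of the two \emph{a priori} different polynomials $p_r$ and $p_s$ to be the main obstacle: their defining spectra $\sp\big((Q|_r)(Q|_r)^*\big)$ and $\sp\big((Q|_s)(Q|_s)^*\big)$ are genuinely different sets, and the comparison succeeds only because (i) $\xi$ is orthogonal to $\Ker(Q|_r)$, so merely the nonzero eigenvalues enter, and (ii) the $Q^{\rm t}$-hypothesis forces $(Q|_s)^*(Q|_s)$ to restrict to $(Q|_r)^*(Q|_r)$ on $\Dor$, making exactly those nonzero eigenvalues common to both polynomials. Without the hypothesis on $Q^{\rm t}$ the identification $(Q|_s)^*|_{\Dorq}=(Q|_r)^*$ fails and the argument breaks down; this is precisely why the hypothesis, rather than the on-shell property of $\dot u$, is what the statement needs.
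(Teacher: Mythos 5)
Your proof is correct, and although it reduces linearity to the same compatibility identity $p_r\big((Q|_r)^*Q\big)\dot u=p_s\big((Q|_s)^*Q\big)\dot u$ (for $s\geq r=\deg\dot u$) that the paper isolates, your proof of that identity takes a genuinely different route at the crux. The paper, like you, uses the second part of Lemma \ref{lem:adjoint} to identify $(Q|_s)^*$ with $(Q|_r)^*$ on $\Dorq$ and notes that the difference lies in $\Ran\,(Q|_r)^*$; but it then invokes Theorem \ref{thm:core} — hence the standing assumption that $\dot u$ extends a distribution admitting an on-shell extension, i.e.\ $Q\dot u\in\Ran\,(Q|_r)$ — to conclude that $Q$ annihilates both $p_r\big((Q|_r)^*Q\big)\dot u$ and $p_{s}\big((Q|_{s})^*Q\big)\dot u$, so the difference also lies in $\ker(Q|_r)=(\Ran\,(Q|_r)^*)^{\bot}$ and vanishes by orthogonality. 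You instead diagonalize: writing the difference as $(\tilde p_s-\tilde p_r)(A)\xi$ with $A=(Q|_r)^*(Q|_r)$ and $\xi=(Q|_r)^*Q\dot u\in(\Ker A)^{\bot}$, you show every nonzero eigenvalue of $A$ is a common root of $p_r$ and $p_s$ — because $(Q|_s)^*(Q|_s)$ restricts to $A$ on $\Dor$, again by Lemma \ref{lem:adjoint}, so its nonzero spectrum contains that of $A$ — whence $\tilde p_s-\tilde p_r=(p_s-p_r)/z$ kills $\xi$. Notably, your argument nowhere uses the existence of on-shell extensions, so it establishes the compatibility identity (and hence linearity) on the larger space of all same-degree extensions of solutions of $Q$ on $\dot\R^n$; your closing remark that the $Q^{\rm t}$-hypothesis rather than the on-shell property is what drives this step is therefore justified, and is a point the paper's own proof does not make (its route genuinely consumes $Q\dot u\in\Ran\,(Q|_r)$). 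One cosmetic slip: $\sp\big((Q|_r)(Q|_r)^*\big)$ and $\sp(A)$ need only agree up to the element $0$ (either may or may not contain it), so the equality you write should read $\sp\big((Q|_r)(Q|_r)^*\big)\setminus\{0\}=\sp(A)\setminus\{0\}$ — which is exactly what your argument uses.
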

\begin{proof}Let $\dot u_1, \dot u_2\in \mathfrak{ V}_Q(\R^n)$ and denote $r_1=\deg u_1$, $r_2=\deg u_2$ and $r=\deg(u_1+u_2)\leq\max\{r_1,r_2\}$. To prove that
\[
p_r\big((Q|_r)^* Q\big) (\dot u_1 + \dot u_2) =p_{r_1}\big((Q|_r)^* Q\big)\dot u_1 +  p_{r_2}\big((Q|_r)^* Q\big) \dot u_2
\]
it suffices to show that
\[
0=\left[ p_{r'}\big((Q|_{r'})^* Q\big)- p_r\big((Q|_r)^* Q\big)\right ]\dot u
\]
for all $r'\geq r$ and $\dot u \in  \mathfrak{ V}(Q)$ with $\deg \dot u =r $. One has
\[
\left[ p_{r'}\big((Q|_{r'})^* Q\big)- p_r\big((Q|_r)^* Q\big)\right ]\dot u=(p_{r'}-p_r)\big((Q|_r)^* Q\big)\dot u 
\]
because $(Q|_{r'})^*$ restricted to $\Do_{\leq r+q}$ equals $(Q|_r)^*$ by Lemma \ref{lem:adjoint}. Since $(p_{r'}-p_r)(0)=0$, the expression $(p_{r'}-p_r)\big((Q|_r)^* Q\big)$ is the sum of elements of the form $\big((Q|_r)^* Q\big)^k$. For $Q \dot u\in\Do_{\leq r+q}$, we conclude $(p_{r'}-p_r)\big((Q|_r)^* Q\big)\dot u\in\Ran (Q|_r)^*$. On the other hand, $p_{r}\big((Q|_{r})^* Q\big)\dot u\in \ker (Q|_r)$ by Theorem \ref{thm:core} and one can repeat the arguments in the proof therein to show also $p_{r'}\big((Q|_{r'})^* Q\big)\dot u\in \ker (Q|_r)$. It follows that $(p_{r'}-p_r)\big((Q|_r)^* Q\big)\dot u$ belongs to $\ker (Q|_r)=(\Ran (Q|_r)^*)^{\bot}$, hence vanishes.
\end{proof}

\begin{remark} The problem of finding on-shell extensions can be thought as a variant of the following \textit{Bochner's extension problem}:
 \begin{myindentpar}{0.4cm}
 Let $u\in\cD'(\R^n)$ and assume $P u=0$ on $\dot\R^n$. Does $Pu=0$ hold on $\R^n$?
 \end{myindentpar}
If $Pu=0$ on $\R^n$, one says that $u$ has a \emph{removable singularity} for $P$ at $0$, see \cite{ruzhansky} and references therein for a collection of results on that subject. Observe that the assumption $P u=0$ on $\dot\R^n$ implies that $Pu$ is supported in $0$, and a computation of the degree of divergence gives $P u\in\cD'(\{0\})_{\leq r+m}$, where $r=\deg u$ and $m$ is the essential order of $P$. In particular, we obtain that $Pu=0$ on $\R^n$ if $\deg u<-m$. This gives a useful sufficient condition for removable singularities, which can be rephrased in terms of commonly used function spaces such as $L^p$ or Sobolev spaces, more suitable for various applications outside quantum field theory. On the other hand, the theorems proved in our paper are particularly useful in the more singular case when $\deg u\geq -m$ as a distribution on $\dot\R^n$. 
\end{remark}

\subsection{Operators of essential order 0}\label{sec:main0}

As we have seen, operators of essential order $0$ are of special interest in the applications. Moreover, they map $\Dor$ to itself, so one can study the natural subclasses consisting of self-adjoint or normal operators and use their special properties in the analysis. Also, for an operator of essential order $0$, we will give sufficient conditions for the existence of on-shell extensions which are easy to check.

In what follows, $R:\cD'(\R^n)\to\cD'(\R^n)$ is always an operator of essential order $0$.\\

If $R|_r$ is normal, then $\ker(R|_r)=\ker(R|_r)^*=\ker(R|_r)^k$ for $k\in\N_0$. This fact is used in the proof of the next proposition, which provides additional information when no on-shell extension exists for $R$ (or more generally $R^k$). This will for instance be the case for homogeneous distributions of degree $a$ s.t. $-a\in\N_0+n$, which in general do not have homogeneous extensions to $\R^n$.

\begin{proposition}\label{prop:normal}Let $R$ be of essential order $0$. Let $u\in\cD'(\dot\R^n)$ have degree of divergence $r<\infty$ and suppose
\[
 R^k u=0 \mbox{ \ \ on \ } \dot\R^n
\] 
for some $k\in\N_0$. If $R|_r$ is normal, there exists an extension $\ddot u\in\cD'(\R^n)$ of $u$ with $\deg\ddot u=\deg u$ such that 
\[
R^{k+1} \ddot u=0  \mbox{ \ \ on \ } \R^n.
\]
More precisely, one can take $\ddot u = p_r\big((R^k|_r)^* R^k\big)\dot u$, where $\dot u\in\cD'(\R^n)$ is an arbitrary extension of $u$ with $\deg\dot u=\deg u$ and $p_r(z)=\prod_{\lambda}(1-z/\lambda)$, the product being taken over all nonzero $\lambda\in\sp \big( (R^k|_r) (R^k|_r)^*\big)$.
\end{proposition}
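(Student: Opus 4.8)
The plan is to run the construction and algebraic manipulations of Theorem~\ref{thm:core} with the operator $Q:=R^k$ in place of $Q$, and then to squeeze out the extra conclusion using normality of $R|_r$. First I would record three preliminary facts. Since $R$ has essential order $0$, so does $R^k$ (immediate from the definition, as $\deg R^k w\le \deg w$), and $R^k u=0$ on $\dot\R^n$ holds by hypothesis, so the framework applies to $Q=R^k$ with $q=0$. Because $R$ has essential order $0$ it maps $\Dor$ into itself; hence its restriction satisfies $R^k|_r=(R|_r)^k$ as operators on $\Dor$. Now take $\dot u$ to be any extension of $u$ with $\deg\dot u=r$ and set
\[
\ddot u := p_r\big((R^k|_r)^* R^k\big)\dot u .
\]
Exactly as in the proof of Theorem~\ref{thm:core}, $R^k\dot u\in\Dor$ (it is supported at $0$ and $\deg(R^k\dot u)\le r$), so $(R^k|_r)^* R^k\dot u\in\Dor$; since $p_r(0)=1$ this shows $\ddot u-\dot u\in\Dor$, so $\ddot u$ is an extension of $u$ with $\deg\ddot u=r$.

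Next I would compute $R^k\ddot u$ by pushing $R^k$ through the polynomial, just as in the step \ref{itc})$\Rightarrow$\ref{itf}) of Theorem~\ref{thm:core}:
\[
R^k\ddot u = R^k\, p_r\big((R^k|_r)^* R^k\big)\dot u = p_r\big((R^k|_r)(R^k|_r)^*\big)\, R^k\dot u = P\,(R^k\dot u),
\]
where $P:=p_r\big((R^k|_r)(R^k|_r)^*\big)$ is the orthogonal projection onto $\Ker (R^k|_r)^*=(\Ran(R^k|_r))^{\bot}$ (this uses $p_r(0)=1$ and $p_r(\lambda)=0$ at every nonzero eigenvalue $\lambda$ of the positive operator $(R^k|_r)(R^k|_r)^*$, whose kernel is $\Ker(R^k|_r)^*$). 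In particular $R^k\ddot u\in\Ker(R^k|_r)^*$.

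The key point, and the place where normality is indispensable, is the identification of this kernel. Writing $N:=R|_r$, normality of $N$ makes $N^*$ and all its powers normal, so the stated facts that a normal operator shares its kernel with its powers and with its adjoint give $\Ker(R^k|_r)^*=\Ker((N^*)^k)=\Ker(N^*)=\Ker(N)=\Ker(R|_r)$. Hence $R^k\ddot u\in\Ker(R|_r)$, and since $R$ acts on $\Dor$ as $R|_r$, applying $R$ once more yields
\[
R^{k+1}\ddot u = (R|_r)\,(R^k\ddot u) = 0 \quad\text{on } \R^n,
\]
which is the assertion. I expect the main obstacle to be precisely this normality step: without it one only knows $R^k\ddot u=P\,R^k\dot u\in(\Ran(R^k|_r))^{\bot}$, a space that need not be annihilated by a further $R$; normality is exactly what forces this orthogonal complement to coincide with $\Ker(R|_r)$ and thus to be killed by one more application of $R$. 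One should also note that $R^k\ddot u$ itself generally does not vanish—that would require $R^k\dot u\in\Ran(R^k|_r)$, i.e.\ the on-shell condition \ref{itb}) of Theorem~\ref{thm:core} for $R^k$, which is exactly what may fail here—so the conclusion is genuinely about $R^{k+1}$ rather than $R^k$.
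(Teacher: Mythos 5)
Your proof is correct and takes essentially the same route as the paper's: you push $R^k$ through the polynomial to get $R^{k}\ddot u = p_r\big((R^k|_r)(R^k|_r)^*\big)R^k\dot u \in \Ker (R^k|_r)^*$, use normality of $R|_r$ to identify this kernel with $\Ker(R|_r)$, and annihilate it with one further application of $R$. Your extra verifications (that $R^k$ has essential order $0$, that $\ddot u$ is an extension with $\deg\ddot u = r$, and the explicit chain $\Ker(R^k|_r)^*=\Ker(R|_r)$) simply spell out what the paper leaves implicit, the last being exactly the remark $\ker(R|_r)=\ker(R|_r)^*=\ker(R|_r)^k$ stated before the proposition.
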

\proof One has 
\[
R^{k+1}\ddot u =R^{k+1} p_r\big((R^k|_r)^* R^k\big)\dot u= R p_r \big( R^k (R^k|_r)^* \big) R^k \dot u = R p_r \big( (R^k|_r) (R^k|_r)^* \big) R^k \dot u.
\]
 This vanishes because $R^k\dot u\in\Dor$ and $p_r \big( (R^k|_r) (R^k|_r)^* \big)$ is the orthogonal projection to $\ker (R^k|_r)^*$, which by normality of $R|_r$ equals $\ker (R|_r)$.\qed\\

An analogue of Proposition \ref{prop:linear} is available:

\begin{proposition}\label{prop:normal2}Let $R$ be of essential order $0$ and assume that for each $r\in\N_0$, $R|_r$ is normal and $R^{\rm t}$ maps polynomials of order $\leq r$ to elements of 
\[
\{ f\in C^{\infty}(\R^n): \ f^{(\alpha)}(0)=0, \ |\alpha|>r \}. 
\]
Let $\mathfrak{N}_R(\dot\R^n)$ be the space of all $u\in\cD'(\dot\R^n)$ with finite degree of divergence such that $R^k u=0$ on $\dot\R^n$ for some $k\in\N_0$. Let $\mathfrak{N}_R(\R^n)$ consist of all extensions of elements of $\mathfrak{N}_R(\dot\R^n)$ with no greater degree of divergence. The map
\[
\mathfrak{N}_R(\R^n)\ni \dot u \mapsto  p_r\big((R^k|_r)^* R^k\big) \dot u \in \cD'(\R^n)
\]
is linear, where $p_r$ is as in Proposition \ref{prop:normal} and $k$ is taken to be sufficiently high.
\end{proposition}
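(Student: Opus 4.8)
The plan is to mimic the proof of Proposition~\ref{prop:linear}: I reduce additivity to the statement that the prescription does not depend on its two auxiliary parameters (the polynomial degree and the power $k$) once they are large enough, and then use that, for \emph{fixed} parameters, $p_s\big((R^k|_s)^* R^k\big)$ is a genuine linear operator. Precisely, fix $\dot u\in\mathfrak N_R(\R^n)$ extending $u\in\mathfrak N_R(\dot\R^n)$, with $\deg\dot u=r$. The key claim is that
\[
\ddot u_{s,k}:=p_s\big((R^k|_s)^* R^k\big)\dot u
\]
is the same element of $\cD'(\R^n)$ for every $s\geq r$ and every $k$ with $R^k u=0$ on $\dot\R^n$ (here $p_s$ is the polynomial of Proposition~\ref{prop:normal} attached to $R^k$ and degree $s$). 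Granting this, given $\dot u_1,\dot u_2$ with $\deg\dot u_i=r_i$ I choose a common admissible $k$ and set $s=\max\{r_1,r_2\}$; since $\deg(\dot u_1+\dot u_2)\leq s$, the claim lets me compute all three images with the single linear operator $p_s\big((R^k|_s)^* R^k\big)$, so additivity follows, and compatibility with scalars is clear because scaling changes neither $r$ nor the admissible $k$.

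Two preparatory facts are needed. First, $R^k\dot u\in\Dor$: for $\varphi\in\cD(\dot\R^n)$ one has $\bra R^k\dot u,\varphi\ket=\bra u,(R^{\rm t})^k\varphi\ket=\bra R^k u,\varphi\ket=0$ because $R^{\rm t}$ preserves $\cD(\dot\R^n)$, so $R^k\dot u$ is supported at $0$; and $\deg R^k\dot u\leq\deg\dot u=r$ since $R$, hence $R^k$, has essential order $0$. Second, the finite-dimensional adjoints are compatible across different $s$: Lemma~\ref{lem:adjoint} applied to $R$, together with the hypothesis on $R^{\rm t}$, gives $(R|_s)^*|_{\Dor}=(R|_r)^*$ for $s\geq r$, and bootstrapping through $R^k|_s=(R|_s)^k$ and $(R^k|_s)^*=\big((R|_s)^*\big)^k$ yields $(R^k|_s)^*|_{\Dor}=(R^k|_r)^*$. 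Consequently every operator inside $\ddot u_{s,k}$ acts on $\Dor$ as $R^k|_r$ or $(R^k|_r)^*$, and since $p_s(0)=1$ we may write $\ddot u_{s,k}=\dot u+c_{s,k}$ with $c_{s,k}\in\Ran(R^k|_r)^*=(\Ker(R|_r))^{\bot}$ (the last equality by normality of $R^k|_r$, using also $\Ker(R^k|_r)=\Ker(R|_r)$). In particular $c_{s,k}\in\Dor$, so each $\ddot u_{s,k}$ is an extension of $u$ of degree $r$, and the target subspace $(\Ker(R|_r))^{\bot}$ is the same for all $s,k$.

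The core of the argument is then exactly as in Proposition~\ref{prop:normal}. Commuting $R^k$ to the right and using $R^k\dot u\in\Dor$ gives
\[
R^{k+1}\ddot u_{s,k}=R\,p_s\big((R^k|_s)(R^k|_s)^*\big)R^k\dot u=0,
\]
since $p_s\big((R^k|_s)(R^k|_s)^*\big)$ is the orthogonal projection onto $\Ker(R^k|_s)^*=\Ker(R|_s)$ (normality), which $R$ annihilates. Now compare two admissible pairs $(s,k)$ and $(s',k')$ and set $d:=\ddot u_{s,k}-\ddot u_{s',k'}$. By the previous paragraph $d\in(\Ker(R|_r))^{\bot}\cap\Dor$. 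On the other hand $R^{M}d=0$ for $M=\max\{k,k'\}+1$, so $d\in\Ker(R^{M}|_r)=\Ker(R|_r)$ by normality. Hence $d\in\Ker(R|_r)\cap(\Ker(R|_r))^{\bot}=\{0\}$, which establishes the claim and with it the proposition.

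The step I expect to be the main obstacle is the adjoint compatibility: the polynomial-preservation hypothesis is imposed on $R^{\rm t}$, but it is needed for the whole power $(R^k)^{\rm t}$, whose transpose need not map polynomials into the space~\eqref{eq:rspace}. This is why I do not apply Lemma~\ref{lem:adjoint} to $R^k$ directly, but route the statement through the identity $(R^k|_s)^*=\big((R|_s)^*\big)^k$. The only other point requiring care is that both $p_s$ and $(R^k|_s)^*$ vary with the parameters; one uses solely that $p_s(0)=1$ together with normality, so that the two obstructions $\Ker(R|_r)$ and $\Ran(R^k|_r)^*$ are mutually orthogonal.
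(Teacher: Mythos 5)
Your proof is correct and takes essentially the same route as the paper's (very terse) proof: you show the candidate extension is independent of the auxiliary parameters $(r,k)$ by placing the difference of two candidates simultaneously in $\Ran(R^k|_r)^*$ and in $\ker(R|_r)=(\Ran(R^k|_r)^*)^{\bot}$ (using normality of $R|_r$), and then deduce additivity from the linearity of a single fixed-parameter operator. Your derivation of the adjoint compatibility via $(R^k|_s)^*=\big((R|_s)^*\big)^k$ restricted to $\mathcal{D}'(\{0\})_{\leq r}$ --- rather than applying Lemma \ref{lem:adjoint} directly to $R^k$, whose transpose is not covered by the polynomial hypothesis --- correctly fills in a detail the paper leaves implicit.
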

\proof Let $r=\deg\dot u$ and let $k\in\N_0$ be such that $R^k u=0$. Analogously to the proof of Proposition \ref{prop:normal}, one shows that for any $\dot u\in\mathfrak{N}_R(\R^n)$, $r'\geq r$ and $k'\geq k$,
\[
\left[p_{r'}\big((R^{k'}|_{r'})^* R^{k'}\big)-p_r\big((R^k|_r)^* R^k\big)\right]\dot u\in \Ran(R^k)^*
\]
and that this expression also belongs to $\ker(R|_r)=\ker(R|_r)^k=(\ran(R^k|_r)^*)^{\bot}$, hence vanishes.\qed\\

In the case when on-shell extensions for $R$ exist, they can also be obtained as follows using the resolvent of $R|_r$.

\begin{proposition}Let $R$ be an operator of essential order $0$. Assume $u\in\cD'(\dot\R^n)$ has degree of divergence $r<\infty$, satisfies $Ru=0$ and has on-shell extensions. If $R|_r$ is normal then
\[
\ddot u :=\lim_{\varepsilon\to 0}\big(1-\left[(R-\i\varepsilon)|_r\right]^{-1}R\big) \dot u
\]
is an on-shell extension, where $\dot u$ is an arbitrary extension of $u$ to $\R^n$ with the same degree of divergence.
\end{proposition}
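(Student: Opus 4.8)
The plan is to use the finite–dimensional spectral calculus of the normal operator $R|_r$ on the Hilbert space $(\Dor,(\cdot|\cdot)_r)$, and to trace exactly where the $\varepsilon\to 0$ limit could fail. First I would observe that, since $Ru=0$ on $\dot\R^n$, the distribution $R\dot u$ is supported at the origin, and because $R$ has essential order $0$ one has $\deg(R\dot u)\le\deg\dot u=r$; hence $w:=R\dot u\in\Dor$. As $R-\i\varepsilon$ is again of essential order $0$, its restriction to $\Dor$ is simply $R|_r-\i\varepsilon$, which is invertible for all small $\varepsilon\neq 0$ (the spectrum $\sp(R|_r)$ is finite, so for $|\varepsilon|$ small one has $\i\varepsilon\notin\sp(R|_r)$). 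Thus $[(R-\i\varepsilon)|_r]^{-1}R\dot u=(R|_r-\i\varepsilon)^{-1}w$ is a well-defined element of $\Dor$, and $\ddot u_\varepsilon:=\dot u-(R|_r-\i\varepsilon)^{-1}w$ is an extension of $u$ with $\deg\ddot u_\varepsilon=r$.

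Next I would bring in normality. Writing $R|_r=\sum_\lambda\lambda P_\lambda$ with $P_\lambda$ the mutually orthogonal eigenprojections guaranteed by normality, one gets $(R|_r-\i\varepsilon)^{-1}w=\sum_\lambda(\lambda-\i\varepsilon)^{-1}P_\lambda w$. The crucial input is the assumption that $u$ admits on-shell extensions: by the equivalence of statements (a) and (c) in Theorem \ref{thm:core} (applied with $Q=R$), this forces $w=R\dot u\in\ran(R|_r)$. Since $R|_r$ is normal, $\ker(R|_r)=\ker(R|_r)^*$, so $\ran(R|_r)=(\ker(R|_r)^*)^\bot=(\ker(R|_r))^\bot$, which is exactly the range of $1-P_0$, where $P_0$ denotes the orthogonal projection onto $\ker(R|_r)$. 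Hence $P_0 w=0$.

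Finally, the $\lambda=0$ term in $\sum_\lambda(\lambda-\i\varepsilon)^{-1}P_\lambda w$ — the only term that blows up as $\varepsilon\to0$ — carries the coefficient $P_0 w=0$ and drops out, so the limit exists and $\ddot u=\dot u-\sum_{\lambda\neq0}\lambda^{-1}P_\lambda w$. Applying $R$, which on $\Dor$ acts as $R|_r$, gives $R\ddot u=w-\sum_{\lambda\neq0}\lambda\cdot\lambda^{-1}P_\lambda w=w-(1-P_0)w=P_0 w=0$, so $\ddot u$ is indeed an on-shell extension (and, being $\dot u$ minus an element of $\Dor$, it restricts to $u$ on $\dot\R^n$ with $\deg\ddot u=r$). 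The one step I expect to be a genuine obstacle is establishing $P_0 w=0$, equivalently the existence of the limit: this is where both hypotheses are used in an essential way — the on-shell assumption, via Theorem \ref{thm:core}, to place $w$ in $\ran(R|_r)$, and normality, to identify $\ran(R|_r)$ with $(\ker R|_r)^\bot$ so that the singular spectral component of $w$ is annihilated. The remainder is routine finite-dimensional bookkeeping.
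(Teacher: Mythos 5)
Your proposal is correct and takes essentially the same approach as the paper: both arguments use Theorem \ref{thm:core} to place $R\dot u\in\ran(R|_r)$ and normality to identify $\ran(R|_r)=(\ker(R|_r)^*)^{\bot}=(\ker R|_r)^{\bot}$, so that the singular spectral component is annihilated --- the paper phrases this by noting that $(-\i\varepsilon)\left[(R-\i\varepsilon)|_r\right]^{-1}$ converges to the orthogonal projection onto $\ker R|_r$, while you expand $R|_r$ in its eigenprojections. Your write-up is in fact slightly more complete, since you explicitly establish the existence of the $\varepsilon\to0$ limit (via $P_0 R\dot u=0$) and verify that $\ddot u$ is an extension with $\deg\ddot u=r$, two points the paper's proof leaves implicit.
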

\proof By continuity of $R$,
\begin{eqnarray*}
R\ddot u=R\lim_{\varepsilon\to 0}\big(1-\left[(R-\i\varepsilon)|_r\right]^{-1}R\big) \dot u=\lim_{\varepsilon\to 0}\big(1-R\left[(R-\i\varepsilon)|_r\right]^{-1}\big)R \dot u\\=\lim_{\varepsilon\to 0}\big(1-(R|_r)\left[(R-\i\varepsilon\right)|_r]^{-1}\big)R\dot u=\lim_{\varepsilon\to 0}(-\i\varepsilon)\left[(R-\i\varepsilon\right)|_r]^{-1}R\dot u.
\end{eqnarray*}
The operator $(-\i\varepsilon)\left[(R-\i\varepsilon\right)|_r]^{-1}$ converges to the orthogonal projection to $\Ker R|_r=\Ker (R|_r)^*$ and by assumption $R\dot u\in\Ran R|_r$, therefore the above expression vanishes.
\qed\\

Let us now examine the special case when $R|_r$ is self-adjoint. Then, the operator $p_r\big((R|_r)^*R\big)$ in Theorem \ref{thm:core} can be replaced by $p_r\big( R^2 \big)$. The gain is that $p_r\big(R^2 \big)$ makes sense as an operator acting on arbitrary distributions. Even better, it is well defined as an operator from $\cD'_{\cK}(\R^n)$ to $\cD'(\R^n)$, where $\cK=\Ker(R|_r)$. This fact can be used to construct \emph{directly} an on-shell extension $\ddot u$, that is without referring to some generic extension $\dot u$. In the following proposition, the  polynomial $p_r$  (defined using the eigenvalues of $R^2$) is replaced by a polynomial $b_r$ defined using the eigenvalues of $R$, which makes the formulae slightly more compact.
 
\begin{proposition}\label{prop:direct}Let $R$ be of essential order $0$ and assume $(R|_r)^*=R|_r$. Assume $u\in\cD'(\dot\R^n)$ has degree of divergence $r<\infty$ and satisfies $Ru=0$ on $\dot\R^n$. Suppose that $u$ has an on-shell extension for $R$. Denote $\cK=  \Ker(R|_r)$ and let $W^{\rm t}:\cD(\R^n)\to \cD_{\cK}(\R^n)$ be a projection. Set 
\beq
\ddot u := W b_r(R) \tilde u,
\eeq
where $\tilde u$ is the unique extension of $u$ in $\cD'_r(\R^n)$ with $\deg \tilde u =\deg u$, and $b_r$ is the polynomial  
\[
b_r(z)=\textstyle\prod_{\lambda\in\sp R|_r\setminus\{0\}}(1-z/\lambda).
\]
Then $\ddot u$  is an extension of $u$ in $\cD'(\R^n)$ with $\deg \ddot u =\deg u $ and $R \ddot u=0$ on $\R^n$.
\end{proposition}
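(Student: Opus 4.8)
The plan is to verify the three asserted properties of $\ddot u = W b_r(R)\tilde u$ in turn, reducing the operator identity $R\ddot u = 0$ to a single finite-dimensional projection statement. The finite-dimensional input I would record first is that, since $R|_r$ is self-adjoint with $b_r(0)=1$ and $b_r(\lambda)=0$ for every nonzero $\lambda\in\sp(R|_r)$, the spectral decomposition shows that $b_r(R|_r)$ is \emph{exactly} the orthogonal projection of $\Dor$ onto $\cK=\Ker(R|_r)$. Before using $W$, I would check that $\ddot u$ is well defined, i.e. that $b_r(R)\tilde u$ lies in $\cD'_{\cK}(\R^n)$. For this it suffices that $b_r(R^{\rm t})$ maps $\cD_{\cK}(\R^n)$ into $\cD_r(\R^n)$, the domain of $\tilde u$: for $\varphi\in\cD_{\cK}(\R^n)$ and $v\in\Dor$ one has $\bra v, b_r(R^{\rm t})\varphi\ket = \bra b_r(R|_r)v,\varphi\ket$, and $b_r(R|_r)v\in\cK$ by the projection property, so this pairing vanishes and hence $b_r(R^{\rm t})\varphi\in\cD_r(\R^n)$.

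Next I would establish that $\ddot u$ extends $u$ with $\deg\ddot u=r$. Since $R^{\rm t}$ preserves $\cD(\dot\R^n)$, the operator $R$ commutes with restriction to $\dot\R^n$; hence on $\dot\R^n$ one has $b_r(R)\tilde u = b_r(R)u = u$, because $Ru=0$ leaves only the constant term $b_r(0)=1$. As $\cD(\dot\R^n)\subseteq\cD_{\cK}(\R^n)$, Lemma \ref{lem:Wpreserves} gives $\bra\ddot u,\varphi\ket=\bra b_r(R)\tilde u,\varphi\ket=\bra u,\varphi\ket$ for $\varphi\in\cD(\dot\R^n)$, i.e. $\ddot u=u$ on $\dot\R^n$. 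The same lemma yields $\deg\ddot u=\deg b_r(R)\tilde u$, and since $R$ has essential order $0$ this is $\leq\deg\tilde u=r$; together with $\deg\ddot u\geq\deg u=r$ (as $\ddot u$ extends $u$) this gives $\deg\ddot u=r$.

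The main step, and the main obstacle, is $R\ddot u=0$, the difficulty being that $R$ and $W$ do not commute. The device I would use is to pass $R$ through $W$ via the transpose. Fix an arbitrary extension $\dot u\in\cD'(\R^n)$ of $u$ with $\deg\dot u=r$; by uniqueness in Proposition \ref{lemmarenorm}, $\dot u=\tilde u$ on $\cD_r(\R^n)$, so $b_r(R)\dot u$ and $b_r(R)\tilde u$ agree on $\cD_{\cK}(\R^n)$ (both equal $\bra\tilde u, b_r(R^{\rm t})\,\cdot\,\ket$ there, as $b_r(R^{\rm t})$ lands in $\cD_r$). By Lemma \ref{lemmaR}, $R^{\rm t}$ maps $\cD(\R^n)$ into $\cD_{\ker(R|_{\Do})}(\R^n)\subseteq\cD_{\cK}(\R^n)$ (using $\ker(R|_{\Do})\supseteq\cK$), so for every $\varphi\in\cD(\R^n)$, Lemma \ref{lem:Wpreserves} and the agreement on $\cD_{\cK}$ give
\[
\bra R\ddot u,\varphi\ket=\bra b_r(R)\dot u, R^{\rm t}\varphi\ket=\bra b_r(R)(R\dot u),\varphi\ket.
\]
Thus $R\ddot u=b_r(R)(R\dot u)$ as a genuine distribution. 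Now $R\dot u$ is supported at $0$ with $\deg(R\dot u)\leq r$, hence $R\dot u\in\Dor$ and $b_r(R)$ acts on it as $b_r(R|_r)$. Since $u$ has an on-shell extension, Theorem \ref{thm:core} gives $R\dot u\in\ran(R|_r)=(\Ker R|_r)^{\bot}$ by self-adjointness, which the orthogonal projection $b_r(R|_r)$ annihilates. Therefore $R\ddot u=0$ on $\R^n$.

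I expect the genuinely delicate points to be bookkeeping rather than conceptual: justifying that $R$ commutes with restriction (from $R^{\rm t}$ preserving $\cD(\dot\R^n)$), that the pairings live on the correct domains $\cD_r$ and $\cD_{\cK}$, and the inclusion $\ker(R|_{\Do})\supseteq\cK$ needed so that $R^{\rm t}\varphi\in\cD_{\cK}$. Once these are secured, the self-adjointness of $R|_r$ together with the existence hypothesis (through Theorem \ref{thm:core}) collapses everything to the single finite-dimensional fact that the orthogonal projection onto $\Ker(R|_r)$ kills $\ran(R|_r)$.
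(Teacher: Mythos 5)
Your proof is correct, and it reaches the conclusion by a somewhat different, more direct route than the paper. The paper's own proof never tests against individual test functions: it writes $W^{\rm t}\varphi=\varphi-\sum_i\bra w_i,\varphi\ket\phi_i$, constructs an auxiliary projection $V^{\rm t}\varphi:=\varphi-\sum_i\bra v_i,\varphi\ket b_r(R^{\rm t})\phi_i$ onto $\cD_{\cK^{\bot}}(\R^n)$ (choosing $v_i\in\cK$ with $b_r(R)v_i=w_i$), proves the intertwining identity $V^{\rm t}b_r(R^{\rm t})=b_r(R^{\rm t})W^{\rm t}$, and thereby rewrites $\ddot u=Wb_r(R)\tilde u=b_r(R)ZV\tilde u$ for a further projection $Z^{\rm t}$; since $V^{\rm t}Z^{\rm t}$ maps $\cD(\R^n)$ into $\cD_r(\R^n)$ (by part \ref{lemmaR3} of Lemma \ref{lemmaR}, the same lemma you invoke), $\dot u:=ZV\tilde u$ is an honest extension of degree $r$ and Theorem \ref{thm:core}, with $b_r(R)$ playing the role of $p_r\big((R|_r)^*R\big)=p_r(R^2)$, finishes the proof. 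You instead establish the distributional identity $R\ddot u=b_r(R)(R\dot u)$ for an \emph{arbitrary} degree-$r$ extension $\dot u$ by pairing with $\varphi\in\cD(\R^n)$, using $R^{\rm t}\varphi\in\cD_{\cK}(\R^n)$ together with Lemma \ref{lem:Wpreserves} and the agreement of $b_r(R)\dot u$ with $b_r(R)\tilde u$ on $\cD_{\cK}(\R^n)$; the final finite-dimensional collapse --- $R\dot u\in\ran(R|_r)=\big(\Ker(R|_r)\big)^{\bot}$ from the on-shell hypothesis via Theorem \ref{thm:core}, annihilated by the orthogonal projection $b_r(R|_r)$ --- is then identical in substance to the paper's. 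Your route buys a cleaner argument (no $V$, no $Z$, no choice of the $v_i$ ``for dimensional reasons'') at the cost of more pairing bookkeeping, and all your domain checks ($b_r(R^{\rm t}):\cD_{\cK}(\R^n)\to\cD_r(\R^n)$, the inclusion $\cK\subseteq\ker(R|_{\Do})$, the agreement of $\dot u$ with $\tilde u$ on $\cD_r(\R^n)$) are sound.

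One step deserves a caveat: you bound $\deg b_r(R)\tilde u\leq\deg\tilde u$ by appealing to the essential order of $R$, but that bound is defined by quantifying over $u\in\cD'(\R^n)$, and the scaling degree on $\cD'_{\cK}(\R^n)$ is delicate (for general $\cK$ the space $\cD_{\cK}(\R^n)$ need not even be dilation invariant, although the paper glosses over this too). This is easily repaired inside your own argument: since $\ddot u$ and $b_r(R)\dot u$ agree on $\cD_{\cK}(\R^n)\supseteq\cD(\dot\R^n)$, their difference is an element of $\Do$ annihilating $\cD_{\cK}(\R^n)$, hence lies in $\cK\subseteq\Dor$; thus $\ddot u=b_r(R)\dot u+k$ with $\deg\big(b_r(R)\dot u\big)\leq r$ (essential order $0$ applied to the genuine distribution $\dot u\in\cD'(\R^n)$) and $\deg k\leq r$, which together with $\deg\ddot u\geq\deg u=r$ gives $\deg\ddot u=r$; and since $Rk=0$, this decomposition also re-derives your identity $R\ddot u=b_r(R)(R\dot u)$.
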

\proof The projection $W^{\rm t}$ can be written as
\[
W^{\rm t}\varphi=\varphi-\sum_{i\in \cI}\bra w_i,\varphi \ket \phi_i,
\]
where $\{w_i\}_{i\in\cI}$ is a basis of $\cK=\ker (R|_{r})=\ran (b_r(R|_r))$ and $\{\phi_i\}_{i\in\cI}$ are elements of $\cD(\R^n)$ such that $\bra w_i,\phi_j\ket=\delta_{ij}$. Let us choose $\{v_i\}_{i\in\cI}$, $v_i\in\cK$ in such way that $b_r(R)v_i=w_i$ (which is always possible for dimensional reasons). Set
\[
V^{\rm t}\varphi:=\varphi-\sum_{i\in \cI}\bra v_i,\varphi \ket b_r(R^{\rm t})\phi_i
\]
Since $\bra v_i, b_r(R^{\rm t})\phi_j\ket=\bra b_r(R) v_i, \phi_j\ket=\delta_{ij}$, $V^{\rm t}:\cD(\R^n)\to\cD_{\cK^{\bot}}(\R^n)$ is a projection and a short computation gives $V^{\rm t}b_r(R^{\rm t})=b_r(R^{\rm t})W^{\rm t}$. Since $b_r(R^{\rm t})$ maps $\cD(\R^n)$ to $\cD_{\cK^{\bot}}(\R^n)$, we have $Z^{\rm t}b_r(R^{\rm t})=b_r(R^{\rm t})$ for any projection $Z^{\rm t}:\cD(\R^n)\to\cD_{\cK^{\bot}}(\R^n)$. Thus,
\[
\ddot u = W b_r(R)\tilde u = b_r(R) Z V \tilde u.
\]
By \ref{lemmaR3}. of Lemma \ref{lemmaR}, $V^{\rm t}$ maps $\cD_{\cK^{\bot}}(\R^n)$ to $\cD_r(\R^n)$, hence $V^{\rm t}Z^{\rm t}$ maps $\cD(\R^n)$ to $\cD_r(\R^n)$. It follows that $\dot u:=Z V \tilde u$ is an element of $\cD'(\R^n)$ and $\ddot u=b_r(R)\dot u$. By Theorem \ref{thm:core} (with $b_r(R)$ playing the role of $p_r(R^2)$), $\ddot u$ is an element of $\cD'(\R^n)$ with the required properties.
\qed\\

In the next theorem we give several conditions on $R$ which ensure the existence of on-shell extensions for all degrees of divergence $r\geq 0$. Note that these conditions are of rather different nature.

\begin{theorem}\label{thm:criterion}Let $R$ be an operator of essential order $0$, let $u\in\cD'(\dot\R^n)$ have degree of divergence $r<\infty$ and assume $Ru=0$ on $\dot\R^n$. Assume at least one of the following holds:
\begin{enumerate}    \renewcommand*\labelenumi{\theenumi)}
\item\label{ipa} $\ker (R|_r)=\{0\}$;
\item\label{ipb} $R^{\rm t}$ maps polynomials of degree $\leq r$ to polynomials of degree $\leq r$ and $\supp\, u$ is compact;
 \item\label{ipc} $R^{\rm t}$ maps polynomials of degree $\leq r$ to polynomials of degree $\leq r$ and there exist $\psi,\phi\in\cD(\R^n)$ s.t. $\psi\equiv 1$ in a neighbourhood of $0$ and $R^{\rm t}\psi=\phi R^{\rm t}$;
\end{enumerate}
Then $u$ admits an on-shell extension, i.e. an extension $\ddot u\in\cD'(\R^n)$ s.t. $\deg\ddot u = r$ and  $R\ddot u=0$ on $\R^n$.\end{theorem}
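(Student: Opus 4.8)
The plan is to reduce all three cases, via Theorem~\ref{thm:core}, to the single statement that $R\dot u\in\ran(R|_r)$ for a suitable extension $\dot u$ of $u$ with $\deg\dot u=r$. First I would fix such an extension (these exist by the extension result of Section~\ref{sec:preliminary}) and note that $R\dot u\in\Dor$: for any $\varphi\in\cD(\dot\R^n)$ one has $\bra R\dot u,\varphi\ket=\bra\dot u,R^{\rm t}\varphi\ket=\bra u,R^{\rm t}\varphi\ket=\bra Ru,\varphi\ket=0$, since $R^{\rm t}$ preserves $\cD(\dot\R^n)$, so $R\dot u$ is supported at $\{0\}$; and $\deg R\dot u\leq\deg\dot u=r$ because $R$ has essential order $0$. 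By the equivalence of \ref{ita}) and \ref{itb}) in Theorem~\ref{thm:core} it then suffices to prove $R\dot u\in\ran(R|_r)$. In case \ref{ipa}) this is immediate: $R|_r$ is an endomorphism of the finite-dimensional space $\Dor$ (again because $R$ has essential order $0$), so $\ker(R|_r)=\{0\}$ forces surjectivity and $\ran(R|_r)=\Dor\ni R\dot u$.

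For cases \ref{ipb}) and \ref{ipc}) the obstruction is genuine, and I would test $R\dot u$ against $\ker((R|_r)^*)$, using $\ran(R|_r)=(\ker(R|_r)^*)^{\bot}$ in the scalar product $(\cdot|\cdot)_r$. From the formula $(R|_r)^*=\cT_r\overline{R}^{\rm t}\cS_r$ of Lemma~\ref{lem:adjoint}, together with the standing hypothesis that $R^{\rm t}$ maps polynomials of degree $\leq r$ to polynomials of degree $\leq r$ (which makes $\cT_r$ act injectively on the relevant range), one sees that $w\in\ker((R|_r)^*)$ if and only if the polynomial $P:=\cS_r\bar w$ of degree $\leq r$ satisfies $R^{\rm t}P=0$. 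Since
\[
(w|R\dot u)_r=\bra R\dot u,\cS_r\bar w\ket=\bra R\dot u,P\ket,
\]
the whole matter reduces to showing that $\bra R\dot u,P\ket=0$ whenever $R^{\rm t}P=0$.

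The hard part — and the reason conditions \ref{ipb}) and \ref{ipc}) are imposed — is that the desired identity $\bra R\dot u,P\ket=\bra\dot u,R^{\rm t}P\ket$ is not automatic, since $P$ is a polynomial and hence not a compactly supported test function. In case \ref{ipc}) I would use that $R\dot u$ is supported at $\{0\}$ and $\psi\equiv1$ near $0$ to replace $P$ by $\psi P\in\cD(\R^n)$ without altering the pairing, and then invoke $R^{\rm t}\psi=\phi R^{\rm t}$:
\[
\bra R\dot u,P\ket=\bra R\dot u,\psi P\ket=\bra\dot u,R^{\rm t}(\psi P)\ket=\bra\dot u,\phi\,R^{\rm t}P\ket=0.
\]
In case \ref{ipb}) I would instead choose $\dot u$ with compact support — possible because $\supp u$ is compact and the standard $W$-construction only adds terms supported at $0$ — so that $R$, as the transpose of $R^{\rm t}:\cf(\R^n)\to\cf(\R^n)$, extends to compactly supported distributions; the pairing $\bra R\dot u,P\ket=\bra\dot u,R^{\rm t}P\ket$ is then legitimate for the smooth function $P$ and vanishes by $R^{\rm t}P=0$. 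In every case $R\dot u\in\ran(R|_r)$, and Theorem~\ref{thm:core} yields the on-shell extension, concretely $\ddot u=\dot u-v$ where $v\in\Dor$ is chosen with $R|_r v=R\dot u$.
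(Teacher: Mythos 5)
Your proposal is correct and follows essentially the same route as the paper's proof: case \ref{ipa}) via surjectivity of $R|_r$ on $\Dor$, and cases \ref{ipb}), \ref{ipc}) by showing $(w|R\dot u)_r=0$ for $w\in\ker (R|_r)^*$, identifying such $w$ (through $(R|_r)^*=\cT_r\overline{R}^{\rm t}\cS_r$ and the polynomial-preservation hypothesis) with polynomials $P$ of degree $\leq r$ satisfying $R^{\rm t}P=0$, then justifying the pairing $\bra R\dot u,P\ket=\bra\dot u,R^{\rm t}P\ket$ via compact support in case \ref{ipb}) and via the cutoff identity $R^{\rm t}\psi=\phi R^{\rm t}$ together with $\supp R\dot u\subseteq\{0\}$ in case \ref{ipc}). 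The only cosmetic difference is that you invoke Lemma \ref{lem:adjoint} to characterize $\ker(R|_r)^*$ where the paper derives $\overline{R^{\rm t}}\cS_r v=0$ directly from the orthogonality relation; these are equivalent.
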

\proof \ref{ipa}) If $\ker (R|_r)=\{0\}$ then $\ran (R|_r)=\Dor$ and \ref{itc}) of Theorem \ref{thm:core} is trivially satisfied.

\ref{ipb}) Let $\dot u\in\cD'(\R^n)$ be an arbitrary extension of $u$ with the same degree of divergence. We want to show  that \ref{itc}) of Theorem \ref{thm:core} is satisfied, or equivalently
\beq\label{eq:toshow}
(R\dot u|v)_r=0 \quad \forall\,v\in(\Ran (R|_r))^{\bot}.
\eeq
Using that $u$ is compactly supported, we obtain
\begin{eqnarray*}
(R\dot u|v)_r=\bra \overline{ R \dot u},\cS_r v \ket=\bra \overline{  \dot u},\overline{R^{\rm t}}\cS_r v \ket.
\end{eqnarray*}
To show that the expression above vanishes, let us remark that $v\in\Ran (R|_r))^{\bot}$ means 
\[
0=(Rw| v)_r = \bra \overline{R w}, \cS_r  v \ket=\bra\overline{w}, \overline{R^{\rm t}} \cS_r  v \ket \quad \forall w\in\Dor,
\]
which  implies $\overline{R^{\rm t}} \cS_r  v=0$.

\ref{ipc}) Let us show that (\ref{eq:toshow}) holds. We have 
\begin{eqnarray*}
(R\dot u|v)_r=\bra \overline{ R \dot u},\cS_r v \ket=\bra \overline{ R \dot u},\overline{\psi}\cS_r v \ket=\bra\overline{\dot u},\overline{R}^{\rm t} \overline{\psi}\cS_r v\ket
=\bra\overline{\dot u},\overline{\phi}\overline{R}^{\rm t} \cS_r v\ket,
\end{eqnarray*}
where in the second equality we used that $R \dot u$ is supported at $\{0\}$. The expression above vanishes because as previously, $\overline{R}^{\rm t} \cS_r v=0$.
\qed\\

Let us emphasize that if condition \ref{ipa}) holds then the on-shell extension $\du$ is unique.

Condition \ref{ipc}) is satisfied if $R$ is for instance one of the infinitesimal generators of rotations.

\subsubsection{Example --- homogeneous and almost homogeneous distributions}\label{sec:homogeneous}

The canonical example for extension of singular distributions are homogeneous distributions. We will now show that the known results on extensions of homogeneous or almost homogeneous distributions which appear in renormalisation are easily recovered in our approach .

\begin{proposition} (\cite{hoermander1}, Thm 3.2.3) \label{prophext}Let $u\in\cD'(\dot\R^n)$ be homogeneous of degree $a\in\C$, i.e.
\[
(\textstyle\sum_{i=1}^n x_i\partial_i - a) u = 0 \ \ \ \mbox{ on \ } \dot\R^n.
\]
If $-\Re a\notin\N_0+n$ then $u$ has a unique homogeneous extension $\dot u\in\cD'(\R^n)$ (i.e. $(\sum_{i=1}^n x_i\partial_i - a) \dot u=0$ on $\R^n$).
\end{proposition}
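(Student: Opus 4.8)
The plan is to recognise this statement as the on-shell extension problem for the single operator
\[
R:=\sum_{i=1}^n x_i\partial_i - a ,
\]
and to read off the conclusion from Theorem~\ref{thm:criterion}. First I would note that $R$ is a first-order differential operator whose first-order coefficients $a_{e_i}(x)=x_i$ vanish at the origin (the constant term $-a$ does not affect the essential order), so by the characterisation of the essential order of a differential operator recalled in Section~\ref{sec:essord}, $R$ has essential order $0$. The homogeneity equation is exactly the hypothesis $Ru=0$ on $\dot\R^n$, and an extension $\dot u\in\cD'(\R^n)$ is homogeneous of degree $a$ if and only if $R\dot u=0$ on $\R^n$; thus the homogeneous extensions of $u$ are precisely the on-shell extensions w.r.t.\ $R$. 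Moreover, being homogeneous of degree $a$, the distribution $u$ has finite degree of divergence $r=-\Re a-n$, so the framework applies.

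Next I would compute the restriction $R|_r:\Dor\to\Dor$ (here $q=0$). Since $\delta^{(\alpha)}$ is homogeneous of degree $-n-|\alpha|$, one has $\sum_{i} x_i\partial_i\,\delta^{(\alpha)}=(-n-|\alpha|)\delta^{(\alpha)}$, and therefore
\[
R\,\delta^{(\alpha)}=(-n-|\alpha|-a)\,\delta^{(\alpha)} .
\]
Hence $R|_r$ is diagonal in the basis $\{\delta^{(\alpha)}\}_{|\alpha|\leq r}$ of $\Dor$, with eigenvalues $-n-|\alpha|-a$. In particular $\ker(R|_r)=\{0\}$ if and only if $-n-|\alpha|-a\neq 0$ for every multi-index with $|\alpha|\leq r$.

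Finally I would bring in the hypothesis $-\Re a\notin\N_0+n$. If some eigenvalue vanished, i.e.\ $-a=n+|\alpha|$, then taking real parts would give $-\Re a=n+|\alpha|\in\N_0+n$, a contradiction; hence $\ker(R|_r)=\{0\}$ (this is automatic when $r<0$, for then $\Dor=\{0\}$). Condition \ref{ipa}) of Theorem~\ref{thm:criterion} is therefore satisfied, which produces an extension $\ddot u\in\cD'(\R^n)$ with $\deg\ddot u=r$ and $R\ddot u=0$, i.e.\ a homogeneous extension. Uniqueness follows from the remark after that theorem: two homogeneous extensions of degree of divergence $r$ differ by some $v\in\Dor$ with $Rv=0$, so $v\in\ker(R|_r)=\{0\}$.

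The only point requiring genuine care — and really the crux of the argument — is the identification of $R|_r$ together with its spectrum $\{-n-|\alpha|-a\}$; once this is in place, the non-vanishing of all eigenvalues under the stated arithmetic condition and the invocation of Theorem~\ref{thm:criterion} are immediate. Since the proposition merely recovers Hörmander's classical result within the present formalism, I do not expect any substantive obstacle beyond this eigenvalue computation.
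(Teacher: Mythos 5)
Your proposal is correct and takes essentially the same route as the paper: both recognise $R=\sum_{i=1}^n x_i\partial_i-a$ as an operator of essential order $0$, compute $R|_r\delta^{(\alpha)}=-(|\alpha|+n+a)\,\delta^{(\alpha)}$ so that $-\Re a\notin\N_0+n$ forces $\ker(R|_r)=\{0\}$, and then invoke condition \ref{ipa}) of Theorem~\ref{thm:criterion}, with uniqueness following because two extensions of the same degree of divergence differ by an element of $\ker(R|_r)$. The only cosmetic difference is that the paper packages the eigenvalue check as $\left|\det R|_r\right|=\left|\prod_{|\alpha|\leq -\Re a-n}(|\alpha|+n+a)\right|\neq 0$, whereas you argue eigenvalue by eigenvalue.
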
 \proof Clearly, $R:=\sum_{i=1}^n x_i\partial_i - a$ is an operator of essential order $0$. By Theorem \ref{thm:criterion}, for an on-shell extension to exist it is sufficient that $\ker (R|_{r})=\{0\}$  for all $r\in\N_0$ (and in such case it is unique). Since $R|_{r}\delta^{(\alpha)}=R\delta^{(\alpha)}=-(|\alpha|+n+a)\delta^{(\alpha)}$, we have
\[
\left|\det R|_{r}\right| = \left| \textstyle\prod_{|\alpha|\leq -\Re a-n } (|\alpha|+n+a)\right|,
\]
so that $-\Re a\notin\N_0+n$ entails $\left|\det R|_{r}\right|\neq 0$.
\qed\\

The following proposition concerns distributions of generalized homogeneity. It is a variant of \cite[Lem. 4.1]{hollwald}, see also \cite[Prop. 4]{DF04}) and \cite[Cor. I.15]{keller}.

\begin{proposition} Let $a_1, \dots, a_k \in \C$. Let $u\in\cD'(\dot\R^n)$ and assume $\prod_{j=1}^k R(a_j)u = 0$ on $\dot\R^n$, where $R(a):=(\sum_{i=1}^n x_i\partial_i - a)$. If $-\Re a_j\notin\N_0+n$, then $u$ has a unique extension $\ddot u\in\cD'(\R^n)$ s.t. $\prod_{j=1}^k R(a_j) \ddot u = 0$ on $\R^n$.
 \end{proposition}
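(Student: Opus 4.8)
The plan is to recognize the whole product as a single operator of essential order $0$ and then invoke Theorem~\ref{thm:criterion}. First I would set $Q:=\prod_{j=1}^k R(a_j)$, noting that the factors commute (they are the Euler field shifted by constants), so the ordering is irrelevant. Each $R(a_j)=\sum_{i=1}^n x_i\partial_i-a_j$ is an operator of essential order $0$: the dilation generator $\sum_i x_i\partial_i$ has coefficients vanishing at $0$ and is the infinitesimal generator of a Lie group fixing the origin, and subtracting the constant $a_j$ does not affect this. Since the defining estimate composes, $\deg(Q_1Q_2 u)\le\deg(Q_2u)\le\deg u$, a product of essential-order-$0$ operators is again of essential order $0$; hence $Q$ is of essential order $0$ and, by hypothesis, $Qu=0$ on $\dot\R^n$. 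This puts us exactly in the setting of Theorem~\ref{thm:criterion}, taking the operator $R$ there to be $Q$.

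Next I would compute $Q|_r:\Dor\to\Dor$ explicitly. As in the proof of Proposition~\ref{prophext}, $R(a)\delta^{(\alpha)}=-(|\alpha|+n+a)\delta^{(\alpha)}$, so the basis $\{\delta^{(\alpha)}\}_{|\alpha|\le r}$ simultaneously diagonalizes every factor and
\[
Q|_r\,\delta^{(\alpha)}=\Big(\textstyle\prod_{j=1}^k-(|\alpha|+n+a_j)\Big)\delta^{(\alpha)},\qquad |\alpha|\le r.
\]
Thus $\sp(Q|_r)$ consists of these products, and $\ker(Q|_r)=\{0\}$ precisely when none of them vanishes.

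Then I would check that the hypothesis $-\Re a_j\notin\N_0+n$ forces all eigenvalues to be nonzero. A single factor $-(|\alpha|+n+a_j)$ vanishes only if $a_j=-(|\alpha|+n)$; taking real parts this would give $-\Re a_j=|\alpha|+n\in\N_0+n$, contradicting the assumption. Hence $\ker(Q|_r)=\{0\}$ for every $r$, and condition~\ref{ipa}) of Theorem~\ref{thm:criterion} applies, producing an extension $\ddot u\in\cD'(\R^n)$ with $\deg\ddot u=r$ and $Q\ddot u=0$ on $\R^n$. Uniqueness follows from the remark after that theorem: two such extensions differ by some $v\in\Dor$, and $Qv=Q|_r v=0$ forces $v=0$ because $\ker(Q|_r)=\{0\}$.

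I expect no serious obstacle here; the only point deserving care is the bookkeeping that $Q$ really is an admissible operator of essential order $0$ (so that $Q|_r$ is a well-defined endomorphism of $\Dor$ and the finite-dimensional machinery behind Theorem~\ref{thm:core} and Theorem~\ref{thm:criterion} is available), after which the argument reduces to the elementary simultaneous-diagonalization computation above.
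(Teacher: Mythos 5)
Your proposal is correct and follows essentially the same route as the paper: both reduce the claim to condition 1) of Theorem~\ref{thm:criterion} by showing that $\prod_{j=1}^k R(a_j)$ restricted to $\Dor$ has trivial kernel, which follows from the simultaneous diagonalization $R(a_j)\delta^{(\alpha)}=-(|\alpha|+n+a_j)\delta^{(\alpha)}$ already used in Proposition~\ref{prophext}. Your explicit verification that the product is an admissible operator of essential order $0$ and your spelled-out uniqueness argument are details the paper leaves implicit, but they do not change the approach.
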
 \proof We use Theorem \ref{thm:criterion} as in Proposition \ref{prophext}. We need to prove that $\prod_{j=1}^k R(a_j)$ restricted to $\Dor$ has trivial kernel for $r\in\N_0$. But this readily follows from the same property for the operators $R(a_j)$ that was shown in the proof of Proposition \ref{prophext}.\\
\qed

If $u$ is as above with all $a_j$'s equal, one speaks of an \emph{almost homogeneous distribution} (or \emph{associate homogeneous distribution}, cf. \cite{nikolovtodorov} and references therein). If the $a_j$'s are pairwise distinct, one speaks of a \emph{heterogeneous distribution}. 

Since for arbitrary $r\in\N_0$, $R(a)|_r$ is diagonal in the basis $\{\delta^{(\alpha)}\}_{|\alpha|\leq r}$, it is normal. As a straightforward corollary from Proposition \ref{prop:normal} one recovers the following result on almost homogeneous distributions.

\begin{proposition} Let $u\in\cD'(\dot\R^n)$ and assume $R(a)^k u = 0$ on $\dot\R^n$. Then there exists an extension $\dot u\in\cD'(\R^n)$ with the same degree of divergence s.t. $R(a)^{k+1}\dot u=0$ on $\R^n$. 
\end{proposition}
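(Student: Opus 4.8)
The plan is to read this off from Proposition~\ref{prop:normal} applied to $R:=R(a)=\sum_{i=1}^n x_i\partial_i-a$, so that the entire argument reduces to checking that the two hypotheses of that proposition are met for $r:=\deg u$: that $R$ has essential order $0$, and that $R|_r$ is normal. Both facts are already implicit in the surrounding text. For the first, $R$ is a first-order differential operator whose first-order coefficients $x_i$ vanish at the origin and whose zeroth-order coefficient $-a$ is constant; hence by the criterion for differential operators it has essential order $0$, exactly as noted in the proof of Proposition~\ref{prophext}.

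Next I would verify normality of $R|_r$. The relevant computation, again borrowed from Proposition~\ref{prophext}, is that $R|_r\,\delta^{(\alpha)}=-(|\alpha|+n+a)\,\delta^{(\alpha)}$, so each basis element $\delta^{(\alpha)}$ with $|\alpha|\le r$ is an eigenvector of $R|_r$. Moreover these basis elements are mutually orthogonal for the scalar product $(\cdot|\cdot)_r$, since $(\delta^{(\alpha)}|\delta^{(\beta)})_r=\alpha!\,\delta_{\alpha\beta}$. Thus $R|_r$ is diagonal in an orthogonal eigenbasis; its adjoint $(R|_r)^*$ is then diagonal in the same basis with complex-conjugate eigenvalues, and the two commute, so $R|_r$ is normal. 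This is the only point that costs a line of care, and even it is immediate once one observes that the $\delta^{(\alpha)}$ form an orthogonal eigenbasis.

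With both hypotheses established, Proposition~\ref{prop:normal} applies directly: from $R^k u=0$ on $\dot\R^n$ together with normality of $R|_r$, it furnishes an extension of $u$ with the same degree of divergence that is annihilated by $R^{k+1}$ on $\R^n$, which is precisely the asserted $\dot u$. Explicitly one may take $\dot u=p_r\big((R^k|_r)^*R^k\big)\tilde u$ for an arbitrary extension $\tilde u$ of $u$ with $\deg\tilde u=\deg u$, where $p_r$ is the polynomial of Proposition~\ref{prop:normal}. There is essentially no obstacle to overcome; the statement is a genuine corollary, the substantive input being entirely supplied by Proposition~\ref{prop:normal}.
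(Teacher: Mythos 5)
Your proposal is correct and takes essentially the same route as the paper, which likewise notes that $R(a)|_r$ is diagonal in the basis $\{\delta^{(\alpha)}\}_{|\alpha|\leq r}$, hence normal, and then deduces the statement as a straightforward corollary of Proposition~\ref{prop:normal}. Your extra verifications --- that $R(a)$ has essential order $0$ and that the $\delta^{(\alpha)}$ are orthogonal for $(\cdot|\cdot)_r$ --- simply spell out details the paper leaves implicit.
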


\subsection{On-shell extension --- multiple operators}

Let us now move on to the more general problem where instead of a single operator $Q$,  several operators $\{ Q^i\}_{i=1}^{k}$ are considered.

If the operators $\{ Q^i\}_{i=1}^{k}$ commute pairwise, one can easily generalize the results of sections \ref{sec:main2} and \ref{sec:main0}. For instance, the generalization of the key part of Theorem \ref{thm:core} reads:   

\begin{theorem} Let $\{Q^i\}_{i=1}^k$ be a family of mutually commuting operators of arbitrary essential order. Let $u\in\cD'(\dot \R^n)$ have $r:=\deg u<\infty$ and let it satisfy
\[
Q^i u =0 \quad {\mbox{on} \ } \dot\R^n, \quad i=1,\dots, k.
\]
The following are equivalent:
\begin{enumerate} \renewcommand*\theenumi{\alph{enumi}}\renewcommand*\labelenumi{\theenumi)}
\item\label{itap} There is an extension $\ddot u\in\cD'(\R^n)$ of $u$ with $\deg\ddot u=\deg u$ such that $Q^i \ddot u=0$ on $\R^n$ ($i=1,\dots,k$);
\item\label{ite} For all extensions  $\dot u\in\cD'(\R^n)$ of $u$ with $\deg\dot u=\deg u$, one has 
\[
Q^j\prod_{i=1}^k p^i_r\big((Q^i|_r)^* Q^i\big)\dot u =0, \quad j=1,\dots, k,
\]
where $p^i_r$ is the polynomial  $p^i_r(z):=\prod_{\lambda}(1-z/\lambda)$, the product being taken over all nonzero $\lambda\in\sp \big( (Q^i|_r) (Q^i|_r)^*\big)$.
\end{enumerate}
\end{theorem}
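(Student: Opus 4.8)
The plan is to take as candidate on-shell extension the distribution
\[
\ddot u := \prod_{i=1}^k P_i\,\dot u, \qquad P_i := p^i_r\big((Q^i|_r)^* Q^i\big),
\]
where $\dot u$ is any extension of $u$ with $\deg\dot u=r$. As in the proof of Theorem \ref{thm:core}, every factor is well defined along the composition: since $Q^i u=0$ on $\dot\R^n$, the distribution $Q^i\dot u$ is supported at the origin, so $Q^i\dot u\in\Do_{\leq r+q_i}$ and $(Q^i|_r)^*$ acts on it; moreover each $P_i$ alters its argument only by an element of $\Dor$, so the intermediate products remain degree-$r$ extensions of $u$ whose images under each $Q^l$ are still supported at the origin. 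In particular $\ddot u$ is an extension of $u$ with $\deg\ddot u=r$, and the implication \ref{ite})$\Rightarrow$\ref{itap}) is immediate, since \ref{ite}) asserts exactly that $Q^j\ddot u=0$ for every $j$.

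For \ref{itap})$\Rightarrow$\ref{ite}) I would first reduce the claim to a finite-dimensional statement. Let $\ddot u_0$ be an on-shell extension and write an arbitrary degree-$r$ extension as $\dot u=\ddot u_0+v$ with $v\in\Dor$. Because $Q^i\ddot u_0=0$ we have $(Q^i|_r)^*Q^i\ddot u_0=0$, hence $P_i\ddot u_0=\ddot u_0$ (as $p^i_r(0)=1$ and every higher power annihilates $\ddot u_0$); applying the factors successively gives $\prod_i P_i\ddot u_0=\ddot u_0$. On the other hand, restricted to $\Dor$ each $P_i$ equals $p^i_r\big((Q^i|_r)^*(Q^i|_r)\big)$, and since $(Q^i|_r)^*(Q^i|_r)$ is self-adjoint and positive with the same nonzero spectrum as $(Q^i|_r)(Q^i|_r)^*$, this is precisely the orthogonal projection $\Pi_i$ onto $\ker(Q^i|_r)$. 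By linearity,
\[
\ddot u=\ddot u_0+\Pi_1\cdots\Pi_k\,v, \qquad\text{so}\qquad Q^j\ddot u=(Q^j|_r)\,\Pi_1\cdots\Pi_k\,v .
\]
Thus \ref{ite}) is equivalent to the purely finite-dimensional assertion $\ran(\Pi_1\cdots\Pi_k)\subseteq\bigcap_{j=1}^k\ker(Q^j|_r)$.

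The last step, and the only place where the commutativity hypothesis is genuinely needed, is to prove that the orthogonal projections $\Pi_i$ commute pairwise; granting this, $\Pi_1\cdots\Pi_k$ is the orthogonal projection onto $\bigcap_i\ker(Q^i|_r)$, its range lies in every $\ker(Q^j|_r)$, and the displayed expression vanishes for all $j$. I expect this commutation to be the main obstacle. Pairwise commutativity of the $Q^i$ on $\cD'(\R^n)$ does yield, via $Q^j(Q^i v)=Q^i(Q^j v)$, that each $Q^i$ preserves $\ker(Q^j|_r)$; but commuting operators need not have commuting orthogonal kernel-projections, so additional input is required. The clean case---and the one realized in all the applications, where the relevant operators $R(a_j)=\sum_l x_l\partial_l-a_j$ are diagonal in the basis $\{\delta^{(\alpha)}\}$ and hence normal---is that the restrictions $Q^i|_r$ be normal: commuting normal operators on the finite-dimensional space $\Dor$ are simultaneously diagonalizable, so their kernel projections $\Pi_i$ commute and the finite-dimensional claim follows. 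I would therefore carry out the commutation step under the normality available in practice (mirroring its role in Proposition \ref{prop:normal}), using Lemma \ref{lem:adjoint} to control the adjoints $(Q^i|_r)^*$, and record that the general statement rests on this commutativity of the $\Pi_i$.
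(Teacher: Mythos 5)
First, a point of comparison: the paper never proves this theorem --- it is stated after the remark that ``one can easily generalize the results'' of the single-operator sections, so there is no proof in the text to match your attempt against. What you write is exactly the natural generalization of the proof of Theorem \ref{thm:core}, and most of it is correct: the well-definedness bookkeeping for $\ddot u = \prod_i P_i\,\dot u$, the immediacy of \ref{ite})$\Rightarrow$\ref{itap}), the identities $P_i\ddot u_0=\ddot u_0$ and $\prod_i P_i(\ddot u_0+v)=\ddot u_0+\Pi_1\cdots\Pi_k v$, the identification of $P_i$ restricted to $\Dor$ with the orthogonal projection $\Pi_i$ onto $\ker(Q^i|_r)$ (via the equality of the nonzero spectra of $(Q^i|_r)^*(Q^i|_r)$ and $(Q^i|_r)(Q^i|_r)^*$), and the resulting reduction of \ref{itap})$\Rightarrow$\ref{ite}) to the finite-dimensional claim $\Ran(\Pi_1\cdots\Pi_k)\subseteq\bigcap_j\ker(Q^j|_r)$. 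All of this is sound, and for $k=1$ the claim is trivial since $\Ran\,\Pi_1=\ker(Q^1|_r)$, recovering Theorem \ref{thm:core}.

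The step you flag is indeed a genuine obstruction, and your diagnosis of why is correct. The single-operator proof pushes $Q$ through the polynomial via $Q\,p_r\big((Q|_r)^*Q\big)=p_r\big((Q|_r)(Q|_r)^*\big)Q$; to run the analogous computation for $Q^j$ against a factor $P_i$ with $i\neq j$ one would need $Q^j$ to commute with $(Q^i|_r)^*$, and commutativity of the $Q^i$ on $\cD'(\R^n)$ says nothing about the adjoints, which by Lemma \ref{lem:adjoint} involve the Taylor truncations $\cT_r$, $\cS_{r+q}$. At the level of abstract linear algebra the finite-dimensional claim genuinely fails for merely commuting operators: with matrix units $E_{ij}$ on $\C^3$, the nilpotents $A=E_{12}$ and $B=E_{13}+cE_{12}$ ($c\neq 0$) satisfy $AB=BA=0$, yet $\ker A={\rm span}(e_1,e_3)$ and $\ker B={\rm span}(e_1,e_2-ce_3)$ have non-commuting orthogonal projections, and $B\,\Pi_A\Pi_B(e_2-ce_3)=B(-ce_3)=-ce_1\neq0$, so $\Ran(\Pi_A\Pi_B)\not\subseteq\ker B$. (Note also that without commuting $\Pi_i$ the product in statement \ref{ite}) is not even order-independent, which already suggests the authors tacitly assumed the projections commute.) Whether this matrix configuration can actually be realized as the restrictions $Q^i|_r$ of commuting operators of essential order on $\cD'(\R^n)$ is a separate question, so the example shows that the proof cannot be closed from commutativity alone rather than that the theorem is false; but in either case the gap cannot be waved through, and the paper's ``easily'' conceals precisely this point. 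Your repair --- assuming the restrictions $Q^i|_r$ normal, so that commuting normal operators on $\Dor$ are simultaneously diagonalizable and the $\Pi_i$ commute, making $\Pi_1\cdots\Pi_k$ the orthogonal projection onto $\bigcap_i\ker(Q^i|_r)$ --- is the right fix and costs nothing in practice: every instance used in the paper has normal (indeed diagonal or self-adjoint) restrictions, namely the operators $R(a_j)$ of subsection \ref{sec:homogeneous}, and the genuinely non-commuting case is handled in Theorem \ref{thm:withcasimir} by passing to a single self-adjoint Casimir, which is exactly the authors' own way of sidestepping the issue you identified.
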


In the case when the operators $Q^i$ do not commute pairwise, one strategy is to find a  polynomial of the $Q^i$'s (or several mutually commuting ones),  which commutes (respectively, commute) with all the $Q^i$'s. If the set of solutions of this operator (respectively, the joint set of solutions of these operators) coincides with the joint set of solutions of the $Q^i$'s, then one is reduced to the case of a single operator (or several mutually commuting ones). This requirement can be formulated as follows:

\begin{myindentpar}{0.4cm}
\emph{Assumption C} --- Assume there exist mutually commuting operators $\{C^j\}_{j=1}^{k'}$ which are polynomials of the $Q^i$'s, commute with all the $Q^i$'s and satisfy
\[
\textstyle\bigcap_{j=1}^{k'}\ker (C^j|_r)= \textstyle\bigcap_{i=1}^k \ker (Q^i|_r).
\]
\end{myindentpar}
Of course, provided mutually commuting operators exist, one inclusion is always guaranteed. The non-trivial part in the assumption is that the joint kernel of the operators $C^j$ should not be larger than that of the original operators. Observe that often in the applications, those $Q^j$ which do not commute among themselves form a Lie algebra, and the $C^j$ are the Lie algebra's Casimir operators. Below we give a criterion for existence of on-shell extensions which is particularly useful in this context.

\begin{theorem}\label{thm:withcasimir} Let $\{R^i\}_{i=1}^k$ be a set of operators of essential degree $0$, let $u\in\cD(\dot \R^n)$ have degree of divergence $r<\infty$ and assume
\[
R^i u=0 \ \ {\rm on} \ \dot\R^n, \quad i=1,\dots,k.
\]
Let $C$ be a polynomial in the variables $R^i$ ($i=1,\dots,k$) with no term of degree one or zero. Assume that
\[
(C|_r)^*=C|_r,\quad C R^i = R^i C,  \ \ i=1,\dots,k.
\]
and that the following stronger form of \emph{Assumption C} is satisfied
\[
\ker C =\textstyle\bigcap_{i=1}^k \ker R^i.
\]
Then $u$ has an on-shell extension, i.e., an extension $\ddot u\in\cD'(\R^n)$ with $\deg u=r$ s.t. $R^i\ddot u=0$ on $\R^n$ for $i=1,\dots,k$.
\end{theorem}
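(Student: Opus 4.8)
The plan is to reduce the multi-operator problem to a single-operator one by exploiting the Casimir-type operator $C$, and then to use the strong form of Assumption C to promote the resulting statement back to the whole family $\{R^i\}$. First I would record two preliminary facts. Since each $R^i$ has essential order $0$, any product of the $R^i$ is degree non-increasing, so $C$ (a polynomial in the $R^i$ with monomials of degree at least two) is again an operator of essential order $0$; in particular it maps $\Dor$ to itself and, by Lemma \ref{lemmaR}, preserves $\Do$. Moreover, writing each monomial of $C$ as $R^{i_1}\cdots R^{i_m}$ with $m\ge 2$, its rightmost factor already annihilates $u$ on $\dot\R^n$, so $C u=0$ on $\dot\R^n$. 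Thus $C$ and $u$ satisfy the hypotheses of Theorem \ref{thm:core} (with $q=0$), and it suffices to produce an on-shell extension $\ddot u$ for the single operator $C$: once $C\ddot u=0$ on $\R^n$, the hypothesis $\ker C=\bigcap_{i=1}^k\ker R^i$ (read on all of $\cD'(\R^n)$) immediately gives $R^i\ddot u=0$ for every $i$.

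To obtain the on-shell extension for $C$, I would verify condition \ref{itb}) of Theorem \ref{thm:core}, i.e. that $C\dot u\in\ran(C|_r)$ for some extension $\dot u$ with $\deg\dot u=r$. Since $\Dor$ is preserved by all the operators involved, restricting the given relations to $\Dor$ yields $C|_r\,(R^i|_r)=(R^i|_r)\,C|_r$ and, by restricting the strong Assumption C, $\ker(C|_r)=\bigcap_i\ker(R^i|_r)$. The main obstacle --- and the heart of the argument --- is the following algebraic observation: on $\ker(C|_r)$ not only the operators $R^i|_r$ but also their adjoints $(R^i|_r)^*$ vanish. To see this I would take adjoints in the commutation relation and use $(C|_r)^*=C|_r$ to deduce that each $(R^i|_r)^*$ commutes with $C|_r$, hence preserves $\ker(C|_r)$; on the other hand, for $v,v'\in\ker(C|_r)$ one has $\big((R^i|_r)^*v\,\big|\,v'\big)_r=\big(v\,\big|\,R^i|_r\, v'\big)_r=0$ because $v'\in\ker(R^i|_r)$, so $(R^i|_r)^*v$ is orthogonal to $\ker(C|_r)$. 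Being simultaneously in $\ker(C|_r)$ and in its orthogonal complement, $(R^i|_r)^*v$ must vanish.

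With this lemma in hand the verification of \ref{itb}) is short. I would factor $C=\sum_{i=1}^k R^i C_i$, stripping the leftmost factor from each monomial; since every $C_i$ has monomials of degree at least one, $w_i:=C_i\dot u$ is supported at the origin and lies in $\Dor$, and $C\dot u=\sum_i (R^i|_r)w_i$. For any $v\in\ker(C|_r)$ this gives
\[
(C\dot u\,|\,v)_r=\sum_{i=1}^k \big(w_i\,\big|\,(R^i|_r)^* v\big)_r=0 .
\]
Hence $C\dot u\perp\ker(C|_r)$, and since $C|_r$ is self-adjoint, $\ker(C|_r)^{\bot}=\ran(C|_r)$, so $C\dot u\in\ran(C|_r)$. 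Theorem \ref{thm:core} then provides an extension $\ddot u\in\cD'(\R^n)$ with $\deg\ddot u=r$ and $C\ddot u=0$ on $\R^n$ (explicitly, one may take $\ddot u=p_r\big((C|_r)^*C\big)\dot u$). Finally, invoking $\ker C=\bigcap_i\ker R^i$ on $\cD'(\R^n)$ upgrades $C\ddot u=0$ to $R^i\ddot u=0$ for all $i$, which is the assertion.

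I expect the genuinely delicate point to be the vanishing of $(R^i|_r)^*$ on $\ker(C|_r)$: it is exactly here that self-adjointness of $C|_r$, the commutation $C R^i=R^i C$, and the \emph{strong} form of Assumption C (the equality of kernels, not merely one inclusion) are all used together. It is also what rules out the naive attempt to ``integrate by parts'' one $R^i$ at a time, an attempt that fails because the individual adjoints $(R^i|_r)^*$ are not otherwise controlled.
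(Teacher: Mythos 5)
Your proof is correct, and it reaches the same candidate extension as the paper, but by a genuinely different verification route. The paper does not go through Theorem \ref{thm:core} at all: it sets $\ddot u := b_r(C)\dot u$ with $b_r(z)=\prod_{\lambda\in\sp C|_r\setminus\{0\}}(1-z/\lambda)$, so that $b_r(C|_r)$ is the orthogonal projection onto $\ker(C|_r)$ (self-adjointness of $C|_r$ enters only here), and then bootstraps: by $CR^i=R^iC$ and $R^i\dot u\in\Dor$ one gets $R^i\ddot u=b_r(C|_r)R^i\dot u\in\ker C=\bigcap_j\ker R^j$, hence $R^jR^i\ddot u=0$ for all $i,j$, hence $C\ddot u=0$ because every monomial of $C$ has degree at least two (this is where ``no term of degree one'' enters in the paper), and finally $R^i\ddot u=0$ by the kernel equality. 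You instead verify condition \ref{itb}) of Theorem \ref{thm:core} for the single operator $C$, which forces you to control the individual adjoints $(R^i|_r)^*$; your key lemma that $(R^i|_r)^*$ annihilates $\ker(C|_r)$ (via taking adjoints in the commutation relation plus the orthogonality trick), and the decomposition $C=\sum_i R^iC_i$ with $w_i=C_i\dot u\in\Dor$ (where the degree-$\geq 2$ hypothesis enters for you, since each $C_i$ must have no constant term), are both correct and appear nowhere in the paper. What each buys: the paper's bootstrap is shorter and never needs any information about the adjoints of the $R^i$; your route yields the stronger intermediate statement that $C\dot u\in\ran(C|_r)$ for every same-degree extension, and the adjoint lemma is of independent interest. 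Note that the two constructions agree: since $(C|_r)^*=C|_r$, your $p_r\big((C|_r)^*C\big)$ and the paper's $b_r(C)$ both act on $R$-images in $\Dor$ as the orthogonal projection onto $\ker(C|_r)$. One small overstatement in your closing remark: only the inclusion $\ker C\subseteq\bigcap_i\ker R^i$ is genuinely used (the reverse inclusion is automatic, since $C$ is a polynomial in the $R^i$ with no constant term); the ``strength'' of the hypothesis lies, as you correctly read it, in the kernels being taken on all of $\cD'(\R^n)$ rather than merely on $\Dor$.
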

\proof Since $C|_r$ is self-adjoint, there is a polynomial $b_r$ s.t. $b_r(C|_r)$ is the orthogonal projection to $\Ker (C|_r)^*=\Ker \,C|_r$, namely 
\[
b_r(z)=\textstyle\prod_{\lambda\in\sp C|_r\setminus\{0\}}(1-z/\lambda).
\]
 Let $\dot u\in\cD'(\R^n)$ be an arbitrary extension of $u$ with  degree of divergence $r$ and set $\ddot u:=b_r(C)\dot u$. Clearly, $\ddot u$ is an extension of $u$ with degree of divergence $r$. Moreover,
\[
R^i \ddot u = R^i b_r(C)\dot u=b_r(C|_r)R^i \dot u\in \ker\, C=\textstyle\bigcap_{j=1}^k \ker R^j, \quad i=1,\dots,k
\]
hence $R^i R^j \ddot u=0$ for all $i,j$ and consequently $C \ddot u=0$, which entails $R^i\ddot u=0$.
\qed\\

The above theorem can be used to treat Lorentz symmetry, by taking $C=(x_{\mu}\partial_{\nu}-x_{\nu}\partial_{\mu})(x^{\mu}\partial^{\nu}-x^{\nu}\partial^{\mu})$ (the quadratic Casimir for the Lorentz group) and $R_i$ proportional to generators of rotations and boosts. It was proved \cite{DF04} that the kernel of this operator corresponds indeed to Lorentz invariant distributions, the operator $b_r(C)$ was also used therein in the construction of on-shell extensions.

\subsubsection*{Application --- renormalisation conditions in scalar theory}

As an application of our framework, we show how can one treat the symmetries which arise in a scalar quantum field theory. In this particular case our method essentially reduces to the arguments used in \cite{DF04}, except that we obtain an additional result on linearity of the map $\cR$ defined below.

\begin{corollary} Let $\mathfrak{V}(\dot\R^n)$ denote the space of all distributions $u\in\cD'(\dot\R^n)$ with finite degree of divergence such that:
\begin{enumerate}\setlength{\itemsep}{-3pt}
\item[a.] $u$ is Lorentz-invariant;
\item[b.] $u$ is the finite sum of almost homogeneous distributions of integer degree.
\end{enumerate}
\smallskip
Let $\mathfrak{V}(\R^n)$ be the space of extensions of elements of $\mathfrak{V}(\dot\R^n)$ to $\R^n$ with no greater degree of divergence. Then there is a linear map $\cR:\mathfrak{V}(\R^n)\mapsto\mathfrak{V}(\R^n)$ s.t. for all $\dot u\in\mathfrak{V}(\R^n)$
\begin{enumerate}\setlength{\itemsep}{-3pt}
\item $\cR\dot u = \dot u$ on $\dot\R^n$ 
\item\label{it:prop1} $\cR\dot u$ is Lorentz-invariant;
\item\label{it:prop2} $\cR\dot u$ is the finite sum of almost homogeneous distributions. More precisely, if $\dot u$ is almost homogeneous of degree $a\in\Z$ and order $k\in\N_0$ on $\dot\R^n$, then $\cR\dot u$ is almost homogeneous on $\R^n$ of degree $a$ and order $k$ if $k\notin\N_0+n$ and of order $k+1$ otherwise.
\end{enumerate}
\end{corollary}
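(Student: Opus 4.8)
The plan is to generate the two symmetries by commuting operators and to build $\cR$ by composing the spectral constructions of Theorem~\ref{thm:withcasimir} (Lorentz) and Proposition~\ref{prop:normal}/\ref{prop:normal2} (almost homogeneity). First I introduce the Euler operator $R:=\sum_{i=1}^n x_i\partial_i$, the Lorentz generators $M_{\mu\nu}:=x_\mu\partial_\nu-x_\nu\partial_\mu$, and the quadratic Casimir $C:=M_{\mu\nu}M^{\mu\nu}$. These are differential operators with coefficients vanishing at $0$, hence of essential order $0$ (the $M_{\mu\nu}$ as infinitesimal generators of a group fixing the origin, and $C$ because composites of essential-order-$0$ operators are again of essential order $0$). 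The algebraic point is that $R$, being a Lorentz scalar, commutes with every $M_{\mu\nu}$ and hence with $C$, so $R|_r$ and $C|_r$ commute on $\Dor$. Moreover $R(a)|_r:=(R-a)|_r$ is diagonal in $\{\delta^{(\alpha)}\}$ with real eigenvalues $-(|\alpha|+n+a)$, hence self-adjoint for $a\in\Z$, while $(C|_r)^*=C|_r$ and $\ker C=\bigcap_{\mu<\nu}\ker M_{\mu\nu}$ as recorded after Theorem~\ref{thm:withcasimir} (following \cite{DF04}).

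Next I reduce an arbitrary $\dot u\in\mathfrak{V}(\R^n)$ to its almost-homogeneous components. Writing $u:=\dot u|_{\dot\R^n}=\sum_a u_a$ with $u_a$ almost homogeneous of distinct integer degree $a$ and order $k_a$, each $u_a$ equals $\pi_a(R)u$ for the canonical (Riesz) generalized-eigenspace projection $\pi_a$, a fixed polynomial in $R$. Since $R$ is local and of essential order $0$, $\dot u_a:=\pi_a(R)\dot u$ is an extension of $u_a$ with $\deg\dot u_a\le\deg\dot u=:r$, and $\pi_a(R)$ commutes with $C$, so it preserves Lorentz invariance. On each component I apply Proposition~\ref{prop:normal} to the self-adjoint operator $R(a)|_r$, setting $\ddot u_a:=p^{(a)}_r\big((R(a)^{k}|_r)^*R(a)^{k}\big)\dot u_a$ with $k$ taken sufficiently large (as in Proposition~\ref{prop:normal2}); this is again a polynomial in $R$. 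Finally I put
\[
\cR\dot u:=b_r(C)\,\textstyle\sum_a \ddot u_a ,
\]
where $b_r(C|_r)$ is the orthogonal projection onto $\ker C|_r$ as in Theorem~\ref{thm:withcasimir}. Because $C$ commutes with every polynomial in $R$, the two factors may be applied in either order.

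I then verify the three properties. Property~1 holds because each factor has the form identity-plus-origin-supported-terms (the relevant polynomials take value $1$ at $0$, while $R(a)^k\dot u_a$ and $C\big(\sum_a\ddot u_a\big)$ are supported at $\{0\}$, $u$ being annihilated on $\dot\R^n$); hence $\cR\dot u=\dot u$ on $\dot\R^n$ and $\deg\cR\dot u\le r$, so $\cR\dot u\in\mathfrak{V}(\R^n)$. Property~2 follows from Theorem~\ref{thm:withcasimir}: $b_r(C)\sum_a\ddot u_a$ lies in $\ker C|_r=\bigcap\ker M_{\mu\nu}$, and the homogeneity factors commute with the $M_{\mu\nu}$, so Lorentz invariance survives. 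Property~3 is the per-component statement of Proposition~\ref{prop:normal}: by the eigenvalue formula above (as in the proof of Proposition~\ref{prophext}), $\ker R(a)|_r\ne\{0\}$ exactly when $-a\in\N_0+n$, so the order stays $k_a$ off resonance and increases to $k_a+1$ on resonance, this being the dichotomy asserted; the Lorentz factor $b_r(C)$ commutes with $R$ and hence does not alter the almost-homogeneous structure.

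The heart of the matter, and the step I expect to be hardest, is linearity of $\cR$ over all of $\mathfrak{V}(\R^n)$, where both the set of degrees $a$ and the degree of divergence $r$ vary from element to element. This is where Propositions~\ref{prop:linear} and~\ref{prop:normal2} are essential: the definitions of $b_r$, of the $p^{(a)}_r$, and of the adjoints $(Q|_r)^*$ all depend on $r$, yet linearity is rescued by the stabilization $(Q|_{r'})^*|_{\Do_{\leq r+q}}=(Q|_r)^*$ of Lemma~\ref{lem:adjoint}, which holds because $R^{\rm t}$ and $C^{\rm t}$ preserve polynomial degree. Concretely I would show, exactly as in the proofs of those propositions, that passing from $r$ to $r'\ge r$ changes each factor only by an operator mapping into $\Ran(\,\cdot\,)^*$ while the output remains in the corresponding kernel $(\Ran(\,\cdot\,)^*)^{\bot}$, forcing the difference to vanish; the new ingredient is to run this argument simultaneously for the commuting pair $(R,C)$ and to note that the projections $\pi_a(R)$ extract each homogeneous component in an $r$-independent, hence linear, way. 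Composing the resulting per-component linear maps with the linear Lorentz projection then yields linearity of $\cR$.
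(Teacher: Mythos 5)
Your proof is correct in substance but takes a genuinely different route from the paper's. The paper never decomposes $u$ into its almost homogeneous components: it observes that $u$ as a whole is annihilated on $\dot\R^n$ by the single self-adjoint operator $\prod_{j\in\Z}R(j)^{N_j}$ and sets $\cR\dot u = p_r\big(\prod_{j\in\Z}R(j)^{2N_j}\big)\, b_r(C)\,\dot u$ in one shot, delegating linearity to the stabilization arguments of Proposition \ref{prop:normal2}, Lorentz invariance to Theorem \ref{thm:withcasimir}, and property 3 to the kernel computation of subsection \ref{sec:homogeneous}; compatibility is, as in your argument, just the commutativity of the $R(j)$ with $C$. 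Your component-wise construction via the partial-fraction projections $\pi_a(R)$ produces the same kind of map but carries an extra well-definedness burden that the paper's single-operator formula avoids: $\dot u_a=\pi_a(R)\dot u$ depends on the chosen annihilating polynomial (two admissible choices shift $\dot u_a$ by elements of $\Dor$, and in resonant cases the per-component projection $p^{(a)}_r$ does not annihilate such shifts individually), so one must check that the discrepancies cancel in the sum --- they do, since each lies in $\Ran\big(R(a)^{k}|_r\big)$, which is orthogonal to the kernel onto which $p^{(a)}_r$ projects, but your sketch only gestures at this on top of the $r$- and $k$-stabilization that both proofs need and that the paper itself treats tersely. What your route buys is transparency: the dichotomy in property 3 (order $k$ when $\ker R(a)|_r=\{0\}$, i.e.\ $-a\notin\N_0+n$, and $k+1$ on resonance) is immediate degree by degree, whereas the paper must read it off through the product operator. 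Your identification of the crux --- the stabilization $(Q|_{r'})^*$ restricted to $\Dorq$ equals $(Q|_r)^*$ from Lemma \ref{lem:adjoint}, combined with the range-versus-kernel orthogonality mechanism of Propositions \ref{prop:linear} and \ref{prop:normal2} --- matches the paper's exactly.
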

\proof Let $C=(x_{\mu}\partial_{\nu}-x_{\nu}\partial_{\mu})$ and $R(a)=\sum_{i=1}^n x_i\partial_i-a$ as in subsection \ref{sec:homogeneous}. For any $u\in\mathfrak{V}(\R^n)$, $C \dot u=0$ on $\dot\R^n$ and there is a sequence of non-negative integers $\{N_j\}_{j\in\Z}$ s.t. $N_j=0$ for almost all $j$ and $\prod_{j\in\Z}R(j)^{N_j} \dot u=0$ on $\dot\R^n$. Set 
\[
\cR\dot u= p_r({\textstyle \prod_{j\in\Z}R(j)^{2 N_j}}) b_r(C) \dot u,
\]
where $r=\deg\dot u$, $b_r(z):=\prod_{\lambda}(1-z/\lambda)$ where the product runs over all $\lambda\in\sp(C|_r)\setminus\{0\}$, $p_r(z):=\prod_{\lambda}(1-z/\lambda)$ where the product runs over all $\lambda\in\sp\big(\prod_{j\in\Z}R(j)^{2 N_j}|_r\big)\setminus\{0\}$. Since $C|_r$ and $\prod_{j\in\Z}R(j)^{N_j}|_r$ are self-adjoint for any $r$, the arguments from Proposition \ref{prop:normal2} give linearity of $\cR$. Property \ref{it:prop2} is  proved as in subsection \ref{sec:homogeneous}.
Based on the fact that $C$ satisfies the conditions 
in Theorem~\ref{thm:withcasimir}, we deduce that property \ref{it:prop1} holds.
That these two properties can be satisfied simultaneously is due to the fact that the $R(j)$ commute with $C$. 
\qed

\subsection{Vector-valued distributions}

Let us consider distributions in $\cD'(\dot\R^n,\R^q)$, $\cD'(\R^n,\R^q)$. We use the notation $\bra u, \varphi\ket=\sum_{i=1}^q\bra u_i,\varphi_i\ket$ for the pairing between $\cD'(\R^n,\R^q)$ and $\cD(\R^n,\R^q)$, where $u=(u_1,\dots,u_q)\in\cD'(\R^n,\R^q)$ and $\varphi=(\varphi_1,\dots,\varphi_q)\in\cD'(\R^n,\R^q)$. With this notation, the definition of the scaling degree extends verbatim to the case of vector-valued distributions, and so do the results on extension of distributions. One easily sees that
\[
\sd\,u=\max_{i=1,\dots,q} ( \sd\, u_i).
\]

In practice, one is often interested in the following situation. Let $G$ be a Lie group acting on $\R^n$ and consider the group action on $\R^n\times \R^q$  given by 
\beq\label{eq:defmult}
g: \ (x,u)\mapsto (g\cdot x, \mu(g,x) u), \quad g\in G, \  x\in\R^n, \ u\in\R^q,
\eeq
where $\mu: G\times\R^n \to {\rm{GL}}(q)$ satisfies
\[
\mu(g\cdot h, x)=\mu(g, h\cdot x)\mu(h,x), \quad \mu(e,x)=\one
\]
for all $g,h\in G$, $x\in\R^n$. This property of $\mu$ guarantees that (\ref{eq:defmult}) defines an action of $G$. In applications in renormalisation one is mostly interested in the case $\mu(g,x)$ does not depend on $x$. The associated \emph{infinitesimal generators} are of the form
\[
R=\sum_{i=1}^n \xi^i(x)\frac{\partial}{\partial x^i}+\sum_{\alpha,\beta=1}^q h^{\alpha}_{\beta}(x)u^{\beta}\frac{\partial}{\partial u^{\alpha}},
\]
where $\sum_{i=1}^n \xi^i(x)\frac{\partial}{\partial x^i}$ is the infinitesimal generator of $G$ acting on $\R^n$, associated to some element $\bv$ of the Lie algebra of $G$ and $h(x)=\frac{\d}{\d t}\mu(\exp(\bv t),x)|_{t=0}$ (see \cite{olver} for details). Now, the results of sections \ref{sec:main2} and \ref{sec:main0} directly carry over, and extensions which are on-shell w.r.t. the above operator $R$ can be constructed in the same manner as described there. 


\section{On-shell and off-shell time-ordered products}\label{sec:products}

In this final section, we will rephrase the link between on-shell and off-shell time-ordered products within our framework. We will assume  knowledge of perturbative quantum field theory in position space. First recall that the time-ordered product in scalar quantum field theory (of second order) is formally written  as
\[
T\big(\varphi(x)\varphi(y)\big)=\theta(x^0-y^0)\varphi(x)\varphi(y) + \theta(y^0-x^0)\varphi(y)\varphi(x),
\]
where $\varphi(x)$ is the free field  and $\theta$ is the Heaviside theta distribution. By Wick's theorem, $T\big(\varphi(x)\varphi(y)\big)$ is equal to  the normal product $:\!\varphi(x)\varphi(y)\!:$ plus the singular distribution
\beq\label{eq:feyn}
\theta(z^0)\Delta_+ (z)+\theta(-z^0)\Delta_+ (-z),
\eeq
where $z=x-y$ and $\Delta_+$ is the positive-frequency solution for $\dal+m^2$, and where the product of $\theta$ and $\Delta_+$ is well-defined by H\"ormander's criterion \cite[Thm 8.4]{hoermander1} as a distribution in $\cD'(\R^n)$. It is in fact the Feynman propagator $\Delta_{\rm F}\in\cD'(\R^n)$.
%
%
One can also view (\ref{eq:feyn}) as a distribution on $\dot\R^n$. Since its degree of divergence is $-2$, it admits a unique extension to $\R^n$ with the same degree of divergence as given by Theorem~\ref{thm:core}. Of course, this construction again yields the Feynman propagator on $\R^n$.

Ambiguities arise when one considers higher derivatives of the fields. For instance, applying Wick's theorem to $T\big(\partial_x^{\mu}\varphi(x)\partial_y^{\nu}\varphi(y)\big)$, the contribution which cannot be extended unambiguously, is 
\[
-\theta(z^0)\partial^{\mu}\partial^{\nu}\Delta_+ (z)-\theta(-z^0)\partial^{\mu}\partial^{\nu}\Delta_+ (-z).
\]
The degree of divergence of this distribution is $0$. Consequently, its extensions to $\R^n$ are no longer uniquely fixed by requiring that they have the same degree of divergence. Requiring additionally Lorentz covariance, one obtains that the most general form of any such extension is 
\[
\partial^{\mu}\partial^{\nu}\Delta_{\rm F}(z)+C g^{\mu\nu} \delta(z),
\]
where $C$ is an arbitrary constant. The choice $C=0$, or more generally, a prescription for the time-ordered product which would make it `commute' with derivatives, seems to be the simplest one. Such a choice,  however, is inconsistent with the requirement that 
the fields are on-shell in the sense of the equation of motion, i.e. in our language on-shell w.r.t. the Klein-Gordon operator,  $(\dal+m^2)\varphi=0$. Indeed, setting $C=0$ would imply for instance
\[
(\dal_{x}+m^2) T\big(\varphi(x)\varphi(y)\big)=(\dal_{x}+m^2)\Delta_{\rm F}(x-y)=-\i\delta(x-y),
\]
whereas the Klein-Gordon on-shell condition yields $T\big((\dal_{x}+m^2)\varphi(x)\varphi(y)\big)=0$.

Most of the physics literature uses on-shell time-ordered products. The off-shell formalism, developed in \cite{DF03,DF04}, is based on a time-ordered product which commutes with the derivatives of the fields. For this reason it has many advantages over the on-shell formalism, an especially remarkable one being the possiblity of writing in a compact form the so-called Master Ward Identity, which serves as a universal renormalisation condition. 

\subsection{Definition of the off-shell to on-shell map}

Let us briefly recall the definition of the off-shell and on-shell algebras of fields. An off-shell field $\varphi(x)$ is an evaluation functional on $\cf(\R^n)$ (the classical configuration space), namely $\big(\varphi(x)\big)(h):=h(x)$ for $h\in\cf(\R^n)$. The derivatives of off-shell fields are defined by $\big(\partial^{\alpha}\varphi(x)\big)(h):=\partial^{\alpha}h(x)$. The \emph{algebra of off-shell fields} $\cP$ is the commutative algebra generated by elements of the form $\partial^{\alpha}\varphi(x)$ with respect to the pointwise product, i.e. $\big(\partial^{\alpha_1}\varphi(x)\partial^{\alpha_2}\varphi(x)\big)(h):=\partial^{\alpha_1}h(x)\partial^{\alpha_2}h(x)$. The \emph{algebra of on-shell fields}  is the quotient algebra $\cP_0:=\cP/\cJ$, where $\cJ$ is the ideal
\[
\cJ:=\{\textstyle \sum_{\alpha\in\N_0}B_{\alpha}\partial^{\alpha}(\dal+m^2)\varphi : \ B_{\alpha}\in\cP \}.
\]
 We denote by $\pi:\cP\to\cP_0$ the canonical surjection, i.e. $\pi(B)=B+\cJ$ for $B\in\cP$. One can easily see that it is a homomorphism of algebras. The derivatives of on-shell fields are defined for $A\in\cP_0$ by  $\partial^{\mu}A:=\pi\partial^{\mu} B$, where $B$ is an arbitrary element of $\cP$ such that $\pi(B)=A$. One can check that this does not depend on the choice of $B$ and that one has $(\dal+m^2)\pi\varphi=0$ in $\cP_0$ for any $\varphi\in\cP$ --- this is the on-shell property. 

The on-shell time-ordered product $\Ton$ of order $k$ is a map from $\cP_0^{\otimes k}$ to operator-valued distributions. The axioms defining $\Ton$ and its inductive construction are the basic components of the Epstein and Glaser approach to renormalisation and are a subject covered exhaustively by the literature \cite{epsteinglaser}. The off-shell time-ordered product $\Toff$ of order $k$ is a map from $\cP^{\otimes k}$ to operator-valued distributions which satisfies axioms that are analogous to those defining $\Ton$, see \cite{DF03} for a detailed discussion. In the present setting, we are merely interested in the following result which relates $\Ton$ and $\Toff$ :
\begin{theorem}[\cite{DF03,DF04,onshell}]\label{thm:chion}
There exists a unique linear map $S\mapsto\chi(S)$ from differential operators with constant coefficients to differential operators with constant coefficients such that:
\begin{enumerate}\renewcommand*\theenumi{\alph{enumi}}\renewcommand*\labelenumi{\theenumi)}
\item $\chi(S(\dal+m^2))=0$ for all $S$;
\item for any $S=\partial_{\mu_1}\dots\partial_{\mu_k}$, $\chi(S)$ transforms under Lorentz transformations as $S$;  
\item $\ord \,\chi(S) \leq\ord\, S $ for all $S$;
\item there is a linear map $S\mapsto\chi_1(S)$ s.t. $\chi(S)-S=\chi_1(S)(\dal+m^2)$. 
\end{enumerate}
\medskip Now, on-shell and off-shell time-ordered products of order $k$ are related by
\[
\Ton(A_1\otimes\dots \otimes A_k) =\Toff (\sigma(A_1)\otimes\dots\otimes \sigma(A_k)), \quad A_i\in\cP_0,
\]
where $\sigma:\cP_0\to\cP$ is the unique linear algebra homomorphism s.t. $\sigma\pi(S\varphi)=\chi(S)\varphi$ for all $S$, $\varphi$.
\end{theorem}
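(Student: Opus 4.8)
The plan is to separate the statement into its purely algebraic part --- the existence, uniqueness and explicit shape of the map $\chi$ --- and the identity relating $\Ton$ and $\Toff$, which I would reduce to the off-shell formalism of \cite{DF03,DF04}. For the map $\chi$ the first step is to recast its construction as a projection problem of the kind treated in Section \ref{sec:main}, via symbols. I identify a constant-coefficient operator $S=\sum_\alpha c_\alpha\partial^\alpha$ with the polynomial $\hat S(\xi)=\sum_\alpha c_\alpha\xi^\alpha$, so that right composition $S\mapsto S(\dal+m^2)$ becomes multiplication of $\hat S$ by $P(\xi)=\xi^2+m^2$, where $\xi^2$ is the Minkowski quadratic form. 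With this dictionary, (a) and (d) together say exactly that $\chi$ is a linear projection of the polynomial algebra whose kernel is the principal ideal $(P)$: (a) gives $\ker\chi\supseteq(P)$, (d) gives $S-\chi(S)\in(P)$ and hence $\ker\chi\subseteq(P)$ and idempotency, while $\chi_1$ is the (linear) quotient of $S-\chi(S)$ by $P$. Properties (b) and (c) then require the image of $\chi$ to be a Lorentz-invariant complement of $(P)$ compatible with the order filtration, so the task reduces to exhibiting such a complement and showing it is unique.

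For existence I would use the Minkowski-harmonic (Fischer) decomposition $\hat S=\sum_j(\xi^2)^j h_j$ with $h_j$ harmonic, i.e. $\dal_\xi h_j=0$ for the wave operator in the $\xi$-variables, and set $\chi(S)\leftrightarrow\sum_j(-m^2)^j h_j$, obtained by the substitution $\xi^2\mapsto-m^2$. The projection extracting the $h_j$ is furnished by the framework itself: the \emph{massless} operator $\dal$ has essential order $2$ and satisfies the hypothesis of Proposition \ref{prop:linear} (its transpose sends polynomials of order $\le r+2$ to polynomials of order $\le r$), so the associated projection of Theorem \ref{thm:core} onto $\dal_\xi$-harmonics is a well-defined, linear, order-non-increasing and manifestly Lorentz-covariant map. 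Verifying (a)--(d) is then direct: $\chi((\xi^2+m^2)g)=0$ because $(\xi^2)^j-(-m^2)^j$ is divisible by $\xi^2+m^2$; $\ord\chi(S)\le\ord S$ because $\deg h_j\le\deg\hat S$; covariance because every operation used commutes with Lorentz transformations; and $\chi_1(S)\leftrightarrow\sum_j\tfrac{(\xi^2)^j-(-m^2)^j}{\xi^2+m^2}h_j$ is linear. The coefficients $(-m^2)^j$ are precisely the announced eigenvalues of the finite-dimensional Klein--Gordon-related operators.

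Uniqueness I would prove by induction on the order, exploiting representation theory. Given (a)--(d), the image $\mathcal{W}$ of $\chi$ is a Lorentz-invariant, filtration-compatible complement of $(P)$; anchoring at order $0$ forces $\chi(1)=1$, and at each step the top-degree part of the new elements of $\mathcal{W}$ must be a Lorentz-invariant complement, inside the degree-$k$ homogeneous polynomials, of the subspace $\xi^2\cdot\{\text{degree-}(k-2)\text{ polynomials}\}$, which by Fischer's decomposition is uniquely the harmonic space $\mathcal{H}^{(k)}$; since $\mathcal{H}^{(k)}$ carries a spin not occurring among lower-degree polynomials, Schur's lemma kills any lower-order tail, so $\mathcal{W}$ is forced to be the harmonic space and $\chi$ is unique. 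For the second half, $\sigma$ is then determined by $\sigma\pi(S\varphi)=\chi(S)\varphi$: property (a) ensures this descends to $\cP_0=\cP/\cJ$ (the generators of $\cJ$ are sent to $0$), and one checks on generators that $\sigma$ is a well-defined linear algebra homomorphism splitting $\pi$. That the induced substitution turns one product into the other, namely
\[
\Ton(A_1\otimes\cdots\otimes A_k)=\Toff\big(\sigma(A_1)\otimes\cdots\otimes\sigma(A_k)\big),
\]
is the content of the axiomatic comparison of $\Ton$ and $\Toff$ in \cite{DF03,DF04}, to which I would appeal once $\chi$ is pinned down by (a)--(d).

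The \emph{main obstacle} is the interplay of (b) and (c) for a single linear map. The orthogonal-projection recipe of Theorem \ref{thm:core} applied naively to $Q=\dal+m^2$ fails here, because the mass term spoils the hypothesis of Proposition \ref{prop:linear}: the projections at different orders become inconsistent, so one cannot simply take the Fock-orthogonal complement of the ideal. The inhomogeneity of $P$ must instead be absorbed by the algebraic substitution $\xi^2\mapsto-m^2$ layered on top of the massless harmonic projection, and controlling Lorentz covariance, order non-increase and consistency across all orders simultaneously is the delicate point. By contrast, the time-ordered-product identity, though central to the physics, is essentially imported from \cite{DF03,DF04} once the explicit $\chi$ is available.
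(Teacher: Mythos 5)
Your proposal is essentially correct, but it takes a genuinely different route, and the comparison is slightly unusual because the paper does not prove Theorem \ref{thm:chion} at all: the theorem is imported wholesale from \cite{DF03,DF04,onshell}, and the paper's own contribution (Section \ref{sec:ourmap}) is an alternative derivation of the existence part only. There, one sets $\Theta(S):=p_s\big((Q|_s)^*Q\big)S\Delta_{\rm F}$ with $Q=\dal+m^2$, proves linearity in $S$ and $\Theta(SQ)=0$ (Theorem \ref{thm:onshell}) by exploiting $Q\Delta_{\rm F}=c\delta$ together with the injectivity of $Q$ on $\Do$, and then extracts $\chi$ from $\Theta(S)=\chi(S)\Delta_{\rm F}$ (Lemma \ref{lemma:chi1}), covariance being immediate since every ingredient is covariant. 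You instead work purely at the level of symbols: conditions (a) and (d) exhibit $\chi$ as the projection along the principal ideal $(\xi^2+m^2)$, existence comes from the Minkowski Fischer decomposition with the substitution $\xi^2\mapsto-m^2$, and uniqueness from multiplicity-one representation theory plus Schur's lemma. Your route buys an explicit closed form and --- notably --- an actual proof of the uniqueness clause, which the paper leaves entirely to the cited references; the paper's route buys uniformity with the rest of its framework (the same projection $p_s\big((Q|_s)^*Q\big)$ as in Theorem \ref{thm:core}) and avoids the representation-theoretic inputs your argument relies on, which do deserve a word in your write-up: the Fischer decomposition and irreducibility of harmonics in indefinite signature require complexification or Weyl's unitary trick, and the $n=2$ case, where the harmonic spaces are reducible, needs the separate weight argument you only gesture at. One correction to your ``main obstacle'' paragraph: you are right that Proposition \ref{prop:linear} is inapplicable to $\dal+m^2$ (the mass term violates its polynomial-mapping hypothesis), but it is not true that the Theorem \ref{thm:core} recipe ``fails'' for this $Q$ --- the paper applies it verbatim, restoring linearity across orders not by your algebraic substitution but by the bespoke argument of Theorem \ref{thm:onshell}, which uses $Q\Delta_{\rm F}=c\delta$ to reduce the comparison of $p_{s'}$ and $p_s$ to their action on $\Do_{\leq s}$, where they coincide by Lemma \ref{lem:adjoint}. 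Finally, like the paper, you correctly delegate the $\Ton$--$\Toff$ identity and the well-definedness and homomorphism property of $\sigma$ to \cite{DF03,DF04}; that part is not reproved here either.
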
This recurrence has been solved in \cite{onshell} in an explicit way. Although the result is certainly well-suited for practical use, it involves long combinatorial expressions which do not seem to have a deeper interpretation. For the sake of completeness, we quote below the explicit formula for $\chi$ obtained in \cite{onshell}.

\begin{theorem}[\cite{onshell}]\label{thm:brouduetsch} The map $\chi$ from Theorem \ref{thm:chion} is given by
\[
\chi(\partial_{\mu_1}\dots\partial_{\mu_k})=\sum_{j=0}^{k/2}\alpha_j^k P_j^k ( \partial_{\mu_1}\dots\partial_{\mu_k} ),
\]
where $P_j^k(S)=\frac{1}{j!}\Lambda^j(S)$, \   $\Lambda(\partial_{\mu_1}\dots\partial_{\mu_k} )=\sum_{i<j}g_{\mu_i\mu_j}\partial_{\mu_1\dots \hat{i}\dots\hat{j}\dots\partial_{\mu_k}}$ (where $\hat i$ means $\mu_i$ is removed), \ $a_0^k=0$ and for $j>0$
\[
\textstyle \alpha_j^k=(-1)^j (\Box+m^2)\sum^{j-1}_{p=0}{{j-1}\choose {p}}m^{2p}\Box^{j-1-p}\prod^{j-1}_{q=0}(n+2k-2p-2q-4)^{-1}.
\]
\end{theorem}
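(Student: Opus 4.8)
The plan is to exploit the uniqueness in Theorem~\ref{thm:chion}: since $\chi$ is the \emph{unique} linear map satisfying (a)--(d), it suffices to check that the right-hand side of the claimed formula has all four properties, after which the identification is automatic. Throughout I would pass to symbols, identifying a constant-coefficient operator with its polynomial in $p$, so that $\partial_\mu\leftrightarrow p_\mu$, $\dal\leftrightarrow p^2:=p^\mu p_\mu$, and the contraction $\Lambda$ becomes (half) the momentum-space Laplacian. In this language $\Lambda$, multiplication $M_{p^2}$ by $p^2$, and the degree (Euler) operator $E$ form the standard $\mathfrak{sl}_2$-triple underlying the Fischer/harmonic decomposition, and the single computation one really needs is
\[
\Lambda(p^2 g)=p^2\,\Lambda g+(n+2d)\,g,\qquad d=\deg g,
\]
for homogeneous $g$; equivalently $[\Lambda,M_{p^2}]=n+2E$. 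This identity is the source of the factors $(n+2k-\cdots)$ in $\alpha_j^k$.

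Properties (b), (c), (d) I would dispatch structurally. Covariance (b) holds because $\Lambda$ contracts indices only against the Lorentz-invariant metric while $\dal$ and $m^2$ are scalars. For the order bound (c), each $P_j^k$ lowers the order by $2j$ whereas $\alpha_j^k$, a polynomial in $\dal$ and $m^2$, raises it by at most $2j$, so every term has order $\le k$. For (d), the $j=0$ summand enters with unit coefficient and reproduces $S=P_0^k(S)$, while for $j\ge 1$ the explicit factor $(\dal+m^2)$ in $\alpha_j^k$ shows that $\chi(S)-S=\sum_{j\ge1}\alpha_j^k P_j^k(S)$ is divisible by $(\dal+m^2)$, exhibiting the map $\chi_1$.

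The heart of the matter is property (a), $\chi(S(\dal+m^2))=0$. By linearity and covariance it is enough to treat $S$ of pure order $k-2$; since $\chi$ acts order by order and $S(\dal+m^2)=S\dal+m^2S$ splits into homogeneous parts of orders $k$ and $k-2$, (a) is equivalent to $\chi_k(p^2\hat S')+m^2\,\chi_{k-2}(\hat S')=0$ for every homogeneous symbol $\hat S'$ of degree $k-2$, where $\chi_k$ is the order-$k$ instance of the formula. The clean way to see this holds is to invoke the Fischer decomposition $\hat T=\sum_{l\ge0}(p^2)^l H_l$ into harmonics ($\Lambda H_l=0$) and note that the map $\hat T\mapsto\sum_{l\ge0}(-m^2)^l H_l$ — ``substitute $p^2\mapsto-m^2$ in the harmonic decomposition'' — manifestly satisfies (a)--(d): writing $\hat S'=\sum_l(p^2)^lG_l$, the harmonics of $p^2\hat S'$ are the $G_l$ shifted up one level, so $\chi_k(p^2\hat S')=\sum_l(-m^2)^{l+1}G_l=-m^2\sum_l(-m^2)^lG_l=-m^2\chi_{k-2}(\hat S')$, which is exactly (a). Thus a conceptually transparent map with the required properties exists, and by uniqueness it equals $\chi$.

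It then remains to verify that this harmonic description coincides with the explicit coefficients $\alpha_j^k$; this is the main obstacle and the only genuinely combinatorial step. Concretely, I would expand the harmonic projectors in iterated traces $\Lambda^j$ using the commutator above, so that extracting each $H_l$ and substituting $p^2\mapsto-m^2$ produces the products $\prod_q(n+2k-2p-2q-4)^{-1}$ (the eigenvalues of $\Lambda M_{p^2}$ on the successive isotypic components) and — because the $m^2$-shift mixes the degree-$k$ and degree-$(k-2)$ strata — the binomial sum $\sum_p\binom{j-1}{p}m^{2p}\dal^{j-1-p}$, which is the expansion generated by the surviving powers of $(\dal+m^2)$. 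Equivalently, and perhaps more economically for a direct write-up, one substitutes $\chi_k=\sum_j\tfrac{\alpha_j^k}{j!}\Lambda^j$ into the displayed identity, commutes all copies of $M_{p^2}$ to the left via $[\Lambda,M_{p^2}]=n+2E$, and matches the coefficient of each $\Lambda^j\hat S'$; this yields a recursion relating $\alpha_j^k$ to $\alpha_{j-1}^{k-2}$, which a finite induction on $j$ shows is solved by the stated closed form. The delicate points will be bookkeeping the inhomogeneity introduced by $m^2$ and checking that the telescoping of the $(n+2k-2p-2q-4)^{-1}$ products balances against the binomial weights.
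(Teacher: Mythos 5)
Your proposal cannot be checked against a proof in the paper, because the paper contains none: Theorem \ref{thm:brouduetsch} is quoted ``for the sake of completeness'' from \cite{onshell}, where it was obtained by explicitly solving the recurrence implicit in Theorem \ref{thm:chion}; the paper's own contribution is the alternative construction of $\chi$ in Section \ref{sec:ourmap} via $\Theta(S)=p_s\big((Q|_s)^*Q\big)Sv$ and Lemma \ref{lemma:chi1}, i.e.\ via spectral projections of the finite-dimensional operators $(Q|_s)(Q|_s)^*$. Your route is a genuine third proof and it is sound: the uniqueness clause of Theorem \ref{thm:chion} does reduce everything to checking (a)--(d), and your characterization of $\chi$ as ``evaluate the Fischer decomposition $\hat T=\sum_l (p^2)^l H_l$ at $p^2=-m^2$'' verifiably satisfies all four properties (your computation $\chi_k(p^2\hat S')=-m^2\chi_{k-2}(\hat S')$ is exactly right, and divisibility of $(-m^2)^l-(p^2)^l$ by $p^2+m^2$ gives (d)); I confirmed it reproduces the closed form in low cases, e.g.\ $\chi(\partial_\mu\partial_\nu)=\partial_\mu\partial_\nu-\tfrac{g_{\mu\nu}}{n}(\dal+m^2)$, $\chi(\partial_\mu\dal)=-m^2\partial_\mu$ and $\chi(\dal^2)=m^4$. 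This buys a conceptual interpretation that the combinatorial solution of \cite{onshell} lacks, and it resonates with the paper's remark that the coefficients are eigenvalues of finite-dimensional operators: the constants $n+2k-2p-2q-4$ are precisely the eigenvalues $l(n+2d+2l-2)$ of $\Lambda M_{p^2}$ on the isotypic strata (their nonvanishing for $n\geq 1$ is also what legitimizes the Fischer decomposition in Lorentzian signature, a point you should state rather than assume). Three caveats for the write-up. First, the statement as printed has $a_0^k=0$; this must be read as $\alpha_0^k=1$ (a typo propagated into the quotation), which your phrase ``the $j=0$ summand enters with unit coefficient'' silently corrects --- and which is in fact forced by property (d), since $\ord\,\chi_1(S)\leq k-2$ makes $\chi(1)=1$ unavoidable. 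Second, your final recursion is slightly misstated: from $[\Lambda,M_{p^2}]=n+2E$ one gets $\Lambda^jM_{p^2}=M_{p^2}\Lambda^j+j(n+2k-2j-2)\Lambda^{j-1}$ on symbols of degree $k-2$, whence $\alpha_{j+1}^k\,(n+2k-2j-4)=-\dal\,\alpha_j^k-m^2\alpha_j^{k-2}$, involving \emph{both} $\alpha_j^k$ and $\alpha_j^{k-2}$, so the induction runs jointly on $(k,j)$; the stated closed form does satisfy this (I checked $j=0,1$ directly, and the general step follows from Pascal's rule absorbing the shift $k\mapsto k-2$ into $p\mapsto p+2$ in the denominators). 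Third, matching coefficients of $\Lambda^j\hat S'$ is only \emph{sufficient} as you set it up, since the operators $M_{p^2}^a\Lambda^b$ are not independent on a fixed degree; to make the step clean, test the identity on elements $(p^2)^lH$ with $H$ harmonic, where every $\Lambda^j$ acts by an explicit nonzero scalar. With these repairs your argument is complete and, in my view, tidier than the combinatorial derivation in \cite{onshell}.
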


\subsection{Construction of the map in the present setting}\label{sec:ourmap}

In our setting, the problem can be formulated as follows. We are given a fundamental solution $\Delta_{\rm F}\in\cD'(\R^n)$ of $Q=\dal+m^2$. Given a partial differential operator $S$ with constant coefficients, we want to replace $S\Delta_{\rm F}$ with a distribution which agrees with the latter on $\dot\R^n$ and is, moreover, $0$ when $S=(\dal+m^2)$. More precisely, the question is to associate to each differential operator $S$ with constant coefficients a distribution $\Theta(S)\in\cD'(\R^n)$ such that:
\begin{enumerate}
\item $\Theta(S)=S\Delta_{\rm F}$ on $\dot\R^n$;
\item $\deg \Theta(S)\leq \deg S\Delta_{\rm F}$;
\item the assignment $S\mapsto\Theta(S)$ is linear;
\item $\Theta(S)$ is Lorentz-covariant; 
\item $\Theta(S(\dal+m^2))=0$ for all $S$.
\end{enumerate}
\medskip Provided that $\Theta(S)$ satisfies the above properties, it can be used to define directly the on-shell time-ordered product of order two. It will, however, be more convenient to relate $\Theta(S)$ to the map considered in \cite{onshell} which gives the connection between between the on-shell and off-shell time-ordered products of order $k$, cf.  Theorem \ref{thm:chion} and \ref{thm:brouduetsch}.

\begin{theorem}\label{thm:onshell}Let $Q:\cD'(\R^n)\to\cD'(\R^n)$ be a nonzero differential operator of order $q$ with constant coefficients and assume $v\in\cD'(\R^n)$ satisfies $\deg v=-q$ and $Q v=c\delta$ for some $c\in\C$. Assign to a differential operator $S$ of order $s$ with constant coefficients the distribution
\[
\Theta(S):= p_s \big( (Q|_s)^* Q \big) S v\in\cD'(\R^n),
\]
Then the mapping $S\mapsto \Theta(S)$ is linear. Moreover, $\Theta(S Q)=0$ for each differential operator $S$ with constant coefficients. 
\end{theorem}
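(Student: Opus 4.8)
Throughout, the one structural fact I rely on is that the restriction $Q|_r$ is \emph{injective} for every $r$. This is transparent after identifying $\delta^{(\alpha)}$ with the monomial $x^{\alpha}$: writing $Q=\sum_{\gamma}q_{\gamma}\partial^{\gamma}$ one has $Q\delta^{(\alpha)}=\sum_{\gamma}q_{\gamma}\delta^{(\alpha+\gamma)}$, so under this identification $Q$ acts on $\Do$ as multiplication by the symbol $\hat{Q}(x)=\sum_{\gamma}q_{\gamma}x^{\gamma}$, and multiplication by the nonzero polynomial $\hat Q$ is injective. Hence $\ker(Q|_r)=\{0\}$, the positive operator $A_r:=(Q|_r)^{*}(Q|_r)$ is invertible on $\Dor$, and --- by the spectral argument used in the proof of Theorem~\ref{thm:core} --- $p_r\big((Q|_r)^{*}(Q|_r)\big)$ is the orthogonal projection onto $\ker(Q|_r)=\{0\}$, i.e. $p_r(A_r)=0$.

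I first record how $p_s\big((Q|_s)^{*}Q\big)$ acts on a distribution supported at the origin: if $w\in\Do_{\leq s}$ then $Qw=Q|_s w\in\Do_{\leq s+q}$ lies in the domain of $(Q|_s)^{*}$, so $(Q|_s)^{*}Qw=A_s w$; iterating gives $\big((Q|_s)^{*}Q\big)^{k}w=A_s^{k}w$, whence $p_s\big((Q|_s)^{*}Q\big)w=p_s(A_s)w=0$. This settles $\Theta(SQ)=0$ at once. Set $\sigma=\ord(SQ)=\ord S+q$. Since $Q$ and $S$ commute, $(SQ)v=S(Qv)=cS\delta$, which is supported at the origin and lies in $\Do_{\leq\ord S}\subseteq\Do_{\leq\sigma}$. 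By the previous remark $p_\sigma\big((Q|_\sigma)^{*}Q\big)$ annihilates all of $\Do_{\leq\sigma}$, so $\Theta(SQ)=c\,p_\sigma(A_\sigma)(S\delta)=0$.

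For linearity I would first make the correction term explicit. Since $Q(Sv)=cS\delta\in\Do_{\leq s}$ with $s=\ord S$, the same iteration shows $\big((Q|_s)^{*}Q\big)^{k}Sv=A_s^{k-1}(Q|_s)^{*}(cS\delta)$ for $k\geq1$; using $p_s(A_s)=0$ to write $p_s(z)=1+z\,h_s(z)$ with $h_s(A_s)=-A_s^{-1}$, one obtains the closed form
\[
\Theta(S)=Sv-c\,A_s^{-1}(Q|_s)^{*}(S\delta)=Sv-c\,(Q|_s)^{+}(S\delta),\qquad (Q|_s)^{+}:=A_s^{-1}(Q|_s)^{*}.
\]
As $S\mapsto Sv$ and $S\mapsto S\delta$ are linear and $(Q|_s)^{+}$ depends only on $s$, additivity and homogeneity are immediate among operators of a fixed order, so the whole statement reduces to the compatibility identity
\[
(Q|_{s_1})^{+}(S_2\delta)=(Q|_{s_2})^{+}(S_2\delta),\qquad s_2=\ord S_2\leq s_1 .
\]

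The step I expect to be the real obstacle is exactly this identity. It would follow if $(Q|_{s_1})^{*}$ restricted to $\Do_{\leq s_2+q}$ coincided with $(Q|_{s_2})^{*}$, for then $A_{s_1}$ would also restrict to $A_{s_2}$ on $\Do_{\leq s_2}$ and the two left inverses would agree on $S_2\delta$. This restriction property is precisely the second assertion of Lemma~\ref{lem:adjoint}, whose hypothesis is that $Q^{\rm t}$ map polynomials of order $\leq r+q$ into functions whose derivatives at the origin vanish above order $r$. Verifying this hypothesis for the operator at hand is the delicate ingredient on which linearity rests; granting it, the reduction above, together with $\Theta(SQ)=0$, completes the proof.
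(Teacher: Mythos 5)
Your treatment of the second assertion is complete and, if anything, cleaner than the paper's: you note that $p_{\sigma}\big((Q|_{\sigma})^{*}Q\big)$ annihilates all of $\Do_{\leq\sigma}$, because $Q|_{\sigma}$ is injective (your symbol-multiplication argument is exactly what the paper means by ``$Q$ has no nonzero solutions in $\Do$''), so $A_{\sigma}$ is invertible and $p_{\sigma}(A_{\sigma})=0$ --- here you correctly use that the nonzero spectrum of $(Q|_{\sigma})(Q|_{\sigma})^{*}$ equals the (entirely nonzero) spectrum of $A_{\sigma}$. The paper instead applies $Q$, observes that $QS\delta\in\Ran(Q|_{s+q})$ is killed by the projection $p_{s+q}\big(Q(Q|_{s+q})^{*}\big)$, and then cancels $Q$ by the same injectivity; your route avoids that detour. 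Your closed form $\Theta(S)=Sv-c\,A_s^{-1}(Q|_s)^{*}S\delta$ is correct and genuinely clarifying, and the reduction of linearity to the compatibility $(Q|_{s_1})^{+}(S_2\delta)=(Q|_{s_2})^{+}(S_2\delta)$ is sound; it is an equivalent reformulation of the paper's reduction, which asks instead that the orthogonal projections $p_{s'}\big(Q(Q|_{s'})^{*}\big)$ and $p_{s}\big(Q(Q|_{s})^{*}\big)$ agree on $\Do_{\leq s}$.

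The gap is that you never discharge this compatibility identity, and your instinct that it is ``the real obstacle'' is more right than you may realize: the hypothesis of the second part of Lemma \ref{lem:adjoint} cannot be verified in the stated generality. For constant-coefficient $Q=\sum_{|\gamma|\leq q}q_{\gamma}\partial^{\gamma}$, the operator $Q^{\rm t}$ lowers polynomial degree by $\min\{|\gamma|:q_{\gamma}\neq0\}$, so the hypothesis holds if and only if $Q$ is \emph{homogeneous} of order $q$; it fails as soon as lower-order terms are present, and then the identity itself fails. Concretely, take $n=1$, $Q=\partial^2+1$ (the one-dimensional analogue of $\Box+m^2$) and $v=\theta(x)\sin x$, which satisfies $Qv=\delta$ and $\deg v=-2$. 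One finds $A_0\delta=3\delta$, so $(Q|_0)^{+}\delta=\tfrac13\delta$, whereas on ${\rm span}\{\delta,\delta''\}$ one has $A_2\delta=3\delta+\delta''$ and $A_2\delta''=2\delta+13\delta''$, giving $(Q|_2)^{+}\delta=\tfrac{13}{37}\delta-\tfrac{1}{37}\delta''$; consequently $\Theta(1)+\Theta(\partial^2)=\tfrac{2}{111}\delta-\tfrac{1}{37}\delta''\neq 0=\Theta(\partial^2+1)$, so linearity genuinely fails for this $Q$. You should know, however, that the paper's own proof invokes Lemma \ref{lem:adjoint} at precisely this point without checking its hypothesis, so your proposal faithfully reproduces the paper's argument including its soft spot: both proofs are complete only for homogeneous $Q$ (e.g.\ the massless wave operator, consistent with the paper deriving the explicit formula $\Theta(S)=p_s(x_{\mu}x^{\mu}\Box)S\Delta_{\rm F}$ only for $m=0$), while the massive case would require a modified definition or a different argument rather than a verification of the lemma's hypothesis.
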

\proof For linearity, we have to prove that if $s'\geq s$ then $ p_{s'} \big( (Q|_{s'})^* Q \big) S v$ equals  $p_s \big( (Q|_s)^* Q \big) S v$. 
Because $Q$ has no nonzero solutions in $\Do$, this is equivalent to
\[
Q\left[p_{s'} \big( (Q|_{s'})^* Q \big) -p_s \big( (Q|_{s})^* Q \big) \right] S v=0.
\]
One has
\begin{eqnarray*}
Q\left[p_{s'} \big( (Q|_{s'})^* Q \big) -p_s \big( (Q|_{s})^* Q \big) \right]S v\\= \left[p_{s'} \big( Q (Q|_{s'})^*  \big) -p_s \big(Q (Q|_{s})^*  \big) \right] S Q v \\= c \left[p_{s'} \big( Q (Q|_{s'})^*  \big) -p_s \big( Q (Q|_{s})^*  \big) \right] S \delta.
\end{eqnarray*}
To prove that the last expression vanishes it suffices to show $p_{s'} \big( Q (Q|_{s'})^*  \big)$ and $p_s \big( Q (Q|_{s})^*  \big)$ coincide on $\Do_{\leq s}$. Indeed, as operators from $\Do_{\leq s+q}$ to $\Do_{\leq s}$, these are orthogonal projections to the same space, as follows from Lemma \ref{lem:adjoint}.

To prove $\Theta(S Q)=0$, let us remark that this is equivalent to $Q\Theta(S Q)=0$, as $Q$ has no nonzero solutions in $\Do$. We have 
\[
Q\Theta(S Q)= Q p_{s+q} \big( (Q|_{s+q})^* Q \big) S Qv = p_{s+q} \big( Q (Q|_{s+q})^*  \big) Q S\delta.
\]
Clearly, $Q S\delta\in\Ran(Q|_{s+q})=(\ker(Q|_{s+q})^*)^{\bot}$, so it is projected out by $p_{s+q} \big( Q (Q|_{s+q})^*  \big)$.
\qed\\

The next lemma gives the connection between the map $S\mapsto \Theta(S)$ and a generalized version of the map $\chi$ from Theorem \ref{thm:chion} and \ref{thm:brouduetsch}.

\begin{lemma}\label{lemma:chi1}Let $Q$ and $v$ be as in Theorem \ref{thm:onshell}. There is a linear map $S\mapsto\chi(S)$ on the space of differential operators with constant coefficients, s.t. $\ord\,\chi(S)\leq \ord\,S $ and $\Theta(S)=\chi(S)v$ for all $S$.
\end{lemma}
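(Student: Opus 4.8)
The plan is to peel off the leading term of the projection and then reabsorb the remaining distribution supported at the origin into a differential operator acting on $v$. Concretely, since $p_s(0)=1$, write $p_s(z)=1+zg(z)$. Because $Q(Sv)=S(Qv)=cS\delta\in\Do$ (constant coefficients commute) and every subsequent application of $(Q|_s)^*$ and $Q$ maps $\Do$ into $\Do$, the tail $p_s\big((Q|_s)^*Q\big)Sv-Sv$ is a distribution $w(S)\in\Do$ supported at the origin; tracking the operators gives the factorisation $w(S)=(Q|_s)^*\,g\big((Q|_s)(Q|_s)^*\big)\,QSv$. Thus $\Theta(S)=Sv+w(S)$, and $w$ is linear in $S$ since $\Theta$ is linear (Theorem \ref{thm:onshell}) and $S\mapsto Sv$ is linear.

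Next I would use $Qv=c\delta$ to realise $w(S)$ as a differential operator applied to $v$. If $c\neq 0$ then $\delta^{(\alpha)}=c^{-1}\partial^\alpha Qv=c^{-1}(\partial^\alpha Q)v$, so writing $w(S)=\sum_\alpha w_\alpha(S)\delta^{(\alpha)}$ the operator $\chi_1(S):=c^{-1}\sum_\alpha w_\alpha(S)\partial^\alpha$ satisfies $\chi_1(S)Qv=w(S)$, and $\chi(S):=S+\chi_1(S)Q$ gives $\chi(S)v=Sv+w(S)=\Theta(S)$. Linearity of $\chi$ follows from that of $w$; indeed $\chi$ is uniquely determined, since $D\mapsto Dv$ is injective on constant-coefficient operators (if $\widehat D\,\widehat v=0$, multiplying by $\widehat Q$ and using $\widehat Q\widehat v=c$ forces $\widehat D=0$), so one may also read linearity straight off $\chi(S)v=\Theta(S)$ together with linearity of $\Theta$. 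The degenerate case $c=0$ is trivial: then $QSv=0$, hence $w(S)=0$ and $\chi(S)=S$.

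The real work — and the step I expect to be the main obstacle — is the order bound $\ord\chi(S)\le\ord S$. By construction $\ord\chi(S)\le\max\{\ord S,\ \ord\chi_1(S)+q\}$, so it suffices to prove $\ord\chi_1(S)\le s-q$, equivalently that the origin-supported remainder lies in $\Do_{\leq s-q}$; this is exactly the degree estimate $\deg\Theta(S)\le\deg(Sv)$, where $\deg(Sv)\le\deg v+s=s-q$ by Lemma \ref{lemmasd}. I would try to extract this from the degree-preserving character of the on-shell projection $p_s\big((Q|_s)^*Q\big)$ (cf. the analysis behind Theorem \ref{thm:core} and Lemma \ref{lem:Wpreserves}), feeding it through the factorisation above: the order of $w(S)$ is controlled by how much $(Q|_s)^*=\cT_s\,\overline{Q}^{\rm t}\,\cS_{s+q}$ lowers the polynomial degree, which by Lemma \ref{lem:adjoint} is governed by the action of $\overline{Q}^{\rm t}$ on polynomials, i.e. essentially by the top-order (homogeneous) part of $Q$. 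This is where the argument is genuinely delicate — lower-order terms of $Q$ threaten to leave $w(S)$ in $\Do_{\leq s}$ rather than $\Do_{\leq s-q}$ — and I would expect to need the full force of Lemma \ref{lem:adjoint}, in particular its second clause relating $Q^{\rm t}$ to polynomials of bounded degree, to drive $\ord\chi_1(S)$ down to $s-q$ and thereby close the estimate.
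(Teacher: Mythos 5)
Your construction is the paper's construction: the paper likewise splits off the remainder, asserting $\Theta(S)-Sv\in\Do_{\leq s-q}$, writes it as $\chi_1(S)\delta$ with $\chi_1(S)$ the constant-coefficient operator read off from the coefficients --- explicitly $\chi_1(S)=\cS_{s-q}\left[\Theta(S)-Sv\right](-\partial)$, which makes linearity of $S\mapsto\chi_1(S)$ manifest --- and sets $\chi(S):=\chi_1(S)Q+S$. Your version with the explicit $c^{-1}$ and the degenerate case $c=0$ is in fact slightly more careful than the paper's, which silently absorbs the constant in $Qv=c\delta$. One small quibble: your injectivity argument for $D\mapsto Dv$ runs through Fourier transforms, but $v\in\cD'(\R^n)$ is not assumed tempered; the position-space version is immediate and needs nothing extra: $Dv=0$ implies $c\,D\delta=DQv=QDv=0$, hence $D=0$ when $c\neq0$. (You do not even need injectivity, since linearity already follows from your factorisation of $w(S)$, or from the paper's closed formula for $\chi_1$.)

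The step you flagged as the main obstacle --- the bound $w(S)=\Theta(S)-Sv\in\Do_{\leq s-q}$, equivalently $\ord\chi_1(S)\leq s-q$ and hence $\ord\chi(S)\leq s$ --- is indeed the only nontrivial point, and you do not prove it; note that the paper does not either: its proof simply opens with ``Since $\Theta(S)-Sv\in\Do_{\leq s-q}$'' without justification. When $Q$ is \emph{homogeneous} of degree $q$ (e.g.\ $Q=\Box$) the bound closes in two lines along the route you sketched: $\overline{Q}^{\rm t}$ then lowers polynomial degree by exactly $q$, so by Lemma \ref{lem:adjoint} the operator $(Q|_s)^*=\cT_s\overline{Q}^{\rm t}\cS_{s+q}$ lowers the order of elements of $\Do$ by $q$; consequently $(Q|_s)(Q|_s)^*$ preserves $\Do_{\leq s}$, and since $QSv=cS\delta\in\Do_{\leq s}$, your factorisation $w(S)=(Q|_s)^*\,g\big((Q|_s)(Q|_s)^*\big)QSv$ lands in $\Do_{\leq s-q}$. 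But your suspicion about lower-order terms is not a removable technicality: for inhomogeneous $Q$ the bound genuinely fails with the stated definition of $\Theta$. Take $Q=\Box+m^2$ with $m\neq0$, $v$ with $Qv=c\delta$, $\deg v=-2$, and $S=\mathrm{id}$ (so $s=0$, $s-q=-2$, $\Do_{\leq -2}=\{0\}$): one computes $(Q|_0)^*\delta=m^2\delta$ and $(Q|_0)^*(Q|_0)\delta=(2n+m^4)\delta$, so $p_0(z)=1-z/(2n+m^4)$ and $\Theta(\mathrm{id})=v-c\,m^2(2n+m^4)^{-1}\delta$; the remainder is a nonzero multiple of $\delta$, and no operator of order $0$ can satisfy $\chi(\mathrm{id})v=\Theta(\mathrm{id})$ (applying $Q$ forces $\chi(\mathrm{id})=1-m^2(2n+m^4)^{-1}Q$, of order $2$). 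So the ``full force of Lemma \ref{lem:adjoint}'' cannot rescue the estimate: its second clause requires $Q^{\rm t}$ to map polynomials of degree $\leq r+q$ into functions vanishing at $0$ to order $>r$, which for constant coefficients holds precisely when $Q$ is homogeneous. In short, your proposal is complete and matches the paper in the homogeneous (e.g.\ massless) case, while the step you honestly left open is a real gap in the massive case --- one that the paper's own one-clause proof shares.
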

\proof Since $\Theta(S)-Sv\in\Do_{\leq s-q}$, it can be written as $\chi_1(S)\delta$ for some differential operator $\chi_1(S)$ with constant coefficients of order $s-q$. More precisely, $\chi_1(S)=\cS_{s-q}\left[ \Theta(S)-Sv\right](-\partial)$, hence the assignment $S\mapsto\chi_1(S)$ is linear. The map $S\mapsto\chi(S):=\chi_1(S)Q+S$ satisfies the required properties.
\qed\\

Setting $Q=\Box+m^2$ and $v=\Delta_{\rm F}$, it follows automatically that $\chi$ satisfies the conditions given in Theorem \ref{thm:chion} except Lorentz covariance. But it easy to see that $\chi$ is defined purely using Lorentz covariant quantities.

\begin{corollary}Let $Q=\Box+m^2$, $v=\Delta_{\rm F}$, let $\Theta(S)$ be as in Theorem \ref{thm:onshell} and $\chi$ as in Lemma \ref{lemma:chi1}. Then $\chi$ satisfies the conditions listed in Theorem \ref{thm:chion}.
\end{corollary}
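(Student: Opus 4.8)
The plan is to obtain conditions (a), (c), (d) of Theorem~\ref{thm:chion} together with linearity as immediate consequences of Theorem~\ref{thm:onshell} and Lemma~\ref{lemma:chi1}, and to concentrate the real work on the Lorentz covariance (b). Linearity of $S\mapsto\chi(S)$ and the order bound $\ord\chi(S)\le\ord S$ are already part of Lemma~\ref{lemma:chi1}, which also furnishes a linear $\chi_1$ with $\chi(S)=\chi_1(S)Q+S$; since $Q=\Box+m^2$, this relation is precisely (d) and the order bound is (c). For (a) I would use $\Theta(S)=\chi(S)v$ (Lemma~\ref{lemma:chi1}) together with $\Theta(SQ)=0$ (Theorem~\ref{thm:onshell}): writing $\chi(SQ)=[\chi_1(SQ)+S]Q$ and applying it to $v$ yields $c\,[\chi_1(SQ)+S]\delta=0$ because $Qv=c\delta$; as $c\neq0$ and a nonzero constant-coefficient operator cannot annihilate $\delta$, we get $\chi_1(SQ)+S=0$, hence $\chi(SQ)=0$.

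For (b) I would phrase covariance as the intertwining identity $\chi(U_\Lambda S U_\Lambda^{-1})=U_\Lambda\chi(S)U_\Lambda^{-1}$ for every Lorentz transformation $\Lambda$, with $U_\Lambda$ the induced action on distributions and on constant-coefficient operators. Because $Q=\Box+m^2$ and $\delta$ are Lorentz invariant and $\chi(S)=\chi_1(S)Q+S$ with $\chi_1(S)\delta=\Theta(S)-Sv$, it suffices to prove covariance of $\Theta$, namely $U_\Lambda\Theta(S)=\Theta(U_\Lambda S U_\Lambda^{-1})$. Applying $U_\Lambda$ to the defining formula $\Theta(S)=p_s\big((Q|_s)^*Q\big)Sv$ and using the Lorentz invariance of $Q$ and $v$ (so that $U_\Lambda Q U_\Lambda^{-1}=Q$, $U_\Lambda v=v$, and $U_\Lambda$ preserves each $\Do_{\le s}$), every factor commutes through except the adjoint, and the whole statement collapses to the single identity $U_\Lambda(Q|_s)^*U_\Lambda^{-1}=(Q|_s)^*$.

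I expect this last identity to be the main obstacle. Since $U_\Lambda(Q|_s)U_\Lambda^{-1}=(Q|_s)$ already holds (invariance of $Q$ and of $\Do_{\le s}$), the adjoint identity is equivalent to $U_\Lambda$ being unitary for the scalar product $(\cdot\,|\,\cdot)_s$ of Section~\ref{sec:spaces}, i.e.\ to Lorentz invariance of the orthogonal splitting $\Do_{\le s+q}=\Ran(Q|_s)\oplus(\Ran(Q|_s))^\perp$ that underlies the projection $p_s\big((Q|_s)(Q|_s)^*\big)$ of Theorem~\ref{thm:core}. I would try to establish this by checking that the Lorentz generators act skew-adjointly on $\Do_{\le s}$ with respect to $(\cdot\,|\,\cdot)_s$: the rotations are unproblematic, whereas for the boosts one has to be sure the pairing is formed with the Minkowski metric already entering $Q=\Box+m^2$, so that $x^\mu$ and $\partial^\mu=g^{\mu\nu}\partial_\nu$ are mutually adjoint — this is the delicate point, since a pairing built naively from the monomials $x^\alpha$ is only invariant under the rotation subgroup. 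Granting unitarity, the intertwining of $\Theta$, and hence of $\chi$, is immediate, and restricting to $S=\partial_{\mu_1}\cdots\partial_{\mu_k}$ gives exactly the transformation law (b); note that one cannot shortcut this step through the uniqueness in Theorem~\ref{thm:chion}, because conditions (a), (c), (d) alone do not pin down $\chi$ on operators that are not divisible by $Q$, so (b) genuinely requires the direct covariance argument above, after which $\chi$ may be identified with the manifestly covariant operator of Theorem~\ref{thm:brouduetsch}.
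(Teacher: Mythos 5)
Your treatment of linearity and of conditions a), c), d) is correct and is exactly what the paper means when it says that, for $Q=\Box+m^2$, $v=\Delta_{\rm F}$, these follow ``automatically'' from Theorem~\ref{thm:onshell} and Lemma~\ref{lemma:chi1}: in particular your use of $Qv=c\delta$ with $c=-\i\neq 0$ and of the linear independence of the $\delta^{(\alpha)}$ to upgrade $\Theta(SQ)=0$ to the operator identity $\chi(SQ)=0$ is sound, and your remark that b) cannot be recovered from the uniqueness clause of Theorem~\ref{thm:chion} is also correct (conditions a), c), d) leave, e.g., the traceless part of $\chi_1(\partial_\mu\partial_\nu)$ undetermined). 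The genuine gap is in your argument for b). You correctly isolate the crux, the identity $U_\Lambda (Q|_s)^* U_\Lambda^{-1}=(Q|_s)^*$, but you then assert that it is \emph{equivalent} to unitarity of $U_\Lambda$ for the scalar product $(\cdot\,|\,\cdot)_s$ and propose to verify unitarity via skew-adjointness of the Lorentz generators. That route fails: for the scalar product of subsection~\ref{sec:spaces}, $(v|w)_s=\sum_{|\alpha|\leq s}\alpha!\,\overline{v}_\alpha w_\alpha$, one computes $(\partial_a)^*=-x_a$ on $\Do$ with plain Kronecker index matching and no metric sign, so the rotation generators $x_a\partial_b-x_b\partial_a$ are skew-adjoint but the boost generators $x_0\partial_1+x_1\partial_0$ are \emph{self-adjoint}; $U_\Lambda$ is genuinely non-unitary on boosts. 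This is precisely the ``delicate point'' you flag and then leave unresolved (``granting unitarity\dots''), so your proof of b) is incomplete at its only nontrivial step. Moreover, the claimed equivalence is false in the direction you need: unitarity is sufficient for covariance of the adjoint, not necessary, and here the adjoint is covariant although $U_\Lambda$ is not unitary.

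The identity does hold, and the paper's intended argument is the explicit computation it records immediately after the corollary: by Lemma~\ref{lem:adjoint}, $\left[(\Box+m^2)|_s\right]^*=P_s\left[(x_\mu x^\mu+m^2)|_{s+2}\right]$, where $P_s$ is the orthogonal projection onto $\cD'(\{0\})_{\leq s}$. Multiplication by the Minkowski square $x_\mu x^\mu+m^2$ manifestly commutes with the Lorentz action, and so does $P_s$: the subspaces ${\rm span}\{\delta^{(\alpha)}:|\alpha|=k\}$ are mutually orthogonal for $(\cdot\,|\,\cdot)_{s+2}$ and each is preserved by $U_\Lambda$ (the action preserves homogeneity degree), so $P_s$ is the grading projection and intertwines with $U_\Lambda$. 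Hence $(Q|_s)^*$ is covariant, the nonzero spectrum of $(Q|_s)(Q|_s)^*$ defining $p_s$ is Lorentz-invariant data, and your intertwining computation for $\Theta$, and thus for $\chi$, goes through verbatim. Alternatively, one can avoid the explicit formula: with respect to $(\cdot\,|\,\cdot)_s$ one has $U_\Lambda^*=U_{\Lambda^{\rm T}}$, and $\Lambda^{\rm T}=g\Lambda^{-1}g$ is again a Lorentz transformation, so taking adjoints in $U_\Lambda (Q|_s) U_\Lambda^{-1}=Q|_s$ and letting $\Lambda$ range over the group yields covariance of $(Q|_s)^*$ under the full group. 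Either repair closes the gap; without one of them, condition b) remains unproven in your proposal.
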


The adjoint of $(\Box+m^2)|_r$ is easily computed from Lemma \ref{lem:adjoint}. One has $\left[(\Box+m^2)|_r\right]^*=P_r\left[(x_{\mu}x^{\mu}+m^2)|_{r+2}\right]$, where $P_r$ is the orthogonal projection to $\Dor$. In particular if $m=0$, since $x_{\mu}x^{\mu}$ maps $\Do_{\leq r+2}$ to $\Dor$, this simplifies to $(\Box|_r)^*=x_{\mu}x^{\mu}|_{r+2}$. Consequently the distribution $\Theta(S)$ from Theorem \ref{thm:onshell} equals
\[
\Theta(S)=p_s(x_{\mu}x^{\mu} \Box)S\Delta_{\rm F}.
\]
(Without the obligation to restrict $x_{\mu}x^{\mu}$ to a subspace of $\Do$).

\section{Outlook}\label{sec:outlook}

We have set up a unified formalism to treat the different renormalisation conditions which appear in quantum field theory. 

From the mathematical point of view, our analysis confirms once more that Steinman's scaling degree is a very natural notion in the problem of extension of distributions. We expect to find interesting applications in the various problems in singular analysis of partial differential equations, where for technical reasons one has to work with distributions on $\R^n\setminus\{0\}$ rather than $\R^n$. To this end, one would first need to make the connection between distributions of specific scaling degree and  spaces of distributions used in microlocal singular analysis.

On a separate note, it is natural to ask whether a generalization of our results to operator-valued distributions is possible, at least in the case of bounded operators. Such kind of result would be helpful in establishing a position-space approach to renormalisation which does not refer to Wick's theorem to reduce the problem to ordinary distributions.

{ \small

}

\end{document}